\newtheorem{theorem}{Theorem}
\newtheorem{lemma}[theorem]{Lemma}
\newtheorem{claim}[theorem]{Claim}
\newtheorem{prop}[theorem]{Proposition}
\newtheorem{defn}{Definition}
\theoremstyle{remark}
\newtheorem{remark}{Remark}[section]
\DeclareMathOperator{\wt}{wt}
\newcommand{\llb}{\llbracket}
\newcommand{\rrb}{\rrbracket}
\DeclareMathOperator{\med}{med}
\newcommand{\cev}[1]{\overset{{}_{\shortleftarrow}}{#1}}
\begin{document}
	
	\title{
		Reconstruction of multiple strings of constant weight from prefix-suffix compositions\\
		\thanks{
			\hspace{-1.2em}\rule{1.5in}{.4pt}
			
			This paper was presented in part at the 2024 IEEE International Symposium on Information Theory (ISIT 2024) \cite{yang2024reconstruction}.
			
			Y. Yang was with the School of Data Science, The Chinese University of Hong Kong, Shenzhen, China. (Email: yaoyuyang@link.cuhk.edu.cn)			
			
			Z. Chen is with School of Science and Engineering, Future Networks of Intelligence Institute, The Chinese University of Hong Kong, Shenzhen, China. (Email: chenztan@cuhk.edu.cn) 
			This research was supported in part by the Basic Research Project of Hetao Shenzhen-Hong Kong Science and Technology Cooperation Zone under Project HZQB-KCZYZ-2021067, 
			the Guangdong Provincial Key Laboratory of Future Network of Intelligence under Project 2022B1212010001, 
			the National Natural Science Foundation of China under grants 62201487,
			and the Shenzhen Science and Technology Stable Support Program.
		}
	}
	
	\author{%
		\IEEEauthorblockN{Yaoyu Yang} \hspace*{1in}
		\and
		\IEEEauthorblockN{Zitan Chen}
	}
	
	\maketitle
	
	\begin{abstract}
		Motivated by studies of data retrieval in polymer-based storage systems, we consider the problem of reconstructing a multiset of binary strings that have the same length and the same weight from the compositions of their prefixes and suffixes of every possible length. We provide necessary and sufficient conditions for which unique reconstruction up to reversal of the strings is possible. Additionally, we present two algorithms for reconstructing strings from the compositions of prefixes and suffixes of constant-length constant-weight strings.
	\end{abstract}
	
	
	\section{Introduction}
	The growing demand for archival data storage calls for innovative solutions to store information beyond traditional methods that rely on magnetic tapes or hard disk drives. Recent advancement in macromolecule synthesis and sequencing suggests that polymers such as DNA are promising media for future archival data storage, largely attributed to their high storage density and durability. Data retrieval in polymer-based storage systems depends on macromolecule sequencing technologies \cite{al2017mass,launay2021precise} to read out the information stored in polymers. However, common sequencing technologies often only read random fragments of polymers. Thus, the task of data retrieval in these systems has to be based on the information provided by the fragments. 
	
	Under proper assumptions, one may represent polymers by binary strings and turn the problem of data retrieval into the problem of string reconstruction from substring compositions, i.e., from the number of zeros and the number of ones in substrings of every possible length. In \cite{acharya2015string}, the authors characterized the length for which strings can be uniquely reconstructed from their substring compositions up to reversal.
	Extending the work of \cite{acharya2015string}, the authors of \cite{pattabiraman2023coding} and \cite{banerjee2023insertion} studied the problem of string reconstruction from \emph{erroneous} substring compositions. Specifically, \cite{pattabiraman2023coding} designed coding schemes capable of reconstructing strings in the presence of substitution errors and \cite{banerjee2023insertion} further proposed codes that can deal with insertion and deletion errors.
	Observing that it may not be realistic to assume the compositions of all substrings are available, the authors of \cite{gabrys2022reconstruction} initiated the study of string reconstruction based on the compositions of prefixes and suffixes of all possible lengths. In fact, \cite{gabrys2022reconstruction} considered the more general problem of reconstructing \emph{multiple} distinct strings of the same length simultaneously from the compositions of their prefixes and suffixes. The main result of \cite{gabrys2022reconstruction} reveals that for reconstruction of no more than $h$ distinct strings of the same length, there exists a code with rate approaching $1/h$ asymptotically. Following \cite{gabrys2022reconstruction}, the authors of \cite{ye2023reconstruction} studied in depth the problem of reconstructing a single string from the compositions of its prefixes and suffixes. In particular, their work completely characterized the strings that can be reconstructed from its prefixes and suffixes compositions uniquely up to reversal.
	
	Efficiency of data retrieval is a major concern for practical polymer-based storage systems, and thus low complexity algorithms for string reconstruction are of great interest. In the case of reconstruction from error-free substring compositions, \cite{acharya2015string} described a backtracking algorithm for binary strings of length $n$ with worst-case time complexity exponential in $\sqrt{n}$. Moreover, \cite{pattabiraman2023coding} and \cite{gupta2022new} constructed sets of binary strings that can be uniquely reconstructed with time complexity polynomial in $n$.
	In the case of reconstruction from error-free compositions of prefixes and suffixes, \cite{gabrys2022reconstruction} and \cite{ye2023reconstruction} presented sets of binary strings that can be efficiently reconstructed. For reconstruction in the presence of substitution composition errors, \cite{pattabiraman2023coding} showed that when the number of errors is a constant independent of $n$, there exist coding schemes with decoding complexity polynomial in $n$. 
	
	We note that string reconstruction is classic problem \cite{margaritis1995reconstructing,levenshtein2001efficient,batu2004reconstructing} and has been studied under various settings, including reconstruction from substrings \cite{margaritis1995reconstructing,marcovich2021reconstruction,yehezkeally2023generalized} and from subsequences \cite{levenshtein2001efficient,batu2004reconstructing,cheraghchi2020coded,krishnamurthy2021trace,ravi2022coded,levick2023fundamental} under either combinatorial or probabilistic assumptions.
	
	In this paper, we consider the problem of reconstructing $h$ strings that are not necessarily distinct but have the same length $n\geq 1$ and weight $\bar{w}\leq n$ from their error-free compositions of prefixes and suffixes of all possible lengths. The problem of reconstructing multiple strings from prefix-suffix compositions becomes more amenable to analysis if the strings are of constant weight. This is because nice properties due to symmetry can be tethered to the prefix-suffix compositions. Our first result is a characterization of the properties of constant-weight strings that enable unique reconstruction up to reversal. Additionally, we present two algorithms that reconstruct constant-weight strings from prefix-suffix compositions. Given prefix-suffix compositions as input, one of the algorithms can efficiently output a multiset of strings whose prefix-suffix compositions are the same as the input, and the other is able to output all multisets of strings up to reversal that are allowed by the input. Our analysis relies on the running weight information of the strings that can be extracted from the prefix-suffix compositions and the inherent symmetry of constant-weight strings and their reversals.
	
	The rest of this paper is organized as follows. In Section~\ref{sec:n}, we present the problem statement and introduce necessary notation and preliminaries that are helpful for later sections. In particular, we introduce the notion of cumulative weight functions that capture the running weight information of a multiset of strings, which is used throughout the paper. In Section~\ref{sec:conditions}, we derive the necessary and sufficient conditions for unique reconstruction. Section~\ref{sec:algorithm} is devoted to the reconstruction algorithms. We conclude this paper and mention a few open problems in Section~\ref{sec:conclusion}.

	\section{Notation and preliminaries}\label{sec:n}
	Let $n$ be a positive integer. Denote $[n]=\{1,2,\ldots,n\}$ and $\llb n\rrb=\{0,1,\ldots,n\}$. For integers $n_1, n_2$, define $[n_1,n_2]=\{n_1,n_1+1,\ldots,n_2\}$ if $n_1\leq n_2$ and $[n_1,n_2]=\emptyset$ if $n_1>n_2$.
	Let $\bm{t}=t_1t_2\ldots t_n \in \{0,1\}^n$ be a binary string of length $n$ and the \emph{reversal} of $\bm{t}$ is denoted by $\cev{\bm{t}}=t_nt_{n-1}\ldots t_1$. The \emph{weight} of $\bm{t}$ is the number of ones in $\bm{t}$, denoted by $\wt(\bm{t})$. The \emph{composition} of $\bm{t}$ is formed by the number of zeros and the number of ones in $\bm{t}$. More precisely, the ordered pair $(n-\wt(\bm{t}),\wt(\bm{t}))$ is called the composition of $\bm{t}$. 
	For $1\leq l\leq n$, the length-$l$ prefix and the length-$l$ suffix of $\bm{t}$ are denoted by $\bm{t}[l]$ and $\bm{t}[-l]$, respectively. 
	\begin{defn}
		The {set} of compositions of all prefixes of a string $\bm{t}\in\{0,1\}^n$ is called the prefix {compositions} of $\bm{t}$, denoted by $M_p(\bm{t})$. More precisely, 
		\begin{align*}
			M_p(\bm{t})=\{ (j-\wt(\bm{t}[j]),\wt(\bm{t}[j])) \mid 1\leq j\leq n\}.
		\end{align*}
		The suffix {compositions} of $\bm{t}$ are {similarly defined to be}
		{\begin{align*}
				M_s(\bm{t})=\{ (j-\wt(\bm{t}[-j]),\wt(\bm{t}[-j])) \mid 1\leq j\leq n\}.
			\end{align*} 
			The} prefix-suffix compositions of $\bm{t}$ are defined to be the multiset union of $M_p(\bm{t})$ and $M_s(\bm{t})$, denoted by $M(\bm{t})$.\footnote{We will use ``$\cup$'' to denote both the set union and the multiset union. The exact meaning of ``$\cup$'' will be clear from the context.}
		Let {$U$ be a multiset of binary strings}. Define {$M_p(U)$ to be the multiset union of $M_p(\bm{t}),\bm{t}\in U$, i.e.,
			\[
			M_p(U)=\bigcup_{\bm{t}\in U}M_p(\bm{t}).
			\]
			The multiset $M_s(U)$ is defined similarly. The multiset $M(U):=M_p(U)\cup M_s(U)$ is called the prefix-suffix compositions of $U$.}
	\end{defn}
	
	Note that different multisets may result in the same prefix-suffix compositions. For example, reversing a string in a multiset gives rise to a different multiset that has the same prefix-suffix compositions. 
	\begin{defn}\label{def:reversal} 
		Let $U$ and $V$ be two multiset of strings. The multiset $V$ is said to be a \emph{reversal} of $U$, denoted by $V\sim U$, if {$|V|=|U|$, and for any string $\bm{t}\in U$ the sum of the multiplicities of $\bm{t}$ and $\cev{\bm{t}}$ in $U$ equals the sum of the multiplicities of $\bm{t}$ and $\cev{\bm{t}}$ in $V$.}
		The collection of multisets that are reversals of $U$ forms an equivalent class, denoted by $[U]$, i.e., $[U]:=\{V\mid V\sim U\}$.
	\end{defn}
	
	Given the prefix-suffix compositions of a multiset $H$ of $h\geq 1$ binary strings of length $n$, we are interested in finding the $h$ strings in $H$. In the sequel, we denote $M:=M(H)$ for simplicity. Clearly, any reversal of $H$ has prefix-suffix compositions $M$. However, there may exist multisets that are not reversals of $H$ but have the same compositions as $H$. If a multiset has prefix-suffix compositions $M$, we say the multiset is \emph{compatible} with $M$.
	Let $\mathcal{H}:=\{[U]\mid M(U)=M\}$ be the collection of all equivalent classes whose members are compatible with $M$. {We say $H$ can be uniquely reconstructed up to reversal if and only if $|\mathcal{H}|=1$.}

	As we will see later, it is helpful to present the information provided by each composition, i.e., the length and the weight of the corresponding substring, on a two dimensional grid. This motivates the following notation. 
	Note that since $|H|=h$ there are $2nh$ compositions in $M$. Denote the grid by \[{T:=\{(l,m)\mid l\in\llb n\rrb, m\in[2h]\}.}\] 
	{Assume the strings in $H$ are given by $\bm{h}_j$, $j=1,\ldots,h$. Then one can record $\wt(\bm{h}_j[l])$ on the grid $T$ with coordinates $(l,2j-1)$ and $\wt(\bm{h}_j[-l])$ on the grid $T$ with coordinates $(l,2j)$.} So the task of reconstructing $H$ from $M$ becomes appropriately identifying the second coordinate of $(l,m)\in T$, i.e., the label of the string in $H$, based on the weights of the prefixes and suffixes. To this end, we define an integer-valued bivariate function on $T$, stated below.
	
	\begin{defn}\label{def:1}
		A function $f \colon T \to \llb n\rrb$ is called a cumulative weight function (CWF) if it satisfies the following conditions:
		\begin{enumerate}[label=(\roman*)]
			\item $f(0,m) = 0$ for any $m \in [2h]$; \label{ite:def:1:1}
			\item $f(l,m)-f(l-1,m)\in\{0,1\}$ for any $ (l,m) \in [n]\times [2h]$; \label{ite:def:1:2}
			\item {for each $j \in [h]$, there exists $w_j \in \llb n \rrb$ such that $f(l,2j-1)+f(n-l,2j) = w_j$ for all $l \in \llb n \rrb$.} \label{ite:def:1:3}
		\end{enumerate}
		{If {$w_j=\bar{w}\in \llb n\rrb$ for all $j \in [h]$}, then $f$ is said to be a constant-weight CWF or have constant weight $\bar{w}$.}
	\end{defn}
	
	{It is clear} that a CWF can be induced by the weights of the prefixes and suffixes of the strings in $H$. In particular, Item~\ref{ite:def:1:3} in Definition~\ref{def:1} is satisfied by taking {$w_j = \wt(\bm{h}_j)$}. At the same time, a CWF also identifies a multiset of $h$ strings because one can reconstruct string {$\bm{h}_j$ using the weights of the prefixes given by $\{f(l,2j-1)\mid l \in \llb n \rrb\}$} straightforwardly.
	
	\begin{defn}\label{def:2}
		Let $f \colon T \to \llb n\rrb$ be a CWF. The multiset {$H_f := \{\bm{t}_j={t}_{j,1}\ldots{t}_{j,n} \mid t_{j,l} = f(l,2j-1)-f(l-1,2j-1) \text{ for all } l \in [n], j \in [h] \}$} is called the multiset of strings corresponding to CWF $f$.
	\end{defn}
	
	Note that a CWF $f$ uniquely determines $H_f$, and the multiset $H$ (or any of its reversals) induces CWFs that are equivalent up to permutation of {the ordering} of the strings in $H$. So one may use CWFs as a proxy for analyzing the reconstructibility of $H$ based on $M$.
	By definition, a CWF $f(l,m)$ consists of $2h$ univariate functions obtained by fixing the variable $m\in [2h]$. It is convenient to deal with these component functions directly. 
	
	\begin{defn}
		Let $f\colon T\to \llb n\rrb$ be a CWF. For $m\in[2h]$, let $f_m\colon \llb n\rrb \to \llb n\rrb$ be the function given by $f_m(l)=f(l,m)$.
	\end{defn}
	
	By Item~\ref{ite:def:1:3} of Definition~\ref{def:1}, if $f$ is the CWF induced by $H$, then $f_{2j-1}$ and $f_{2j}$ record the weight information starting from the two ends of the same string in $H$. In other words, $2j-1$ and $2j$ refer to the same string. As we will be constantly relating $f_{2j-1}$ to $f_{2j}$ or the other way around, let us introduce the following definition for notational convenience.
	\begin{defn}
		Let $m\in[2h]$. Define $m^*\in [2h]$ by
		\begin{align*}
			m^*=\begin{cases}
				m-1 \text{ if $m$ is even},\\
				m+1 \text{ if $m$ is odd}.
			\end{cases}
		\end{align*}
	\end{defn}
	
	The problem of reconstructing a single string from its prefix-suffix compositions, i.e., the case where $h=1$ in our setting, is examined in \cite{ye2023reconstruction}. The authors of \cite{ye2023reconstruction} introduced the so-called swap operation for a string $\bm{t}$ to generate all the strings that have the same prefix-suffix compositions as $\bm{t}$, thereby deducing the conditions for a single {string} to be reconstructed uniquely up to reversal from prefix-suffix compositions. Specifically, the swap operation is performed on carefully chosen coordinates where $\bm{t}$ and $\cev{\bm{t}}$ disagree, so as to produce new strings that maintain the same prefix-suffix compositions. 
	Let $f$ be the CWF induced by $\{\bm{t}\}$ and let $f_1,f_2$ correspond to $\bm{t}$ and $\cev{\bm{t}}$, respectively. Using the language of CWFs, the swap operation should be performed over the domain where $f_1$ and $f_2$ take different values. Since $f_1,f_2$ capture the running weight information from the two ends of the same string $\bm{t}$, they must be $180$-degree rotational symmetric. More precisely, if $\wt(\bm{t})=\bar{w}$, then $f_1$ should be the same as $f_2$ when it is rotated 180 degrees about $(n/2,\bar{w}/2)$. With this observation, it follows that if $f'_1$ and $f'_2$ are the functions corresponding to the strings obtained by swapping bits of $\bm{t}$ with $\cev{\bm{t}}$, then $f_1',f_2'$ must be $180$-degree rotational symmetric for them to record the weight information from the two ends of a single string.
	
	Generalizing the idea of comparing $\bm{t}$ and $\cev{\bm{t}}$ for producing new strings, we introduce the notions of discrepancy and maximal intervals between functions $f_{m_1}$ and $f_{m_2}$ for any $m_1,m_2 \in [2h]$ and $h \geq 1$ as follows.
	
	\begin{defn}\label{def:compare}
		For $m_1,m_2\in[2h]$, define the \emph{discrepancy} between the functions $f_{m_1}$ and $f_{m_2}$ to be {the set $D(m_1, m_2):=\{ l \in [n] \mid f_{m_1}(l) \neq f_{m_2}(l)\}$. For $k_1,k_2\in[n]$, the set $I:=[k_1, k_2] \subset [n]$ is called a maximal interval (of the discrepancy) between $f_{m_1}$ and $f_{m_2}$, if $I$ is a nonempty set such that $I\subset D(m_1,m_2)$, $k_1-1 \notin D(m_1,m_2)$, and $k_2+1 \notin D(m_1,m_2)$.}
	\end{defn}
	
	Due to Item~\ref{ite:def:1:3} of Definition~\ref{def:1}, the maximal intervals between $f_m,f_{m^*}$ exhibit symmetry about $n/2$, as shown in the next proposition. 
	
	\begin{prop}\label{prop:sym-single}
		Let $[k_1,k_2]\subset\llb n\rrb$ be a maximal interval between $f_{m}$ and $f_{m^*}$. If $k_2 + 1 < n-k_2$, i.e., $k_2 < \lfloor n/2\rfloor$, then $[n-k_2,n-k_1]$ is another maximal interval between $f_{m}$ and $f_{m^*}$. Similarly, if $k_1 > \lceil n/2 \rceil $, then $[n-k_2,n-k_1]$ is another maximal interval between $f_{m}$ and $f_{m^*}$. If $k_1 \leq \lceil n/2 \rceil$ and $k_2 \geq \lfloor n/2\rfloor$, then it is necessary that $k_2=n-k_1$ and so $k_1\leq \lfloor n/2\rfloor$ and $k_2\geq \lceil n/2\rceil$.
	\end{prop}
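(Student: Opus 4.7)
The plan is to exploit Item~\ref{ite:def:1:3} of Definition~\ref{def:1} to set up a reflective symmetry on the discrepancy set $D := D(m, m^*)$, then transfer this symmetry to the given maximal interval and finish with a short case analysis.

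First I would show that $D$ is invariant under the involution $l \mapsto n - l$ on $[n]$. Let $j$ be the string index associated with $m$, so that $\{m, m^*\} = \{2j-1, 2j\}$. Item~\ref{ite:def:1:3} reads $f_m(l) + f_{m^*}(n-l) = w_j$ for all $l \in \llb n\rrb$; substituting $n - l$ for $l$ yields the companion identity $f_m(n-l) + f_{m^*}(l) = w_j$, so that $f_m(l) - f_{m^*}(l) = -(f_m(n-l) - f_{m^*}(n-l))$. This gives $l \in D \Leftrightarrow n - l \in D$ for $l \in [n]$, and the choice $l = 0$ additionally shows $f_m(n) = f_{m^*}(n) = w_j$, so $n \notin D$; this handles the endpoint that the reflection would send outside $[n]$.

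Next I would transfer the symmetry to $[k_1, k_2]$. Applying $l \mapsto n - l$ to every $l \in [k_1, k_2]$ gives $[n-k_2, n-k_1] \subset D$, while the maximality conditions $k_2 + 1 \notin D$ and $k_1 - 1 \notin D$ (the latter automatic when $k_1 = 1$) reflect to $n - k_2 - 1 \notin D$ and $n - k_1 + 1 \notin D$. Hence $[n-k_2, n-k_1]$ is itself a maximal interval of $D$.

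Finally I would split into the three cases. If $k_2 + 1 < n - k_2$, then $n - k_2 \geq k_2 + 2$ and the mirror sits strictly to the right of $[k_1, k_2]$ with a gap, giving a distinct maximal interval; the case $k_1 > \lceil n/2 \rceil$ is symmetric with the mirror lying strictly to the left. In the remaining case $k_1 \leq \lceil n/2 \rceil$ and $k_2 \geq \lfloor n/2 \rfloor$, I would argue that $[k_1, k_2]$ and $[n-k_2, n-k_1]$ must coincide: overlap without coincidence is impossible because the union would be a strictly larger connected subset of $D$, contradicting maximality; and disjointness is ruled out by a short parity argument showing that, under the case hypothesis, the boundary of the mirror would land on one of the forbidden points $k_1 - 1$ or $k_2 + 1$. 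Equality of the intervals then gives $k_2 = n - k_1$, from which $k_1 \leq \lfloor n/2 \rfloor$ and $k_2 \geq \lceil n/2 \rceil$ follow. The main obstacle I anticipate is this parity-sensitive check in the third case, particularly when $n$ is odd and $[k_1, k_2]$ just touches the center of $[n]$, where the two candidate disjoint configurations have to be eliminated separately.
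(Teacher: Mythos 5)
Your proof is correct and takes essentially the same route as the paper's: Item~\ref{ite:def:1:3} of Definition~\ref{def:1} gives the reflection symmetry $l \in D(m,m^*) \Leftrightarrow n-l \in D(m,m^*)$, so $[n-k_2,n-k_1]$ is again a maximal interval, and the three cases follow by inspecting how it meets $[k_1,k_2]$ (the paper states this in two lines; you supply the endpoint and parity details). One minor slip: subtracting the two instances of Item~\ref{ite:def:1:3} gives $f_m(l)-f_{m^*}(l) = f_m(n-l)-f_{m^*}(n-l)$, without the minus sign you wrote, but this does not affect the equivalence you actually use.
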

	\begin{proof}
		Since $f_m(l)\neq f_{m^*}(l)$ for $l\in [k_1,k_2]$, by Item~\ref{ite:def:1:3} of Definition~\ref{def:1}, we have $f_m(l)\neq f_{m^*}(l)$ for $l\in [n-k_2,n-k_1]$. The proposition follows by inspecting the intersection of $[k_1,k_2]$ and $[n-k_2,n-k_1]$.
	\end{proof}
	
	As the focus of this paper is on constant-weight strings, let us mention the following simple observation for constant-weight CWFs, which is also a consequence of Item~\ref{ite:def:1:3} of Definition~\ref{def:1}.
	\begin{prop}\label{symInterval}
		Assume $f$ is a constant-weight CWF. Let $m_1,m_2\in [2h]$ and $k_1, k_2\in [n-1]$.
		{If $[k_1, k_2] \subset [n-1]$} is a maximal interval between $f_{m_1}$ and $f_{m_2}$, then $[n-k_2,n-k_1]$ is a maximal interval between $f_{{m}^*_1}$ and $f_{{m}^*_2}$.
	\end{prop}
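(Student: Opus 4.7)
The plan is to reduce the statement to a reflection symmetry of the discrepancy sets that follows directly from the constant-weight hypothesis, generalizing the single-index argument behind Proposition~\ref{prop:sym-single}.

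First I would invoke Item~\ref{ite:def:1:3} of Definition~\ref{def:1}: since $f$ has constant weight $\bar{w}$, one has $f_m(l)+f_{m^*}(n-l)=\bar{w}$ for every $m\in[2h]$ and $l\in\llb n\rrb$. Applying this identity with $m=m_1$ and $m=m_2$ and subtracting gives
\[
f_{m_1}(l)-f_{m_2}(l)=f_{m_2^*}(n-l)-f_{m_1^*}(n-l),
\]
so $f_{m_1}(l)=f_{m_2}(l)$ if and only if $f_{m_1^*}(n-l)=f_{m_2^*}(n-l)$. In other words, the involution $l\mapsto n-l$ induces a bijection between $D(m_1,m_2)\cap[n-1]$ and $D(m_1^*,m_2^*)\cap[n-1]$. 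The remaining boundary value $l=n$ causes no trouble: the constant-weight identity at $l=0$ forces $f_{m_1^*}(n)=f_{m_2^*}(n)=\bar{w}$, so $n\notin D(m_1^*,m_2^*)$, and the analogous statement holds for $D(m_1,m_2)$.

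The second step is then to transport the three conditions in Definition~\ref{def:compare} across this bijection. The assumption $[k_1,k_2]\subset D(m_1,m_2)$ yields $[n-k_2,n-k_1]\subset D(m_1^*,m_2^*)$, and the hypothesis $k_1,k_2\in[n-1]$ guarantees that $n-k_2,n-k_1\in[n-1]$, so this is a valid candidate interval. The two boundary non-memberships $k_1-1\notin D(m_1,m_2)$ and $k_2+1\notin D(m_1,m_2)$ reflect to $n-k_1+1\notin D(m_1^*,m_2^*)$ and $n-k_2-1\notin D(m_1^*,m_2^*)$, which are exactly the maximality conditions required of $[n-k_2,n-k_1]$.

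I do not foresee any real obstacle: the statement is essentially the direct generalization of Proposition~\ref{prop:sym-single} to two possibly distinct indices, and the constant-weight assumption delivers the needed reflection symmetry with no further work. The only minor bookkeeping concerns the edge cases $k_1=1$ and $k_2=n-1$, in which $n-k_1+1=n$ or $n-k_2-1=0$ occur; the first is handled by the computation $f_{m_1^*}(n)=f_{m_2^*}(n)=\bar{w}$ noted above, and the second is vacuous since the discrepancy set is a subset of $[n]$.
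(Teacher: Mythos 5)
Your proof is correct and is precisely the argument the paper intends: the paper states this proposition without proof as a "consequence of Item~(iii) of Definition~1," and your subtraction of the two constant-weight identities $f_{m_i}(l)+f_{m_i^*}(n-l)=\bar{w}$ to get the reflection bijection of discrepancy sets, plus the boundary checks at $l=0$ and $l=n$, fills in exactly that omitted reasoning. No issues.
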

	
	Next, we introduce the notion of swap between functions $f_{m_1}$ and $f_{m_2}$ for any $m_1,m_2 \in [2h]$ {that ensures the resulting component functions still form a CWF}. In view of Proposition~\ref{prop:sym-single}, the swap operation has to be defined properly so that the symmetry of $f_{m_1},f_{m_1^*}$ and $f_{m_2},f_{m_2^*}$ are preserved after swapping, i.e., Item~\ref{ite:def:1:3} of Definition~\ref{def:1} is still satisfied by the new functions obtained after swapping.
	
	\begin{defn}\label{def:swap}
		Let $f$ be a CWF, $m_1, m_2 \in [2h]$, and $I\subset  \llb n \rrb$ be a maximal interval between $f_{m_1}$ and $f_{m_2}$.
		Let $g$ be the CWF obtained from $f$ by swapping the image of $(l,m_1)$ under $f$ for that of $(l,m_2)$, and the image of $(n-l,m_1^*)$ under $f$ for that of $(n-l,m_2^*)$ for all $l \in I$.  More precisely,
		if $m_1^*\neq m_2$, then 
		$g$ satisfies $g_m =  f_m$ for $m\in [2h]\setminus \{m_1,m_1^*,m_2,m_2^*\}$ and
		\begin{equation*}
			\begin{aligned}
				g_{m_1}(l) &=
				\begin{cases}
					f_{m_2}(l), & l \in I\\
					f_{m_1}(l), & l \notin I
				\end{cases},
				& g_{m^*_1}(n-l) &=
				\begin{cases}
					f_{m^*_2}(n-l) , & l \in I\\
					f_{m^*_1}(n-l) , & l \notin I
				\end{cases},\\
				g_{m_2}(l) &=
				\begin{cases}
					f_{m_1}(l), & l \in I\\
					f_{m_2}(l), & l \notin I
				\end{cases},
				& g_{m^*_2}(n-l) &=
				\begin{cases}
					f_{m^*_1}(n-l) , & l \in I\\
					f_{m^*_2}(n-l) , & l \notin I
				\end{cases}.
			\end{aligned}
		\end{equation*}
		If $m_1^*=m_2$, then 
		$g$ satisfies $g_m =  f_m$ for $m\in [2h]\setminus \{m_1,m_1^*\}$ and writing $I=[k_1,k_2],\bar{I}:=[n-k_2,n-k_1]$, we have
		\begin{equation*}
			\begin{aligned}
				g_{m_1}(l) &=
				\begin{cases}
					f_{m_2}(l), & l \in I\cup \bar{I}\\
					f_{m_1}(l), & l \notin I\cup \bar{I}
				\end{cases},
				& g_{m^*_1}(n-l) &=
				\begin{cases}
					f_{m^*_2}(n-l) , & l \in I\cup \bar{I}\\
					f_{m^*_1}(n-l) , & l \notin I\cup \bar{I}
				\end{cases}.
			\end{aligned}
		\end{equation*}
		Denote the mapping $(f, I, m_1, m_2) \mapsto  g$ by $\phi$.	
	\end{defn}
	
	
	The ideas of maximal intervals and swapping are particularly helpful in establishing the necessity of the conditions for unique reconstruction as we will see in Section~\ref{sec:necessity}.
	
	If $f$ is constant-weight, then by the rotational symmetry of $f_m$ and $f_{m^*}$, the behavior of $f$ completely characterized by $(l,f_m(l))$ for all $l\leq \lfloor n/2\rfloor$ and $m\in[2h]$. This motivates us to look at the ``median weight'' of the component functions $\{f_m\}$, introduced below. 
	\begin{defn}\label{def:med}
		For $m\in [2h]$, the \emph{median weight} of $f_m$ is defined to be $\med(f_m)=\frac{1}{2} \big(f_m(\lfloor n/2 \rfloor) + f_m(\lceil n/2 \rceil) \big)\in\mathbb{R}$. For $w\in\mathbb{R}$, let ${{A}_f(w)} = \{m\in[2h] \mid \med(f_m) = w \}$ be the set of indices of the component functions $\{f_m\}$ for which the median weight is $w$. If $f$ is clear from the context, denote {${A}_f(w)$ by ${A}(w)$} for simplicity.
	\end{defn}
	
	The set $A(w)$ plays an important role in showing the sufficiency of the conditions for unique reconstruction in Section~\ref{sec:sufficiency}. Note that if $f$ has constant weight $\bar{w}$ then $A_f(\bar{w}/2)$ must be even. In fact, for any $m\in [2h]$, we have $m\in A(\bar{w}/2)$ if and only if $m^*\in A(\bar{w}/2)$, due to the $180$-degree rotational symmetry of $f_m$ and $f_{m^*}$ about $(n/2,\bar{w}/2)$. 
	
	As mentioned earlier, CWFs may be used as a proxy for reconstructing multisets given $M$. In fact, our reconstruction algorithms presented in Section~\ref{sec:algorithm} essentially reconstruct CWFs whose corresponding multisets are compatible with $M$, and such CWFs are said to be ``solutions'' to $M$. 
	
	\begin{defn}\label{def:3}
		A CWF $f \colon T \to \llb n\rrb$ is called a \emph{solution} to the composition multiset $M$ if the multiset equality {$M=\{(l-f_m(l),f_m(l)) \mid m \in [2h],l \in [n] \}$ holds}.
	\end{defn}
	
	\begin{remark}\label{re:swap}
		{If $f$ is a solution to $M$ and $I\subset\llb n\rrb$ is a maximal interval between $f_{m_1}$ and $f_{m_2}$, then $g=\phi( f, I, m_1, m_2)$ is also a solution to $M$.}
	\end{remark}
	
	In order to recover all multisets of strings compatible with $M$, it suffices to find all CWF solutions to $M$. Therefore, it is helpful to establish connections between multiset $M$ and CWF $f$, which is what we will do next.
	
	\begin{defn}\label{def:5}
		Let $f\colon T\to \llb n\rrb$ be a CWF. For $(l,w)\in \llb n\rrb^2$, let $A_f(l,w)=\{m\in [2h]\mid f_m(l)=w\}$. When the underlying CWF $f$ is clear from the context, denote {${A}_f(l,w)$ by ${A}(l,w)$} for simplicity.
	\end{defn}
	
	\begin{defn}\label{def:num-a}
		For $(l,w)\in \llb n\rrb^2$, let $a_{l,w}$ be the number of pairs $(l-w,w)$ in {$M$} if $(l,w)\neq (0,0)$ and define $a_{0,0}=2h$. 
	\end{defn}
	
	\begin{remark}\label{re:M-f}
		By Definition~\ref{def:5} and \ref{def:num-a}, $|A(l,w)|$ is the number of functions in $\{f_m\}$ that satisfies $f_m(l)=w$, {and $a_{l,w}$ is the number of length-$l$ prefixes and suffixes of weight $w$.} Therefore, by Definition~\ref{def:3}, {a CWF $f$ is a solution to $M$ if and only  $|A(l,w)|=a_{l,w}$ for all $(l,w)\in\llb n\rrb^2$.} 
	\end{remark}

	By Remark~\ref{re:M-f}, to find a solution $f$ to $M$, one may plot the elements of the multiset $M$ on the two dimensional grid $T$ and construct a CWF $f$ such that it passes the point $(l,w)$ exactly $a_{l,w}$ times on the grid.
	Below we mention a few basic properties of $A(l,w)$ and $a_{l,w}$ that immediately {follow} from Definition~\ref{def:5} and \ref{def:num-a}.
	
	\begin{prop}\label{prop:p}
		\begin{enumerate}[label=(\roman*)]
			\item For $l\in [n]$ and $w_1,w_2\in \llb n\rrb$, if $w_1 \neq w_2$ then $A(l,w_1) \cap A(l,w_2) = \emptyset$.\label{prop:p:ite:1}
			\item For $(l,w)\in\llb n-1\rrb^2$, it holds that $A(l,w)\subset A(l+1,w)\cup A(l+1,w+1)$.\label{prop:p:ite:2}
			\item For $(l,w)\in [n]^2$, it holds that $A(l,w)\subset A(l-1,w)\cup A(l-1,w-1)$.\label{prop:p:ite:3}
		\end{enumerate}
	\end{prop}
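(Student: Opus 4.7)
The plan is to observe that all three items of Proposition~\ref{prop:p} are essentially immediate consequences of Definition~\ref{def:5} together with the incremental property of CWFs given by Item~\ref{ite:def:1:2} of Definition~\ref{def:1}. No symmetry or structural argument is needed; the proof is a direct unpacking of definitions, and I do not expect any substantive obstacle.

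For Item~\ref{prop:p:ite:1}, I would suppose for contradiction that some $m$ lies in $A(l,w_1)\cap A(l,w_2)$. By definition of $A(l,w)$, this forces $f_m(l)=w_1$ and simultaneously $f_m(l)=w_2$. Since $f_m$ is a function, this yields $w_1=w_2$, contradicting $w_1\neq w_2$, so the intersection must be empty.

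For Item~\ref{prop:p:ite:2}, take any $m\in A(l,w)$, so $f_m(l)=w$. Apply Item~\ref{ite:def:1:2} of Definition~\ref{def:1} at the coordinate $(l+1,m)$ to get $f_m(l+1)-f_m(l)\in\{0,1\}$, which means $f_m(l+1)\in\{w,w+1\}$. Thus $m\in A(l+1,w)\cup A(l+1,w+1)$, establishing the claimed inclusion. Item~\ref{prop:p:ite:3} is proved by the symmetric argument: for $m\in A(l,w)$ we have $f_m(l)=w$, and applying Item~\ref{ite:def:1:2} at $(l,m)$ gives $f_m(l)-f_m(l-1)\in\{0,1\}$, hence $f_m(l-1)\in\{w-1,w\}$ and $m\in A(l-1,w)\cup A(l-1,w-1)$.

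The entire proof amounts to three one-line derivations, so the ``hard part'' is really just making sure the boundary ranges in the hypotheses match the domains where Item~\ref{ite:def:1:2} of Definition~\ref{def:1} applies, which they do by the stated constraints $(l,w)\in\llb n-1\rrb^2$ and $(l,w)\in[n]^2$ respectively. I would therefore present the three items together in a single short paragraph rather than as separate sub-proofs.
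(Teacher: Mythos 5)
Your proof is correct and is exactly the argument the paper intends: the paper presents this proposition without proof as an immediate consequence of Definition~\ref{def:5} and Item~\ref{ite:def:1:2} of Definition~\ref{def:1}, which is precisely what you unpack. Nothing further is needed.
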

	
	Note that \ref{prop:p:ite:2} (resp., \ref{prop:p:ite:3}) of Proposition~\ref{prop:p} simply says the weight of a substring cannot decrease (resp., increase) if its length increases (resp., decreases).
	As mentioned previously, a solution $f$ to $M$ must pass $(l,w)$ exactly $a_{l,w}$ times. To further assist in finding such CWFs, we will be interested in the number of {length-$l$ weight-$w$} prefixes and suffixes whose weight {remains} the same if the length {decreases}, and the number of those whose weight deceases with the length. {They are denoted by $b_{l,w}$ and $c_{l,w}$, introduced in the next definition.} 
	
	\begin{defn}\label{def:3.5}
		Let $f$ be a solution to $M$.  For all $(l,w) \in [n]^2$, define $b_{l,w} = |A(l,w) \cap A(l-1,w)|$ and $c_{l,w} = |A(l,w) \cap A(l-1,w-1)|$. {Moreover, define $b_{l,0} = |A(l,0)|$, $c_{l,0} = 0$ for all $l \in [n]$.}
	\end{defn}
	
	\begin{prop}\label{prop:3.7}
		Let $f$ be a solution to $M$. {The numbers $\{b_{l,w},c_{l,w} \mid (l,w) \in [n] \times \llb n \rrb; w \leq l \}$ can be computed from the numbers $\{a_{l,w} \mid (l,w)\in\llb n\rrb^2; w \leq l \}$.}
	\end{prop}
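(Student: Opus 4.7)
The plan is to derive two counting identities relating $\{a_{l,w}\}$, $\{b_{l,w}\}$, and $\{c_{l,w}\}$, and then, for each fixed $l \in [n]$, determine $\{b_{l,w}, c_{l,w} \mid w \leq l\}$ by induction on $w$ using the base case supplied directly by Definition~\ref{def:3.5}.

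First, I would establish the identity $a_{l,w} = b_{l,w} + c_{l,w}$ for all $(l,w) \in [n]\times \llb n\rrb$ with $1 \leq w \leq l$. By Proposition~\ref{prop:p}\ref{prop:p:ite:3}, every $m \in A(l,w)$ satisfies $f_m(l-1) \in \{w-1, w\}$, so $A(l,w) = (A(l,w) \cap A(l-1,w)) \cup (A(l,w) \cap A(l-1,w-1))$, and by Proposition~\ref{prop:p}\ref{prop:p:ite:1} this union is disjoint. Taking cardinalities yields the identity.

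Second, I would establish the dual identity $a_{l-1,w} = b_{l,w} + c_{l,w+1}$. Item~\ref{ite:def:1:2} of Definition~\ref{def:1} forces $f_m(l) - f_m(l-1) \in \{0,1\}$, so for $m \in A(l-1,w)$ we have $f_m(l) \in \{w, w+1\}$. Hence $A(l-1,w)$ decomposes disjointly as $(A(l-1,w) \cap A(l,w)) \cup (A(l-1,w) \cap A(l,w+1))$, whose cardinalities are $b_{l,w}$ and $c_{l,w+1}$ respectively by Definition~\ref{def:3.5}.

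With both identities available, for each fixed $l \in [n]$ I would compute the target quantities by induction on $w$. The base case $b_{l,0} = a_{l,0}$ and $c_{l,0} = 0$ comes directly from Definition~\ref{def:3.5}. For the inductive step with $w \geq 1$, the dual identity at index $w-1$ gives $c_{l,w} = a_{l-1,w-1} - b_{l,w-1}$, and then the first identity gives $b_{l,w} = a_{l,w} - c_{l,w}$; since the right-hand sides involve only the $a$'s and previously determined values, the recursion computes every $b_{l,w}$ and $c_{l,w}$ from the $a$'s alone. I do not expect a substantive obstacle here: the argument is essentially a bookkeeping one, and the only care needed is to verify the boundary behavior (in particular that $a_{l,w} = 0$ when $w > l$) so that the induction terminates cleanly within the stated range.
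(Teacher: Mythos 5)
Your proof is correct and follows essentially the same route as the paper's: both derive the two disjoint-decomposition identities $a_{l,w}=b_{l,w}+c_{l,w}$ and $a_{l-1,w}=b_{l,w}+c_{l,w+1}$ from Proposition~\ref{prop:p} and then solve them recursively. The only (immaterial) difference is that you run the recursion upward in $w$ from the base case $b_{l,0}=a_{l,0}$, $c_{l,0}=0$, whereas the paper runs it downward from $b_{l,l}=0$, $c_{l,l}=a_{l,l}$.
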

	\begin{proof}
		Since $f$ is a solution to $M$, we have $a_{l,w}=|A(l,w)|$. By Definition~\ref{def:3.5}, $b_{l,l}=0$, $c_{l,l}=a_{l,l}$ for all $l\in [n]$. It remains to find $b_{l,w},c_{l,w}$ where $0\leq w\leq l-1$. By \ref{prop:p:ite:1} of Proposition~\ref{prop:p}, $A(l-1,w)$ and $A(l-1,w-1)$ are disjoint. Therefore, $b_{l,w}+c_{l,w}\leq a_{l,w}$. At the same time, by \ref{prop:p:ite:3} of Proposition~\ref{prop:p}, we have $a_{l,w}\leq b_{l,w}+c_{l,w}$, and thus
		\begin{align}
			a_{l,w}= b_{l,w}+c_{l,w},\quad (l,w)\in [n] \times \llb n \rrb.\label{eq:sum1}
		\end{align}
		Using \ref{prop:p:ite:2} of Proposition~\ref{prop:p}, we have $a_{l,w}\leq b_{l+1,w}+c_{l+1,w+1}$. By \ref{prop:p:ite:1} of Proposition~\ref{prop:p}, $A(l+1,w)$ and $A(l+1,w+1)$ are disjoint so we also have ${b_{l+1,w}+c_{l+1,w+1}\leq a_{l,w}}$.
		Therefore,
		\begin{align}
			a_{l,w}= b_{l+1,w}+c_{l+1,w+1},\quad (l,w)\in \llb n-1\rrb^2.\label{eq:sum2}
		\end{align}
		It follows from \eqref{eq:sum2} that 
		$b_{l,l-1}=a_{l-1,l-1}-c_{l,l}$ for $l\in[n]$. Using \eqref{eq:sum1}, we obtain $c_{l,l-1}=a_{l,l-1}-b_{l,l-1}$ for $l\in[n]$. So we have found $\{b_{l,l-1},c_{l,l-1}\mid l\in[n]\}$. Next, from \eqref{eq:sum2} and \eqref{eq:sum1} we have $b_{l,l-2}=a_{l-1,l-2}-c_{l,l-1}$ and $c_{l,l-2}=a_{l,l-2}-b_{l,l-2}$ for $l\in[2,n]$. Repeating this process, we can determine $\{b_{l,l-i},c_{l, l-i}\mid l\in[i,n]\}$ for all $i\in[n]$.
	\end{proof}
	
	\begin{remark}\label{re:bc}
		As a consequence of Proposition~\ref{prop:3.7}, the numbers $\{b_{l,w}\}$ and $\{c_{l,w}\}$ can be found by inspecting $M$, and thus they are properties of $M$ in the sense that all solutions to $M$ result in the same $\{b_{l,w}\}$ and $\{c_{l,w}\}$. In fact, from the recursive procedures in the above proof, for $w\leq l$ we have 
		\begin{align*}
			b_{l,w} &= \sum_{v=w}^{l-1} a_{l-1,v}-\sum_{v=w+1}^{l} a_{l,v},\\
			c_{l, w} &= \sum_{v=w}^{l} a_{l,v} - \sum_{v=w}^{l-1} a_{l-1,v}.
		\end{align*}
		Since $b_{l,w}\geq 0, c_{l, w}\geq 0$, it follows that for all $w\leq l $
		\begin{align}
			\sum_{v=w}^{l-1} a_{l-1,v}&\geq \sum_{v=w+1}^{l} a_{l,v}, \label{eq:horizontal}\\
			\sum_{v=w}^{l} a_{l,v} &\geq \sum_{v=w}^{l-1} a_{l-1,v}. \label{eq:vertical}
		\end{align}		
	\end{remark}
	
	The numbers $\{a_{l,w}\},\{b_{l,w}\},\{c_{l,w}\}$ are instrumental in analyzing the possible behaviors of the component functions $\{f_m\}$ in Section~\ref{sec:algorithm}.
	
	\section{Necessary and sufficient conditions for unique reconstruction}\label{sec:conditions}
	
	In this section, we assume $H$ is a multiset of $h$ strings of length $n$ and weight $\bar{w}$. The main result of this section is stated in the following theorem.
	
	\begin{theorem}\label{UniqueThm}
		Let $f$ be a solution to $M$. 
		{There} is exactly one multiset of strings (up to reversal) compatible with $M$, i.e., $|\mathcal{H}|=1$, if and only if {$f$ satisfies the following conditions}:
		\begin{enumerate}[label=(\roman*)]
			\item \label{thm:unique:ite:1}
			For any $m_1, m_2 \in [2h]$ with $m_1^* = m_2$, there exist at most two maximal intervals between $f_{m_1}$ and $f_{m_2}$.
			
			\item \label{thm:unique:ite:2}
			For any $m_1, m_2 \in [2h]$ with $m_1^* \neq m_2$, there exists at most one maximal interval between $f_{m_1}$ and $f_{m_2}$.
		\end{enumerate}
	\end{theorem}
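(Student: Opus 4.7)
The plan is to prove Theorem~\ref{UniqueThm} in two directions, exploiting the swap operation $\phi$ of Definition~\ref{def:swap}, which by Remark~\ref{re:swap} always carries a solution to $M$ to another solution to $M$.

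For necessity I proceed by contrapositive: if either condition fails for $f$, I exhibit a solution $g$ to $M$ with $H_g \notin [H_f]$, thereby forcing $|\mathcal{H}| \geq 2$. First suppose condition~\ref{thm:unique:ite:2} fails, so there are distinct maximal intervals $I_1, I_2$ between $f_{m_1}$ and $f_{m_2}$ with $m_1^* \neq m_2$. Set $g = \phi(f, I_1, m_1, m_2)$. Because $I_2$ is left untouched and the swap affects only the four indices $m_1, m_1^*, m_2, m_2^*$, the string indexed by $m_1$ in $H_g$ is a genuine hybrid of the two original strings of $H_f$ on indices $m_1$ and $m_2$; inspecting its prefix weights shows that it is neither any string of $H_f$ nor its reversal, and a multiplicity count places $H_g$ outside $[H_f]$. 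If instead condition~\ref{thm:unique:ite:1} fails, then by Proposition~\ref{prop:sym-single} the three or more maximal intervals between $f_{m_1}$ and $f_{m_1^*}$ decompose into off-center symmetric pairs $[k_1,k_2],[n-k_2,n-k_1]$ and at most one self-symmetric central interval. An off-center pair must therefore exist; swapping only one interval $I_1$ from such a pair (via the $m_1^* = m_2$ branch of $\phi$, which automatically includes $\bar I_1$) leaves at least one other maximal interval untouched, so the resulting CWF again corresponds to a single string outside $\{t,\cev t : t\in H_f\}$.

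For sufficiency I assume that both conditions hold for $f$ and let $g$ be any solution to $M$; the goal is to show $H_g \in [H_f]$. The plan is to construct a canonical correspondence between the component functions of $f$ and those of $g$ using the rigidity of the sets $\{A(l,w)\}$ (Proposition~\ref{prop:p}) and the counts $\{a_{l,w}\}, \{b_{l,w}\}, \{c_{l,w}\}$ (Proposition~\ref{prop:3.7}), all of which are determined by $M$ alone. Once the indices are aligned, I decompose the discrepancies between matched components into maximal intervals and realize the passage from $f$ to $g$ as a sequence of applications of $\phi$; under conditions~\ref{thm:unique:ite:1} and~\ref{thm:unique:ite:2}, each such swap either merely transposes two identical strings or exchanges a single string for its reversal, and hence preserves the equivalence class $[H_f]$ by Definition~\ref{def:reversal}.

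The hardest part is the sufficiency direction, specifically the case analysis for a swap on $(f, I, m_1, m_2)$. The two conditions bound interval counts only between pairs of components of $f$, so the challenge is to leverage these bounds globally: every swap must induce a coherent rearrangement across all four coupled indices $m_1, m_1^*, m_2, m_2^*$, which is where the $180$-degree rotational symmetry enforced by Item~\ref{ite:def:1:3} of Definition~\ref{def:1} together with Proposition~\ref{symInterval} becomes essential. I expect the argument to split cleanly into the case $m_1^* = m_2$, where a swap corresponds to a single-string reversal, and the case $m_1^* \neq m_2$, where a two-string interaction involves four indices simultaneously and requires the most careful bookkeeping to certify that the result remains in $[H_f]$.
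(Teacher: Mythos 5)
Your necessity argument follows the paper's route (swap on one maximal interval via $\phi$ and check that the resulting solution lies outside $[H_f]$), but even there you skip the one genuinely delicate case: when condition~\ref{thm:unique:ite:2} fails and you form $g=\phi(f,I_1,m_1,m_2)$, it can happen that $g_{m_1}=f_{m_2^*}$, i.e.\ the ``hybrid'' string is exactly the reversal of the string indexed by $m_2$, so ``inspecting its prefix weights'' does not immediately place $H_g$ outside $[H_f]$. The paper's Lemma~\ref{notUnique} must treat this case separately by passing to $g_{m_2}$ and invoking Proposition~\ref{symInterval} to show $g_{m_2}\neq f_{m_1^*}$; your sketch asserts the conclusion without this step. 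The three-interval case is handled essentially as in Lemma~\ref{le:same-str}.

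The real gap is in sufficiency. Your plan is to show that an arbitrary solution $g$ to $M$ is reachable from $f$ by a sequence of swaps, and then to check that each individual swap preserves $[H_f]$ under conditions~\ref{thm:unique:ite:1} and~\ref{thm:unique:ite:2}. The second half is fine, but the first half is exactly the step the paper identifies as the obstruction and deliberately avoids: no argument is given that every solution to $M$ arises from $f$ by repeated swaps, and the maximal intervals on which $\phi$ acts are defined between two components of the \emph{same} CWF, whereas the discrepancies you would need to eliminate are between a component of $f$ and a component of $g$ --- aligning indices via $\{A(l,w)\}$ and the counts $\{a_{l,w}\},\{b_{l,w}\},\{c_{l,w}\}$ does not by itself produce such a swap sequence. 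The paper's actual sufficiency proof takes an invariant-based route instead: it translates the interval conditions into statements about branching and merging points (Lemmas~\ref{branch-merge}--\ref{le:half-diff}), observes that these points are determined by $M$ alone and hence shared by every solution, and then proves directly (Lemmas~\ref{le:half-bar-w} and~\ref{le:half-bar-w-0}) that any two solutions have equal multisets of component functions. Without either that argument or an actual proof of swap-reachability, your sufficiency direction is incomplete.
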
   
	
	\subsection{Necessity}\label{sec:necessity}
	
	To give a rough idea of why the conditions in Theorem~\ref{UniqueThm} are necessary for unique reconstruction, let first consider some simple examples for the case where there is a single string $\bm{t}$. Suppose $\bm{t}=011101$ and so $\cev{\bm{t}}=101110$. A string $\bm{s}=101110$, which has the same prefix-suffix compositions as $\bm{t}$, can be obtained by swapping the first two and last two bits of $\bm{t}$ for those of $\cev{\bm{t}}$. Note that $\bm{s}$ is simply $\cev{\bm{t}}$ and we only obtain the reversal of $\bm{t}$ after swapping. Using the language of CWFs, let $f$ be the CWF induced by $\{\bm{t}\}$ and $f_1,f_2$ are the functions corresponding to $\bm{t},\cev{\bm{t}}$. We observe that there are only two maximal intervals between $f_1$ and $f_2$. 
	
	Next, let us examine an example where we produce a new string by swapping. Take $\bm{t}=010101$ and so $\cev{\bm{t}}=101010$. In this case, there are three maximal intervals between the corresponding functions $f_1,f_2$. Swapping the first two and last two bits of $\bm{t}$ with $\cev{\bm{t}}$, we obtain a new string $\bm{s}=100110$. Clearly, $\bm{s}\neq \cev{\bm{t}}$ and $\bm{s}$ has the same prefix-suffix compositions as $\bm{t}$.
	
	From the above two examples, one may expect that if there are at least three maximal intervals between $\bm{t}$ and $\cev{\bm{t}}$ then $\bm{t}$ cannot be uniquely reconstructed, and therefore, the existence of at most two maximal intervals is necessary for unique reconstruction of a single string. A similar analysis can also be carried out for two strings that are not reversal of each other, and it turns out that the existence of at most one maximal interval is necessary for unique reconstruction in this case. 
	
	\begin{lemma}\label{le:same-str}
		Let $f$ be a solution to $M$ and let $m \in [2h]$. 
		If there exists at least three maximal intervals between $f_{m}$ and $f_{m^*}$, then $|\mathcal{H}| > 1$
	\end{lemma}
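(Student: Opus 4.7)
The plan is to exhibit, via the swap operation of Definition~\ref{def:swap}, a second solution $g$ to $M$ whose corresponding multiset $H_g$ lies in an equivalence class different from $[H_f]$, thereby forcing $|\mathcal{H}|>1$.

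First I would use Proposition~\ref{prop:sym-single} to classify the maximal intervals between $f_m$ and $f_{m^*}$. A self-symmetric maximal interval (one coinciding with its reflection $[n-k_2,n-k_1]$) must contain both $\lfloor n/2\rfloor$ and $\lceil n/2\rceil$, so by maximality there can be at most one such interval. Hence among the $\geq 3$ maximal intervals, at least two are non-self-symmetric and must appear as a reflected pair $I,\bar{I}$ with $\bar{I}=[n-k_2,n-k_1]\neq I$; pick one such $I=[k_1,k_2]$ and set $g:=\phi(f,I,m,m^*)$. By Remark~\ref{re:swap}, $g$ is again a solution to $M$.

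Since the $m_1^*=m_2$ branch of Definition~\ref{def:swap} alters only $f_m$ and $f_{m^*}$, the multisets $H_f$ and $H_g$ agree in the $h-1$ strings indexed by $[2h]\setminus\{m,m^*\}$. Denote by $\bm{t}$ (resp.\ $\bm{t}'$) the remaining string in $H_f$ (resp.\ $H_g$) corresponding to the index pair $(m,m^*)$. To prove $[H_g]\neq[H_f]$, it then suffices to show $\bm{t}'\notin\{\bm{t},\cev{\bm{t}}\}$. On the one hand, on the non-empty set $I$ we have $g_m=f_{m^*}\neq f_m$ (the inequality from $I\subset D(m,m^*)$), so the prefix-weight functions of $\bm{t}'$ and $\bm{t}$ differ and $\bm{t}'\neq\bm{t}$. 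On the other hand, since at least three maximal intervals exist and only $I$ together with $\bar{I}$ are touched by the swap, some other maximal interval $J\subset[n]$ is disjoint from $I\cup\bar{I}$; on $J$ we still have $g_m=f_m\neq f_{m^*}$, so $g_m\not\equiv f_{m^*}$ and hence $\bm{t}'\neq\cev{\bm{t}}$.

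The main obstacle is guaranteeing that a witness interval $J$ truly survives the swap; this is precisely where the hypothesis of three or more maximal intervals, combined with the ``at most one self-symmetric interval'' observation, becomes essential. With only two maximal intervals they could together form a single reflected pair $\{I,\bar{I}\}$, and the swap would merely toggle $\bm{t}$ into $\cev{\bm{t}}$, producing no new equivalence class; the third interval is exactly what blocks $\bm{t}'$ from collapsing to $\cev{\bm{t}}$.
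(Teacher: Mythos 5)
Your proof is correct and follows essentially the same route as the paper's: swap $f_m$ and $f_{m^*}$ on one maximal interval and use a third maximal interval, untouched by the swap, to certify that the resulting string is neither $\bm{t}$ nor $\cev{\bm{t}}$, so $[H_g]\neq[H_f]$. The only cosmetic difference is that you first isolate a non-self-symmetric interval via Proposition~\ref{prop:sym-single}, whereas the paper simply swaps on the leftmost interval $I_1$ and observes that $\bar{I}_1$ cannot coincide with both of the remaining two intervals.
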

	\begin{proof}
		Let $I_1=[k_1,k_2],I_2=[k_3,k_4],I_3=[k_5,k_6]$ be three maximal intervals between $f_{m}$ and $f_{m^*}$. Without loss of generality, we may assume $0<k_1\leq k_2 < k_3\leq k_4 < k_5\leq k_6 <n$. Construct $g = \phi(f, I_1, m, m^*)$. By Remark~\ref{re:swap}, $g$ is also a solution to $M$. Let $\bar{I}_1=[n-k_2,n-k_1]$.
		By construction of $g$, we have $g_{m}\neq f_{m}$ on $I_1\cup\bar{I}_1$. Moreover, $g_{m}\neq f_{m^*}$ on either $I_2$ or $I_3$ since $\bar{I}_1$ cannot equal both of them. Therefore, the string corresponding to $g_m$ is not same as $f_m,f_{m^*}$ and we have $[H_g]\neq [H_f]$.
		Hence, if there exists at least three maximal intervals between $f_{m}$ and $f_{m^*}$, then $|\mathcal{H}| > 1$.
	\end{proof}
	
	\begin{lemma}\label{notUnique}
		Let $f$ be a solution to $M$ and let $m_1, m_2 \in [2h]$ with $m_1^* \neq m_2$. If there exist at least two maximal intervals between $f_{m_1}$ and $f_{m_2}$, then $|\mathcal{H}| > 1$.
	\end{lemma}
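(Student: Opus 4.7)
My plan is to adapt the swap argument of Lemma~\ref{le:same-str} to the cross-string setting. The key preparatory observation is that, among the at-least-two maximal intervals between $f_{m_1}$ and $f_{m_2}$, at most one can be self-symmetric in the sense that $I=\bar{I}:=[n-k_2,n-k_1]$ for $I=[k_1,k_2]$: indeed $I=\bar I$ is equivalent to $k_1+k_2=n$, which forces $I$ to contain the center $n/2$, and distinct maximal intervals are pairwise disjoint. I therefore pick a maximal interval $I=[k_1,k_2]$ with $I\setminus\bar{I}\neq\emptyset$, and set $g=\phi(f,I,m_1,m_2)$; by Remark~\ref{re:swap}, $g$ is another solution to $M$.

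Let $j_1,j_2\in[h]$ satisfy $\{m_1,m_1^*\}=\{2j_1-1,2j_1\}$ and $\{m_2,m_2^*\}=\{2j_2-1,2j_2\}$; since $m_1^*\neq m_2$, $j_1\neq j_2$. Because the swap alters only the component functions at indices in $\{m_1,m_1^*,m_2,m_2^*\}$, the equality $[H_g]=[H_f]$ is equivalent to the two-element multiset of pairs $\{\{g_{m_1},g_{m_1^*}\},\{g_{m_2},g_{m_2^*}\}\}$ coinciding with $\{\{f_{m_1},f_{m_1^*}\},\{f_{m_2},f_{m_2^*}\}\}$. This splits into two subcases to refute: the pair at string $j_1$ is preserved (and likewise $j_2$), or the two pairs are exchanged.

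In both subcases, $g_{m_1}$ agrees with $f_{m_2}$ on $I$ and with $f_{m_1}$ on $\llb n\rrb\setminus I$, so $g_{m_1}\neq f_{m_1}$ (they disagree on $I$) and $g_{m_1}\neq f_{m_2}$ (they disagree on any other maximal interval, which exists outside $I$ by hypothesis). Hence $g_{m_1}=f_{m_1^*}$ in the preserved subcase and $g_{m_1}=f_{m_2^*}$ in the exchanged subcase, and an analogous deduction determines $g_{m_2}$. Translating these identities through the swap formula yields equalities among $f_{m_1},f_{m_1^*},f_{m_2},f_{m_2^*}$ on $I$ and on $\llb n\rrb\setminus I$. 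Fixing $l\in I\setminus\bar{I}$ so that $n-l\notin I$ and chaining these identities through the constant-weight relation $f_m(l)+f_{m^*}(n-l)=\bar{w}$, both subcases collapse to $f_{m_1}(l)=f_{m_2}(l)$, contradicting $l$ lying in the maximal interval $I$ of discrepancy. Therefore $[H_g]\neq[H_f]$ and $|\mathcal{H}|>1$.

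The main delicacy is ensuring that the swap actually produces a new equivalence class: for a centered (self-symmetric) interval $I=\bar{I}$ the set $I\setminus\bar{I}$ is empty and the constant-weight chaining cannot be anchored at any point $l$, so the argument breaks down. The combinatorial fact that the existence of at least two maximal intervals forces at least one non-self-symmetric choice of $I$ is precisely what makes this approach succeed.
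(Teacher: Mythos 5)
Your proof is correct and follows the same strategy as the paper's: swap $f_{m_1}$ and $f_{m_2}$ on a maximal interval avoiding the center (the paper chooses $I$ with $\{\lfloor n/2\rfloor,\lceil n/2\rceil\}\not\subset I$, you choose the slightly weaker $I\neq\bar{I}$; both exist since distinct maximal intervals are disjoint), and then show the swapped solution $g$ lies in a different equivalence class. The only substantive difference is in how the inequivalence is established: the paper rules out the coincidences $g_{m_1}=f_{m_1^*}$ and $g_{m_2}=f_{m_1^*}$ via the interval-symmetry Propositions~\ref{prop:sym-single} and~\ref{symInterval}, whereas you reduce $[H_g]=[H_f]$ to an equality of the two affected pairs and chain the resulting identities through $f_m(l)+f_{m^*}(n-l)=\bar{w}$ at a point $l\in I\setminus\bar{I}$ to reach the contradiction $f_{m_1}(l)=f_{m_2}(l)$ --- both derivations are valid.
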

	\begin{proof}
		Let $I_1$, $I_2$ be two maximal intervals between $f_{m_1}$ and $f_{m_2}$. Without loss of generality, assume $\{\lfloor n/2 \rfloor, \lceil n/2 \rceil\} \not\subset I_1$. Construct $g = \phi(f, I_1, m_1, m_2)$. By Remark~\ref{re:swap}, $g$ is also a solution to $M$. In the following, we will show that
		\begin{align}
			\{f_{m_1},f_{m_1^*},f_{m_2},f_{m_2^*}\} \neq \{g_{m_1},g_{m_1^*},g_{m_2},g_{m_2^*}\},\label{eq:diff}
		\end{align}
		implying  $|\mathcal{H}|>1$. 
		
		Since $m_1^*\neq m_2$ and $I_1$ is a maximal interval between $f_{m_1}$ and $f_{m_2}$, by construction of $g$, we have $g_{m_1}\neq f_{m_1}$ and $I_1$ is the only maximal interval between $g_{m_1}$ and $f_{m_1}$. We claim $g_{m_1}\neq f_{m_1^*}$ also holds. Indeed, if $g_{m_1}=f_{m_1^*}$, then $I_1$ is the only maximal interval between $f_{m_1^*}$ and $f_{m_1}$. However, since $ I_1\not\supset \{\lfloor \frac{n}{2} \rfloor, \lceil \frac{n}{2} \rceil\}$, according to Proposition~\ref{prop:sym-single}, there are at least two maximal intervals between $f_{m_1^*}$ and $f_{m_1}$, which is a contradiction. Therefore, $g_{m_1}\neq f_{m^*_1}$. So far we have shown
		\begin{align}
			g_{m_1} \neq  f_{m_1},\quad  g_{m_1} \neq  f_{m^*_1}. \label{eq6}
		\end{align}
		By construction of $g$, we have $g_{m_1} \neq f_{m_2}$. If $g_{m_1} \neq f_{m^*_2}$, then \eqref{eq:diff} holds and we are done. 
		
		Consider the case where $g_{m_1} = f_{m^*_2}$. Using arguments similar to those leading to \eqref{eq6}, one can obtain 
		\begin{align}
			g_{m_2} \neq  f_{m_2},\quad  g_{m_2} \neq  f_{m^*_2}. \nonumber
		\end{align}
		By construction of $g$, we also have $g_{m_2} \neq  f_{m_1}$. Next, we would like to show  $g_{m_2}\neq f_{m_1^*}$ for \eqref{eq:diff} to hold. Recall that the set $I_1$ is the only maximal interval between $g_{m_1}$ and $f_{m_1}$. Since $g_{m_1}=f_{m_2^*}$, it follows that $I_1$ is the only maximal interval between $f_{m_2^*}$ and $f_{m_1}$.  Write $I_1 = [k_1, k_2]$. By Proposition~\ref{symInterval}, the set $[n-k_2, n-k_1]$ is a maximal interval between $f_{m_2}$ and $f_{m^*_1}$, and so $f_{m_2}(n-k_1)\neq f_{m^*_1}(n-k_1)$. Since $ I_1\not\supset \{\lfloor \frac{n}{2} \rfloor, \lceil \frac{n}{2} \rceil\}$, we have $I_1\cap [n-k_2,n-k_1]=\emptyset$. By construction of $g$, we have $g_{m_2}(n-k_1)=f_{m_2}(n-k_1)$ and it follows that $g_{m_2}(n-k_1)\neq f_{m^*_1}(n-k_1)$, i.e., $g_{m_2}\neq f_{m_1^*}$. Therefore, \eqref{eq:diff} also holds.
		
		In summary, no matter whether $g_{m_1}$ and $f_{m_2^*}$ are the same or not, \eqref{eq:diff} always holds. It follows that the multisets corresponding to $f,g$ satisfy $[H_f]\neq [H_g]$, and thus $|\mathcal{H}|>1$.
	\end{proof}
	
	The necessity part of Theorem~\ref{UniqueThm} follows from Lemma~\ref{le:same-str} and \ref{notUnique}.

	\subsection{Sufficiency}\label{sec:sufficiency}
	From the above discussion on the necessity, it is not difficulty to see that if $f$ is solution to $M$ such that the conditions in Theorem~\ref{UniqueThm} hold, then any CWF $g$ resulted from a series of the swap operations between $f_1,\ldots,f_{2h}$ satisfies $[H_g] = [H_f]$. Therefore, the sufficiency of the conditions follows if one can further show that any solution to $M$ can be obtained from repeated applications of the swap operation between $f_1,\ldots,f_{2h}$. However, it is in general not obvious to establish such a connection between $f$ and an arbitrary solution to $M$. Thus, we take a different approach to showing the sufficiency.
	{Our main idea is to translate the conditions in Theorem~\ref{UniqueThm} to properties shared by all solutions to $M$ and utilize these properties to establish the sufficiency of the conditions.}

	As mentioned before, the CWF $f$ induced by $h$ strings of length $n$ and weight $\bar{w}$ is determined by the behaviors of the functions $\{f_m\}$ on $\llb\lfloor n/2\rfloor\rrb$ because of the constant weight. Based on the values that the functions $\{f_m\}$ take at $n/2$, i.e., the median weight $\med(f_m)$, the functions $\{f_m\}$ can be formed into groups $A(w),w=0,1/2,1,\ldots,\bar{w}$. In the following, we analyze the behaviors of the functions $\{f_m\}$ according to their membership in these groups. Let us first rephrase the conditions for $f_m,f_{m^*}$ in Theorem~\ref{UniqueThm} using their rotational symmetry.
	
	\begin{prop}\label{prop:f-dual}
		Let $f$ be a solution that satisfies the conditions in Theorem~\ref{UniqueThm}. Then the following holds:
		\begin{enumerate}[label=(\roman*)]
			\item For any $m\in A(\bar{w}/2)$, either $f_m=f_{m^*}$ or there are exactly two maximal intervals between $f_m$ and $f_{m^*}$ and exactly one of the two intervals is contained in $[\lfloor n /2\rfloor ]$.
			\item For any $m\in [2h]\setminus A(\bar{w}/2)$, there is exactly one maximal interval between $f_m$ and $f_{m^*}$.
		\end{enumerate}
	\end{prop}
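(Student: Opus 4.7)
The plan is to leverage the constant-weight rotational symmetry $f_m(l) + f_{m^*}(n-l) = \bar{w}$ from Item~\ref{ite:def:1:3} of Definition~\ref{def:1} in order to determine whether $\lfloor n/2\rfloor$ and $\lceil n/2\rceil$ belong to $D(m,m^*)$, and then combine this information with Proposition~\ref{prop:sym-single} and the bound of at most two maximal intervals given by Theorem~\ref{UniqueThm} to enumerate the possible shapes of the discrepancy. The preliminary step is the equivalence: $m \in A(\bar{w}/2)$ iff $\lfloor n/2\rfloor \notin D(m,m^*)$ iff $\lceil n/2\rceil \notin D(m,m^*)$. This follows by plugging $l=\lfloor n/2\rfloor$ and $l=\lceil n/2\rceil$ into $f_m(l)+f_{m^*}(n-l)=\bar{w}$, whence the identity $f_m(\lfloor n/2\rfloor)+f_m(\lceil n/2\rceil)=\bar{w}$ (which is exactly $\med(f_m)=\bar{w}/2$) is directly equivalent to $f_m(\lceil n/2\rceil)=f_{m^*}(\lceil n/2\rceil)$ and equally to $f_m(\lfloor n/2\rfloor)=f_{m^*}(\lfloor n/2\rfloor)$.

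For part (i), if $m \in A(\bar{w}/2)$ then both $\lfloor n/2\rfloor$ and $\lceil n/2\rceil$ lie outside $D(m,m^*)$, so every maximal interval $[k_1,k_2]$ between $f_m$ and $f_{m^*}$ satisfies either $k_2<\lfloor n/2\rfloor$ or $k_1>\lceil n/2\rceil$. Proposition~\ref{prop:sym-single} then pairs each such interval with a distinct mirror interval in the opposite half. If $D(m,m^*)=\emptyset$, then $f_m=f_{m^*}$; otherwise the pairing produces at least two intervals, and the at-most-two bound from Theorem~\ref{UniqueThm} caps the count at exactly two---one contained in $[\lfloor n/2\rfloor]$ and its mirror contained in $[\lceil n/2\rceil+1,n]$, which is disjoint from $[\lfloor n/2\rfloor]$.

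For part (ii), $m \notin A(\bar{w}/2)$ forces both $\lfloor n/2\rfloor$ and $\lceil n/2\rceil$ into $D(m,m^*)$. Since these two indices are equal (for even $n$) or consecutive (for odd $n$), they necessarily lie inside a common maximal interval $[k_1,k_2]$ with $k_1\leq\lfloor n/2\rfloor$ and $k_2\geq\lceil n/2\rceil$; the third clause of Proposition~\ref{prop:sym-single} then pins down $k_2=n-k_1$, so this interval is symmetric about $n/2$. Any further maximal interval would lie entirely in $[1,k_1-1]$ or $[k_2+1,n]$, whereupon Proposition~\ref{prop:sym-single} would supply its mirror---distinct from both itself and from $[k_1,n-k_1]$---pushing the total to at least three and contradicting Theorem~\ref{UniqueThm}. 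Hence exactly one maximal interval exists. The main care point throughout is the parity-dependent bookkeeping between $\med(f_m)$ and the membership of $\lfloor n/2\rfloor,\lceil n/2\rceil$ in $D(m,m^*)$: for odd $n$ the equivalence must be verified at the two median indices separately, and their consecutivity is precisely what rules out their lying in two different maximal intervals in part (ii).
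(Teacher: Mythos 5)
Your proof is correct and follows essentially the same route as the paper's: both arguments combine the rotational symmetry of Proposition~\ref{prop:sym-single} with the interval-count bounds of Theorem~\ref{UniqueThm}, using the median-weight condition to decide whether the middle indices lie in the discrepancy and hence whether maximal intervals must pair up across the two halves. Your explicit preliminary equivalence ($m\in A(\bar{w}/2)$ iff $\lfloor n/2\rfloor,\lceil n/2\rceil\notin D(m,m^*)$) is used only implicitly in the paper, but the substance of the argument is the same.
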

	\begin{proof}
		As mentioned previously, for any $m\in [2h]$, we have $m\in A(\bar{w}/2)$ if and only if $m^*\in A(\bar{w}/2)$. For any $m\in A(\bar{w}/2)$ with $f_m\neq f_{m^*}$, since $f$ satisfies the conditions in Theorem~\ref{UniqueThm}, there is either one maximal interval or two maximal intervals between $f_m$ and $f_{m^*}$. Since $\med(f_m)=\med(f_{m^*})$, it follows that at least one of the maximal intervals is contained in $[\lfloor n/2 \rfloor]$ or $[\lfloor n/2 \rfloor+1, n]$. Suppose there is only one maximal interval between $f_m$ and $f_{m^*}$. Then the maximal interval is contained in $[\lfloor n/2 \rfloor]$ or $[\lfloor n/2 \rfloor+1, n]$, but by Proposition~\ref{prop:sym-single}, there are two maximal intervals between $f_m$ and $f_{m^*}$, which is a contradiction. So there are exactly two maximal intervals. Now suppose the two intervals are both in $[\lfloor n/2 \rfloor]$ or both in $[\lfloor n/2 \rfloor+1, n]$. Then by Proposition~\ref{prop:sym-single}, there are more than two maximal intervals between $f_m$ and $f_{m^*}$, which is contradiction. It follows that exactly one of the two intervals is contained in $[\lfloor n /2\rfloor ]$.
		
		For any $m\in [2h]\setminus A(\bar{w}/2)$, we have $\med(f_m)\neq \med(f_{m^*})$ so by Proposition~\ref{prop:sym-single} there exists one maximal interval between $f_m$ and $f_{m^*}$ that contains $\{\lfloor n/2\rfloor,\lceil n/2\rceil\}$. Furthermore, if there is another maximal interval contained in $[\lfloor n/2 \rfloor]$ or $[\lfloor n/2 \rfloor+1, n]$, by Proposition~\ref{prop:sym-single} there are at least three maximal intervals between $f_{m}$ and $f_{m^*}$, which is a contradiction to the conditions in Theorem~\ref{UniqueThm}. Therefore, there is exactly one maximal interval between $f_m$ and $f_{m^*}$.
	\end{proof}
	
	Below we introduce two more definitions that are helpful for discussing the behaviors of the functions $\{f_m\}$ in this subsection.
	
	\begin{defn}
		For $m \in [2h]$ and $I \subset \llb n\rrb$, let $\mathcal{G}(f_m,I):=\{(l,f_m(l))\mid l\in I \}$ be the graph of $f_m$ over $I$ and denote $\mathcal{G}(f_m):=\mathcal{G}(f_m,\llb n\rrb)$. 
	\end{defn}
	\begin{defn}\label{def:pts}
		{An element $(l,w) \in \llb n\rrb^2$ is called a branching point if $b_{l,w} > 0$ and $c_{l,w} > 0$. An element $(l,w) \in \llb n \rrb^2$ is called a merging point if $b_{l+1,w} > 0$ and $c_{l+1,w+1} > 0$.}\footnote{The branching and merging points are so named because we would like to visualize the graphs $\{\mathcal{G}(f_m) \mid m \in [2h] \}$ evolving from $l = n$ to $l=0$.}
	\end{defn}
	
	Using Proposition~\ref{prop:f-dual}, we examine the conditions in Theorem~\ref{UniqueThm} in terms of the branching points and merging points on $\{f_m\}$ in a series of lemmas below. Lemma~\ref{branch-merge} first examines the functions $\{f_m\}$ for which $m\in [2h]\setminus A(\bar{w}/2)$.
	
	\begin{lemma}\label{branch-merge}
		{Let $f$ be a solution to $M$ that satisfies the conditions in Theorem~\ref{UniqueThm} and let $m \in [2h] \setminus A(\bar w /2)$. If $(l, w) \in \mathcal{G}(f_{m})$ is a merging point, then there are no branching points in $\mathcal{G}(f_m, [l])$.}
	\end{lemma}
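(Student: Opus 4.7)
The plan is to argue by contradiction. Assume $(l,w)\in\mathcal{G}(f_m)$ is a merging point and $(l',w')\in\mathcal{G}(f_m,[l])$ is a branching point. Because $f_m$ contributes to exactly one of $b_{l+1,w}$ or $c_{l+1,w+1}$ while both are positive, I extract $m_1\in[2h]\setminus\{m\}$ with $f_{m_1}(l)=w$ and $f_{m_1}(l+1)\neq f_m(l+1)$; symmetrically the branching condition yields $m_2\in[2h]\setminus\{m\}$ with $f_{m_2}(l')=w'$ and $f_{m_2}(l'-1)\neq f_m(l'-1)$. Let $J_1=[l+1,\beta_1]$ be the maximal interval between $f_m$ and $f_{m_1}$ containing $l+1$ and $J_2=[\alpha_2,l'-1]$ the maximal interval between $f_m$ and $f_{m_2}$ containing $l'-1$; the endpoints are forced because $f_m$ agrees with the respective $f_{m_i}$ at $l$ and $l'$. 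Note $m_1\neq m_2$, otherwise $J_1=J_2$ would require $l+1\leq l'-1$.

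Outside $J_1$ and $J_2$, each $f_{m_i}$ equals $f_m$, so $f_m=f_{m_1}=f_{m_2}$ on $[l',l]$. However, at $l+1$ we have $f_{m_2}=f_m\neq f_{m_1}$ and at $l'-1$ we have $f_{m_1}=f_m\neq f_{m_2}$, so $f_{m_1}$ and $f_{m_2}$ disagree at both $l+1$ and $l'-1$ while agreeing in between. This produces at least two maximal intervals between $f_{m_1}$ and $f_{m_2}$, one inside $[0,l'-1]$ and one inside $[l+1,n]$. If $m_1\neq m_2^*$, condition~(ii) of Theorem~\ref{UniqueThm} is immediately violated, so it remains to treat the subcase $m_1=m_2^*$.

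For this subcase, Proposition~\ref{prop:f-dual} applied to $m_1$ forces $m_1\in A(\bar w/2)$ with exactly two maximal intervals between $f_{m_1}$ and $f_{m_1^*}=f_{m_2}$, one contained in $[\lfloor n/2\rfloor]$ and the other in $[\lceil n/2\rceil+1,n]$. Combining this with Proposition~\ref{symInterval} makes the two intervals mirror images about $n/2$, giving $l+l'=n$ and $\beta_1=n-\alpha_2$. The constant-weight identity $f_{m_1}(l)+f_{m_1^*}(n-l)=\bar w$ together with $f_{m_2}(l')=w'$ yields $w+w'=\bar w$, from which $f_m(l)=\bar w-w'=f_{m^*}(l)$ and $f_m(l')=f_{m^*}(l')$ follow. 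Hence $l,l'\notin I_m$, and since $I_m$ is required to contain $\lfloor n/2\rfloor$ and $\lceil n/2\rceil$ while $l'\leq\lfloor n/2\rfloor$ and $l\geq\lceil n/2\rceil$, I actually obtain $I_m\subset[l'+1,l-1]$; moreover $I_m\neq\emptyset$ because $m\notin A(\bar w/2)$.

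To close, apply Proposition~\ref{symInterval} to $J_2$: $[l+1,n-\alpha_2]$ is a maximal interval between $f_{m^*}$ and $f_{m_2^*}=f_{m_1}$, and since $m_1\neq m$, condition~(ii) promotes it to the unique such interval, so $D(m^*,m_1)=[l+1,n-\alpha_2]$. Every $x\in I_m$ lies in $[0,l]$, where $f_{m_1}=f_m$, hence $f_{m_1}(x)\neq f_{m^*}(x)$ and $I_m\subset D(m^*,m_1)=[l+1,n-\alpha_2]$. This contradicts $I_m\subset[l'+1,l-1]$ because the two intervals are disjoint while $I_m\neq\emptyset$. The main obstacle is precisely the self-dual subcase $m_1=m_2^*$: direct inequality arguments between $f_{m_1}$ and $f_{m_2}$ do not close it, and the resolution requires transferring the discrepancy between $f_m$ and $f_{m_2}$ into one between $f_{m^*}$ and $f_{m_1}$ via Proposition~\ref{symInterval}, so that it collides with the location of $I_m$ forced by Proposition~\ref{prop:f-dual}.
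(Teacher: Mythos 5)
Your proof is correct and its skeleton matches the paper's: extract a witness $m_1$ from the merging point and $m_2$ from the branching point, exhibit two disjoint maximal intervals, and use condition~(ii) of Theorem~\ref{UniqueThm} to reduce to the dual subcase. Two remarks. First, the step ``outside $J_1$ and $J_2$, each $f_{m_i}$ equals $f_m$'' is doing real work and deserves a sentence: it holds because $J_i$ is the \emph{unique} maximal interval between $f_m$ and $f_{m_i}$ --- by condition~(ii) when $m_i\neq m^*$, and by Proposition~\ref{prop:f-dual}(ii) (this is where the hypothesis $m\notin A(\bar w/2)$ enters) when $m_i=m^*$; the same dichotomy is what makes your terse ``$m_1\neq m_2$'' argument go through. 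The paper sidesteps this by choosing $\tilde m\in\{m,m_1\}$ so that $f_{\tilde m}$ is guaranteed to disagree with $f_{m_2}$ at both $l+1$ and $k-1$, whereas you pin down $f_{m_2}(l+1)=f_m(l+1)$ directly. Second, your handling of the subcase $m_1=m_2^*$ is genuinely different from the paper's and considerably longer than it needs to be: the paper locates a point $l_*$ near $n/2$ where $f_m$ and $f_{m_2}$ must differ (their medians differ) and obtains a second maximal interval between $f_m$ and $f_{m_2}$ with $m_2\neq m^*$; your route through $l+l'=n$, the location of $I_m$, and Proposition~\ref{symInterval} is valid, but once you know $f_m=f_{m_2}$ on $[l',l]$ and $l'\leq\lfloor n/2\rfloor\leq\lceil n/2\rceil\leq l$, you already have $\med(f_m)=\med(f_{m_2})=\bar w/2$, contradicting $m\notin A(\bar w/2)$ in one line. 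So the argument is sound; it just takes the scenic route at the end and leaves one load-bearing uniqueness claim implicit.
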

	
	\begin{proof}
		If $(l, w) \in \mathcal{G}(f_{m})$ is a merging point, there exists $m_1\in[2h]\setminus\{ m\}$ such that $f_m(l+1) \neq f_{m_1}(l+1)$ and $f_m(l) = f_{m_1}(l)$.
		We claim $\mathcal{G}(f_{m}, [l]) = \mathcal{G}(f_{m_1}, [l])$. Indeed, if $\mathcal{G}(f_{m}, [l]) \neq \mathcal{G}(f_{m_1}, [l])$ then there is at least one maximal interval between $f_{m}$ and $f_{m_1}$ contained in $[l-1]$, in addition to the one contained in $[l+1,n]$. Since $f$ satisfies the conditions in Theorem~\ref{UniqueThm}, we must have $m_1=m^*$. However, by Proposition~\ref{prop:f-dual}, if $m_1=m^*$ there should be only one maximal interval between $f_{m_1}$ and $f_{m}$, leading to a contradiction. Hence, $\mathcal{G}(f_{m}, [l]) = \mathcal{G}(f_{m_1}, [l])$.
		
		Suppose $(k,v) \in \mathcal{G}(f_{m},[l])$ is a branching point. Then there exists $m_2 \in [2h] \setminus \{m, m_1\}$ such that $f_{m_2}(k)=f_m(k)$ and $f_{m_2}(k-1)\neq f_m(k-1)$. Since $(l, w) \in \mathcal{G}(f_{m})$ is a merging point and we have $f_{m}(l+1) \neq f_{m_1}(l+1),\mathcal{G}(f_{m}, [l]) = \mathcal{G}(f_{m_1}, [l])$, there must exist $\tilde{m} \in \{m, m_1\}$ such that $f_{\tilde{m}}(k-1)\neq f_{m_2}(k-1)$ and $f_{\tilde{m}}(l+1)\neq f_{m_2}(l+1)$. It follows that there are two maximal intervals between $f_{\tilde{m}}$ and $f_{m_2}$, and therefore, by the conditions in Theorem~\ref{UniqueThm} we have $m_2=\tilde{m}^*$. 
		
		If $\tilde{m}\in [2h]\setminus A(\bar{w}/2)$, then by Proposition~\ref{prop:f-dual} there should be exactly one maximal interval between $f_{\tilde{m}},f_{m_2}$, which is a contradiction. 
		
		If $\tilde{m}\in A(\bar{w}/2)$, then $\tilde{m}=m_1$ and $m_2=m_1^*\in A(\bar{w}/2)$. 
		So the median weights of $f_m$ is different from that of $f_{m_1},f_{m_2}$ and there exists $l_*\in \{\lfloor n/2\rfloor,\lceil n/2\rceil \}$ such that $f_m(l_*)\neq f_{m_1}(l_*),f_m(l_*)\neq f_{m_2}(l_*)$. 
		Since $\mathcal{G}(f_{m}, [l]) = \mathcal{G}(f_{m_1}, [l])$, we have $l< l_*$. It follows that $k< l_*$. So there exists a maximal interval between $f_m,f_{m_2}$ that is contained in $[k,n]$, in addition to the one contained in $[k-1]$. Since $m \in [2h] \setminus A(\bar w /2)$ we have $m_2\neq m^*$, and therefore, by the conditions in Theorem~\ref{UniqueThm} there should be only one maximal intervals between $f_m,f_{m_2}$, which is a contradiction.
		
		Thus, there are no branching points in $\mathcal{G}( f_{m},{[l]})$.
	\end{proof}

	\begin{remark}\label{re:symmetry}
		{One can also verify that if $(l,w) \in \mathcal{G}(f_{m})$ is a branching point, where $m\in[2h]\setminus A(\bar w /2)$}, then $\mathcal{G}(f_{m},{[l, n]})$ has no merging points.
	\end{remark}
	
	The next three lemmas examine the behaviors of $\{f_m\}$ for which $m\in A(\bar{w}/2)$. In particular, the discussion is based on whether $f_m,m\in A(\bar{w}/2)$ are all the same or not.

	\begin{lemma}\label{le:half-same}
		Let $f$ be a solution to $M$ that satisfies the conditions in Theorem~\ref{UniqueThm}. If $f_{m},m\in A(\bar w /2)$ are all the same,
		then there are no branching points in $\mathcal{G}(f_m,[\lfloor n/2\rfloor ])$ for all $m \in A(\bar w /2)$.  
	\end{lemma}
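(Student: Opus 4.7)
The plan is to proceed by contradiction: assume some $m \in A(\bar w/2)$ admits a branching point $(k,v) \in \mathcal{G}(f_m,[\lfloor n/2\rfloor])$ and derive a violation of condition (ii) of Theorem~\ref{UniqueThm}. By Definition~\ref{def:pts} both $b_{k,v}>0$ and $c_{k,v}>0$, so combining this with $f_m(k)=v$ and either $f_m(k-1)=v$ or $f_m(k-1)=v-1$, one can find $m_2\in[2h]\setminus\{m\}$ with $f_{m_2}(k)=f_m(k)$ and $f_{m_2}(k-1)\neq f_m(k-1)$. This forces $k\geq 2$ since $f_{m'}(0)=0$ for every $m'$.

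I would then split on whether $m_2\in A(\bar w/2)$. If $m_2\in A(\bar w/2)$, then the lemma's hypothesis that all $f_{m'}$ with $m'\in A(\bar w/2)$ coincide gives $f_{m_2}=f_m$, contradicting $f_{m_2}(k-1)\neq f_m(k-1)$. So $m_2\notin A(\bar w/2)$; using the observation after Definition~\ref{def:med} that $m'\in A(\bar w/2)$ iff $m'^*\in A(\bar w/2)$, we get $m_2^*\notin A(\bar w/2)$ and, in particular, $m_2^*\neq m$. Consequently condition (ii) of Theorem~\ref{UniqueThm} applies to the pair $(m,m_2)$: at most one maximal interval can exist between $f_m$ and $f_{m_2}$.

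The core of the argument is then to exhibit two disjoint maximal intervals of $D(m,m_2)$, yielding a contradiction. The first is supplied by the branching point: since $f_m(k)=f_{m_2}(k)$ but $f_m(k-1)\neq f_{m_2}(k-1)$, the maximal interval of $D(m,m_2)$ containing $k-1$ is contained in $[k-1]$. The second comes from comparing medians: $\med(f_m)=\bar w/2\neq \med(f_{m_2})$ forces $f_m(\lfloor n/2\rfloor)+f_m(\lceil n/2\rceil)\neq f_{m_2}(\lfloor n/2\rfloor)+f_{m_2}(\lceil n/2\rceil)$, so some $l_*\in\{\lfloor n/2\rfloor,\lceil n/2\rceil\}$ satisfies $f_m(l_*)\neq f_{m_2}(l_*)$. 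Because $k\leq\lfloor n/2\rfloor\leq l_*$ and $f_m(k)=f_{m_2}(k)$, we must have $l_*\geq k+1$, so the maximal interval containing $l_*$ is strictly to the right of $k-1$ and hence disjoint from the first. This produces the required second maximal interval and contradicts condition (ii).

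The main subtle point is the median comparison step: one has to check that $l_*$ cannot collide with $k$ itself (ruled out since $f_m(k)=f_{m_2}(k)$ while $f_m(l_*)\neq f_{m_2}(l_*)$) and that the two intervals really are separate maximal intervals rather than endpoints of the same one (guaranteed by the gap at $l=k$). Everything else is direct bookkeeping from Definitions~\ref{def:compare}, \ref{def:med}, and \ref{def:pts}.
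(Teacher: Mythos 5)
Your proof is correct and follows essentially the same route as the paper's: assume a branching point, use the all-equal hypothesis to force the witness $m_2$ outside $A(\bar w/2)$, and then produce one maximal interval in $[k-1]$ and a second one to the right of $k$ from the median mismatch, contradicting condition (ii) since $m_2\neq m^*$. Your treatment is in fact slightly more careful than the paper's (the $k\geq 2$ check and the explicit disjointness argument), but it is the same proof.
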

	\begin{proof}
		Suppose there exist branching points in $\mathcal{G}(f_m,[\lfloor n/2\rfloor ])$ for some $m\in A(\bar w /2)$ and let $(l_*,w_*)\in \mathcal{G}(f_m,[\lfloor n/2\rfloor ])$ be a branching point. Since $f_{m_1}=f_{m_2}$ for any $m_1,m_2 \in A(\bar w /2)$, there must exist $\tilde{m}\in [2h]\setminus A(\bar w /2)$ such that $f_{\tilde{m}}(l_*)=f_{m}(l_*)$ and $f_{\tilde{m}}(l_*-1)\neq f_{m}(l_*-1)$. Moreover, since $\tilde{m}\notin A(\bar w/ 2)$, we have $f_{\tilde{m}}(l)\neq f_{m}(l)$ for some $l\in [l_*, \lceil n/2\rceil ]$. It follows that there exist two maximal intervals between $f_m$ and $f_{\tilde{m}}$. This is a contradiction to Item~\ref{thm:unique:ite:2} in Theorem~\ref{UniqueThm} by noticing $m^*\neq \tilde{m}$ since $m^*\in A(\bar w/2)$ for all $m\in A(\bar w/2)$.
	\end{proof}
	
	
	If $f_m,m\in A(\bar{w}/2)$ are not all the same, Lemma~\ref{le:half-diff-0} shows that graphs of $f_m$ over $[\lfloor n/2 \rfloor]$ are essentially of two kinds. The proof of Lemma~\ref{le:half-diff-0} is presented in Appendix~\ref{app:half-diff-0}.
	
	\begin{lemma}\label{le:half-diff-0}
		Let $f$ be a solution to $M$ that satisfies the conditions in Theorem~\ref{UniqueThm}. If $f_m,m \in A(\bar w /2)$ are not all the same, then there exists $m_1\in A(\bar{w}/2)$ such that there are exactly two maximal intervals between $f_{m_1}$ and $f_{m_1^*}$, and $f_m, m\in A(\bar{w}/2)\setminus\{m_1,m_1^*\}$ are all the same. Moreover, it holds that $\mathcal{G}(f_m,[\lfloor n/2 \rfloor])=\mathcal{G}(f_{m_1},[\lfloor n/2 \rfloor])$ for all $m\in A(\bar{w}/2)\setminus\{m_1,m_1^*\}$ or $\mathcal{G}(f_m,[\lfloor n/2 \rfloor])=\mathcal{G}(f_{m_1^*},[\lfloor n/2 \rfloor])$ for all $m\in A(\bar{w}/2)\setminus\{m_1,m_1^*\}$.
	\end{lemma}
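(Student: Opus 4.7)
The plan is to count maximal intervals restricted to the left half $[\lfloor n/2\rfloor]$ and use the rotational symmetry to handle the right half. For any $m,m'\in A(\bar{w}/2)$, the constant-weight condition together with item (ii) of Definition \ref{def:1} pins down $f_m(\lfloor n/2\rfloor)$ and $f_m(\lceil n/2\rceil)$ uniquely in terms of $\bar{w}$, so $f_m$ and $f_{m'}$ agree at both middle indices; hence every maximal interval of discrepancy lies strictly inside $[\lfloor n/2\rfloor -1]$ or strictly inside $[\lceil n/2\rceil+1,n]$. Let $\alpha_L(m,m')$ denote the number of maximal intervals in the left region. By Proposition \ref{symInterval}, the count in the right region equals $\alpha_L(m^*,m'^*)$, so the total number of maximal intervals between $f_m$ and $f_{m'}$ is $\alpha_L(m,m')+\alpha_L(m^*,m'^*)$. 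All subsequent inequalities arise by combining this identity with Theorem \ref{UniqueThm}(ii) and Proposition \ref{prop:f-dual}.

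First I show some pair $\{m_1,m_1^*\}\subset A(\bar{w}/2)$ must have $f_{m_1}\neq f_{m_1^*}$. Otherwise, $\alpha_L(m^*,m'^*)=\alpha_L(m,m')$ for any two indices from different pairs, so Theorem \ref{UniqueThm}(ii) reads $2\alpha_L(m,m')\leq 1$. This forces $\mathcal{G}(f_m,[\lfloor n/2\rfloor])=\mathcal{G}(f_{m'},[\lfloor n/2\rfloor])$ and, via the rotational symmetry, equality of $f_m$ and $f_{m'}$ on all of $[n]$, contradicting the hypothesis that not all $f_m$ are equal. Hence such $m_1$ exists, and Proposition \ref{prop:f-dual} gives the required two maximal intervals between $f_{m_1}$ and $f_{m_1^*}$. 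Next I rule out a second such pair. Suppose some $m_2$ with $\{m_2,m_2^*\}\neq\{m_1,m_1^*\}$ also has $f_{m_2}\neq f_{m_2^*}$; Theorem \ref{UniqueThm}(ii) applied to $(f_{m_1},f_{m_2})$ and to $(f_{m_1},f_{m_2^*})$ gives $\alpha_L(m_1,m_2)+\alpha_L(m_1^*,m_2^*)\leq 1$ and $\alpha_L(m_1,m_2^*)+\alpha_L(m_1^*,m_2)\leq 1$. Using $\alpha_L(m_i,m_i^*)=1$ from Proposition \ref{prop:f-dual}, a short case analysis shows that any coincidence between the graphs $\mathcal{G}(f_{m_1},[\lfloor n/2\rfloor])$, $\mathcal{G}(f_{m_1^*},[\lfloor n/2\rfloor])$, $\mathcal{G}(f_{m_2},[\lfloor n/2\rfloor])$, $\mathcal{G}(f_{m_2^*},[\lfloor n/2\rfloor])$ forces one of these sums to be at least $2$; hence the four graphs are pairwise distinct, which gives $\alpha_L(m_1,m_2)\geq 1$ and $\alpha_L(m_1^*,m_2^*)\geq 1$ and contradicts the first inequality. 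So $\{m_1,m_1^*\}$ is the unique pair with internal difference, and every other pair satisfies $f_m=f_{m^*}$.

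For the common alignment statement, let $m\in A(\bar{w}/2)\setminus\{m_1,m_1^*\}$. Since $f_m=f_{m^*}$, we have $\alpha_L(m^*,m_1^*)=\alpha_L(m,m_1^*)$, so Theorem \ref{UniqueThm}(ii) applied to $f_m,f_{m_1}$ reduces to $\alpha_L(m,m_1)+\alpha_L(m,m_1^*)\leq 1$. Because $\mathcal{G}(f_{m_1},[\lfloor n/2\rfloor])\neq\mathcal{G}(f_{m_1^*},[\lfloor n/2\rfloor])$, exactly one of the two summands vanishes, so $\mathcal{G}(f_m,[\lfloor n/2\rfloor])$ matches exactly one of the two sides. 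If indices $m_a,m_b\in A(\bar{w}/2)\setminus\{m_1,m_1^*\}$ were aligned with different sides, applying Theorem \ref{UniqueThm}(ii) to $f_{m_a},f_{m_b}$ and using $f_{m_a}=f_{m_a^*},f_{m_b}=f_{m_b^*}$ collapses the bound to $2\alpha_L(m_a,m_b)\leq 1$, forcing $\mathcal{G}(f_{m_a},[\lfloor n/2\rfloor])=\mathcal{G}(f_{m_b},[\lfloor n/2\rfloor])$, contradicting their different alignments. The main obstacle I foresee is the bookkeeping in the uniqueness step: one must verify that each potential coincidence among the four graphs forces at least one of the two cross-inequalities to exceed $1$, which requires repeatedly invoking the unit gap $\alpha_L(m_i,m_i^*)=1$ to produce lower bounds on the cross counts.
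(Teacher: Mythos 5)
Your proof is correct and rests on the same ingredients as the paper's: the confinement of maximal intervals between functions indexed by $A(\bar{w}/2)$ to the two halves of $[n]$, the left–right correspondence of Proposition~\ref{symInterval}, and the interval-count bounds from Theorem~\ref{UniqueThm} and Proposition~\ref{prop:f-dual}. Your reorganization into the single counting identity (total count $=\alpha_L(m,m')+\alpha_L(m^*,m'^*)$) streamlines the paper's case split on whether the initially chosen pair satisfies $f_{\tilde m}=f_{\tilde m^*}$, and the case analysis you defer in the uniqueness step does check out (each of the four possible coincidences among the left-half graphs forces one of the two cross sums to be at least $2$).
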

	
	Using Lemma~\ref{le:half-diff-0}, we can further deduce the property of the branching points and merging points on $f_m,m\in A(\bar{w}/2)$.
	
	\begin{lemma}\label{le:half-diff}
		Let $f$ be a solution to $M$ that satisfies the conditions in Theorem~\ref{UniqueThm}.	
		If $f_m,m \in A(\bar w /2)$ are not all the same, then there exists $m_1 \in A(\bar w /2)$ such that there is a maximal interval $[l_1+1,l_2-1]\subset [n]$ between $f_{m_1}$ and $f_{m_1^*}$,
		where $l_2\leq \lfloor n/2 \rfloor$. Moreover, $(l_2,f_{m_1}(l_2))$ is the only branching point in $\mathcal{G}(f_{m}, [\lfloor n/2\rfloor ])$ and there is no merging point in $\mathcal{G}(f_{m}, [l_2, \lfloor n/2\rfloor ])$ for all $m\in A(\bar{w}/2)$. 
	\end{lemma}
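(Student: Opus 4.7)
The plan is to first invoke Lemma~\ref{le:half-diff-0} to produce $m_1$ and locate the branching point, then argue its uniqueness and the absence of merging points by case analysis on whether a hypothetical partner function lies in $A(\bar{w}/2)$.

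First I apply Lemma~\ref{le:half-diff-0} to pick $m_1\in A(\bar{w}/2)$ such that there are exactly two maximal intervals between $f_{m_1}$ and $f_{m_1^*}$. By Proposition~\ref{prop:f-dual}, exactly one of these intervals is contained in $[\lfloor n/2\rfloor]$; write it as $[l_1+1,l_2-1]$ with $l_2\leq\lfloor n/2\rfloor$, which yields the first claim. To check that $(l_2,f_{m_1}(l_2))$ is a branching point, note $f_{m_1}(l_2)=f_{m_1^*}(l_2)$ while $f_{m_1}(l_2-1)\neq f_{m_1^*}(l_2-1)$; by Item~\ref{ite:def:1:2} of Definition~\ref{def:1}, each function increments by at most one per step, so at $l_2$ exactly one of $f_{m_1},f_{m_1^*}$ goes up and the other stays, giving $b_{l_2,f_{m_1}(l_2)}\geq 1$ and $c_{l_2,f_{m_1}(l_2)}\geq 1$.

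To show $(l_2,f_{m_1}(l_2))$ is the \emph{only} branching point in $\mathcal{G}(f_m,[\lfloor n/2\rfloor])$ for every $m\in A(\bar{w}/2)$, suppose instead that $(k,v)$ is another branching point on some such $f_m$, witnessed by $\tilde{m}$ with $f_{\tilde{m}}(k)=f_m(k)$ and $f_{\tilde{m}}(k-1)\neq f_m(k-1)$. If $\tilde{m}\in A(\bar{w}/2)$, Lemma~\ref{le:half-diff-0} forces $\mathcal{G}(f_{\tilde{m}},[\lfloor n/2\rfloor])$ and $\mathcal{G}(f_m,[\lfloor n/2\rfloor])$ to be the two distinct curves $\mathcal{G}(f_{m_1},[\lfloor n/2\rfloor])$ and $\mathcal{G}(f_{m_1^*},[\lfloor n/2\rfloor])$, and the only position in $[\lfloor n/2\rfloor]$ where these agree but disagree one step earlier is $k=l_2$. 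If $\tilde{m}\notin A(\bar{w}/2)$, then $\tilde{m}\neq m^*$ and $\med(f_{\tilde{m}})\neq\med(f_m)$ forces $f_{\tilde{m}},f_m$ to differ near $n/2$; together with the disagreement at $k-1$ and agreement at $k$, this produces two disjoint maximal intervals between $f_{\tilde{m}}$ and $f_m$, contradicting Item~\ref{thm:unique:ite:2} of Theorem~\ref{UniqueThm}.

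For the absence of merging points in $\mathcal{G}(f_m,[l_2,\lfloor n/2\rfloor])$, suppose $(l,w)$ is such a merging point with partner $\tilde{m}$ satisfying $f_{\tilde{m}}(l)=f_m(l)$ and $f_{\tilde{m}}(l+1)\neq f_m(l+1)$. Lemma~\ref{le:half-diff-0} together with $f_{m_1}=f_{m_1^*}$ on $[l_2,\lfloor n/2\rfloor]$ shows that all $A(\bar{w}/2)$ functions coincide on $[l_2,\lfloor n/2\rfloor]$, and Item~\ref{ite:def:1:3} of Definition~\ref{def:1} propagates this agreement to $\lfloor n/2\rfloor+1$, so $\tilde{m}\in A(\bar{w}/2)$ is immediately impossible. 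Hence $\tilde{m}\notin A(\bar{w}/2)$, and Item~\ref{thm:unique:ite:2} of Theorem~\ref{UniqueThm} restricts $f_{\tilde{m}},f_m$ to a single maximal interval; this interval must start at $l+1$ (agreement at $l$) and, by the median mismatch, extend past $\lfloor n/2\rfloor$. In particular, $f_{\tilde{m}}$ agrees with $f_m$ on $[0,l]$, so $f_{\tilde{m}}(l_2)=f_{m_1}(l_2)$ and $(l_2,f_{m_1}(l_2))\in\mathcal{G}(f_{\tilde{m}},[l])$ is a branching point. But Lemma~\ref{branch-merge}, applied to $\tilde{m}\notin A(\bar{w}/2)$ at the merging point $(l,w)$, forbids branching points in $\mathcal{G}(f_{\tilde{m}},[l])$, a contradiction. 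The main obstacle is this last step: one has to chain Theorem~\ref{UniqueThm}'s single-interval restriction to propagate the agreement of $f_{\tilde{m}}$ with $f_m$ all the way back to $l_2$, so that the branching point identified in the second paragraph can be re-invoked inside $\mathcal{G}(f_{\tilde{m}})$ and conflict with Lemma~\ref{branch-merge}.
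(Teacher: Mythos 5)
Your proof follows the paper's overall skeleton: Lemma~\ref{le:half-diff-0} together with Proposition~\ref{prop:f-dual} to locate the interval $[l_1+1,l_2-1]$, then contradiction arguments for the uniqueness of the branching point and the absence of merging points. The uniqueness-of-branching step is essentially the paper's argument (the witness of any other branching point must lie outside $A(\bar{w}/2)$, which yields two disjoint maximal intervals against a function of different median weight); you additionally verify explicitly that $(l_2,f_{m_1}(l_2))$ is a branching point, which the paper leaves implicit. Where you genuinely diverge is the final contradiction in the merging-point step: the paper picks a second index $b\in A(\bar{w}/2)$ whose graph on $[l_2]$ differs from that of the witness $f_a$ and exhibits two maximal intervals between $f_a$ and $f_b$, whereas you use the single-maximal-interval condition to force $f_{\tilde{m}}=f_m$ on $[0,l]$, so that $f_{\tilde{m}}$ inherits the branching point at $(l_2,f_{m_1}(l_2))$, and then invoke Lemma~\ref{branch-merge} at the merging point $(l,w)\in\mathcal{G}(f_{\tilde{m}})$ to rule that branching point out. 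This is a clean alternative that reuses Lemma~\ref{branch-merge} rather than re-deriving a two-interval contradiction, and it is correct. One caveat, which you share with the paper: the claim that the merging-point witness must lie outside $A(\bar{w}/2)$ requires the $A(\bar{w}/2)$ functions to agree up to $\lfloor n/2\rfloor+1$, and your ``propagation'' via Item~\ref{ite:def:1:3} (like the paper's tacit assumption) fails in the boundary case where $n$ is even and $l_2=\lfloor n/2\rfloor$, in which $(l_2,\bar{w}/2)$ is itself a merging point witnessed inside $A(\bar{w}/2)$ (e.g., $\bm{t}=0101$); this is an edge case of the lemma statement itself rather than a defect of your argument relative to the paper's.
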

	\begin{proof}
		By Lemma~\ref{le:half-diff-0} and Proposition~\ref{prop:f-dual}, there exists $m_1 \in A(\bar w /2)$ such that there is a maximal interval $[l_1+1,l_2-1]\subset [n]$ between $f_{m_1}$ and $f_{m_1^*}$,
		where $l_2\leq \lfloor n/2 \rfloor$. In addition, by Lemma~\ref{le:half-diff-0}, we have that $(l_1,f_{m_1}(l_1))$ is a merging point in $\mathcal{G}(f_m)$ for all $m\in A(\bar{w}/2)$. In what follows, let $m\in A(\bar{w}/2)$.
		
		Suppose there exists $l_*\in [\lfloor n/2\rfloor],l_*\neq l_2$ such that $(l_*,f_m(l_*))$ is a branching point. By Lemma~\ref{le:half-diff-0}, there exist $a\in [2h]\setminus A(\bar{w}/2)$ such that $f_a(l_*)=f_m(l_*)$ and $f_a(l_*-1)\neq f_m(l_*-1)$. It follows that there are two maximal intervals between $f_a,f_m$: one is contained in $[l_*-1]$ and {the other is contained in $[l_*+1, n]$ (since the median weight of $f_a$ is different from that of $f_m$)}. However, we have $a\neq m^*$, which is a contradiction to the conditions in Theorem~\ref{UniqueThm}. Therefore, $(l_2,f_{m_1}(l_2))$ is the only branching point in $\mathcal{G}(f_{m}, [\lfloor n/2\rfloor ])$.
		
		Suppose there exists $l_*\in[l_2, \lfloor n/2\rfloor ]$ such that $(l_*,f_m(l_*))$ is a merging point. By Lemma~\ref{le:half-diff-0}, there exist $a\in [2h]\setminus A(\bar{w}/2)$ such that $f_a(l_*)=f_m(l_*)$ and $f_a(l_*+1)\neq f_m(l_*+1)$. Since $(l_2,f_{m_1}(l_2))$ is the only branching point in $\mathcal{G}(f_{\tilde{m}}, [\lfloor n/2\rfloor ])$ for all $\tilde{m}\in A(\bar{w}/2)$, it follows from Lemma~\ref{le:half-diff-0} that there exists $b\in A(\bar{w}/2)$ such that $\mathcal{G}(f_b,[l_2])\neq \mathcal{G}(f_a,[l_2])$. Therefore, there exist two maximal interval between $f_a,f_b$: one contained in $[l_2]$ and the other is contained in $[l_*+1,n]$. However, we have $a\neq b^*$, which is a contradiction to the conditions in Theorem~\ref{UniqueThm}. Thus, there is no merging point in $\mathcal{G}(f_{m}, [l_2, \lfloor n/2\rfloor ])$.
	\end{proof}

	So far, we have translated the conditions in Theorem~\ref{UniqueThm} to properties of the branching points and merging points on $\{f_m\}$. The advantage of doing so is that properties of branching points and merging points are shared by all solutions to $M$.	
	Let $f,f'$ be two solutions to $M$. As a result of Remark~\ref{re:M-f} and \ref{re:bc}, $(l,w)\in [n]^2$ is a branching point in $\mathcal{G}(f_m)$ for some $m\in [2h]$ if and only if $(l,w)$ is a branching point in $\mathcal{G}(f'_{m'})$ for some $m'\in [2h]$. In particular, there is no branching point in $\mathcal{G}(f_m,I)$ for $I\subset \llb n\rrb$ if and only if there is no branching point in $\mathcal{G}(f'_{m'},I)$. The same statements hold for merging points.
	In view of this, we can then facilitate the description of the conditions in Theorem~\ref{UniqueThm} in terms of branching points and merging points to establish the sufficiency of the conditions in Theorem~\ref{UniqueThm}.
	
	Let us present two simple propositions that relate $f,f'$ using branching points and merging points.	
	\begin{prop}\label{prop:noBranch-0}
		{Let $f$, $f'$ be two solutions to $M$ and $[l_1, l_2] \subset [ n ]$. If there is no branching point in $\mathcal{G}(f_m,[l_1, l_2])$ and $f_m(l_2) = f'_{m'}(l_2)$ for some $m'\in[2h]$, then for any $l \in [l_1-1, l_2]$ it holds that {$f_m(l) = f'_{m'}(l)$}.} 
	\end{prop}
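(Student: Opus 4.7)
The plan is to prove the statement by a straightforward downward induction on $l$, starting from $l=l_2$ and ending at $l=l_1-1$. The base case $l=l_2$ is given by hypothesis. For the inductive step, I would fix $l\in [l_1,l_2]$, assume $f_m(l)=f'_{m'}(l)=:w$, and argue that $f_m(l-1)=f'_{m'}(l-1)$ using the fact that $(l,w)$ is not a branching point.

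The key observation I would exploit is Remark~\ref{re:bc}: the numbers $b_{l,w}$ and $c_{l,w}$ are properties of $M$, so they coincide whether computed from $f$ or from $f'$. Since $(l,w)$ is not a branching point, by Definition~\ref{def:pts} we have $b_{l,w}=0$ or $c_{l,w}=0$. Using $a_{l,w}=b_{l,w}+c_{l,w}$ from \eqref{eq:sum1}, exactly one of the following holds for \emph{every} solution: either every component function taking value $w$ at $l$ also takes value $w$ at $l-1$ (when $c_{l,w}=0$), or every such component function takes value $w-1$ at $l-1$ (when $b_{l,w}=0$). Applying this uniform behavior to $f_m$ and $f'_{m'}$ (both of which take value $w$ at $l$) forces $f_m(l-1)=f'_{m'}(l-1)$, completing the inductive step.

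I do not expect any real obstacle. The only minor subtlety is the degenerate case $w=0$, where $c_{l,0}=0$ by the convention in Definition~\ref{def:3.5}, so automatically $(l,0)$ is not a branching point and both functions must satisfy $f_m(l-1)=0=f'_{m'}(l-1)$ by Item~\ref{ite:def:1:2} of Definition~\ref{def:1}; this is consistent with the general argument. A clean way to present the proof is to fix $l\in[l_1,l_2]$ and split into the two cases $b_{l,w}=0$ and $c_{l,w}=0$, deducing $f_m(l-1)$ and $f'_{m'}(l-1)$ symmetrically in each case via Definition~\ref{def:3.5} combined with Item~\ref{ite:def:1:2} of Definition~\ref{def:1}.
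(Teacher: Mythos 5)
Your proof is correct and takes essentially the same route as the paper's: both hinge on the fact that $b_{l,w}$ and $c_{l,w}$ are determined by $M$ alone (Remark~\ref{re:bc}), so at a non-branching point the transition from $l$ to $l-1$ is forced uniformly across all solutions. The paper merely packages this as a contradiction at the first point of divergence instead of a downward induction; the content is identical.
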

	
	\begin{proof}
		Suppose there exist $l \in [l_1-1, l_2]$ such that {$f_m(l) \neq f'_{m'}(l)$}. 
		Let $l_*\in [l_1, l_2]$ be such that $f_{m}(l_*)=f'_{m'}(l_*)$ and $f_{m}(l_*-1)\neq f'_{m'}(l_*-1)$. Let $w=f_m(l_*)$. Since $f,f'$ are solutions to $M$, it follows that the number of pairs $(l_*-w,w)$ in $M$ is at least $2$, i.e., $a_{l_*,w}\geq 2$. Moreover, we have $b_{l_*,w}\geq 1,c_{l_*,w}\geq 1$. Thus, by Definition~\ref{def:pts}, $(l_*,w)$ is a branching point in $\mathcal{G}(f_m,[l_1, l_2])$, resulting in a contradiction. 
	\end{proof}
	
	\begin{prop}\label{prop:noBranch}
		Let $f$, $f'$ be two solutions to $M$ and $[l_1, l_2] \subset [ n ]$. If $f_m(l_2) = f'_{m'}(l_2)$ and $f_m(l_1) \neq f'_{m'}(l_1)$, there must be a branching point in $\mathcal{G}(f_m,{[l_1+1, l_2]})$. Similarly, if $f_m(l_2) \neq f'_{m'}(l_2)$ and $f_m(l_1) = f'_{m'}(l_1)$, there must be a merging point in $\mathcal{G}(f_m,{[l_1, l_2-1]})$.
	\end{prop}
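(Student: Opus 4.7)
The plan is to read the first part of Proposition~\ref{prop:noBranch} as essentially the contrapositive of Proposition~\ref{prop:noBranch-0}, and to prove the second (merging) part directly by mirroring the idea used in that earlier proposition, picking the boundary index where $f_m$ and $f'_{m'}$ first disagree as $l$ increases.

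For the first part, I would argue by contradiction. Suppose there is no branching point in $\mathcal{G}(f_m,[l_1+1,l_2])$. Applying Proposition~\ref{prop:noBranch-0} with the interval $[l_1+1,l_2]$ in the role of $[l_1,l_2]$ there, and using the hypothesis $f_m(l_2)=f'_{m'}(l_2)$, we conclude that $f_m(l)=f'_{m'}(l)$ for all $l\in[l_1,l_2]$. In particular $f_m(l_1)=f'_{m'}(l_1)$, contradicting the hypothesis. Hence a branching point in $\mathcal{G}(f_m,[l_1+1,l_2])$ must exist.

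For the second part, I would work directly. Since $f_m(l_1)=f'_{m'}(l_1)$ and $f_m(l_2)\neq f'_{m'}(l_2)$, the set $\{l\in[l_1,l_2-1]:f_m(l)=f'_{m'}(l)\}$ is nonempty (it contains $l_1$) and bounded above; let $l_*$ be its maximum, so $l_*\in[l_1,l_2-1]$, $f_m(l_*)=f'_{m'}(l_*)=:w$, and $f_m(l_*+1)\neq f'_{m'}(l_*+1)$. By Item~\ref{ite:def:1:2} of Definition~\ref{def:1}, each of $f_m(l_*+1),f'_{m'}(l_*+1)$ lies in $\{w,w+1\}$, so one equals $w$ and the other equals $w+1$. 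In either case, one of $m,m'$ witnesses $b_{l_*+1,w}\geq 1$ (via the CWF whose value stays at $w$) and the other witnesses $c_{l_*+1,w+1}\geq 1$ (via the CWF whose value jumps to $w+1$), where the $b$'s and $c$'s are computed from $A_f$ or $A_{f'}$ as appropriate. By Remark~\ref{re:bc}, these numbers depend only on $M$, so the conclusions $b_{l_*+1,w}\geq 1$ and $c_{l_*+1,w+1}\geq 1$ hold unambiguously. By Definition~\ref{def:pts}, $(l_*,w)$ is a merging point, and since $l_*\in[l_1,l_2-1]$ and $f_m(l_*)=w$, it lies in $\mathcal{G}(f_m,[l_1,l_2-1])$.

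The only place that requires care is making sure the counts $b_{l,w},c_{l,w}$ that witness the merging point can be transferred between the two solutions $f$ and $f'$; this is exactly the content of Remark~\ref{re:bc}, which lets us freely combine a contribution from $f_m$ with a contribution from $f'_{m'}$. Beyond that the argument is a direct application of the CWF increment constraint and of Proposition~\ref{prop:noBranch-0}.
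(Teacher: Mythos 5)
Your proof is correct and follows essentially the same route as the paper: the first part is the contrapositive of Proposition~\ref{prop:noBranch-0}, and the second part locates the index where $f_m$ and $f'_{m'}$ last agree, reads off $b_{l_*+1,w}\geq 1$ and $c_{l_*+1,w+1}\geq 1$ from the two increments, and invokes Definition~\ref{def:pts}. Your explicit appeal to Remark~\ref{re:bc} to justify combining witnesses from the two different solutions is a point the paper leaves implicit, but otherwise the arguments coincide.
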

	\begin{proof}
		{The first part of the statement is a direct consequence of Proposition~\ref{prop:noBranch-0}. For the second part, we observe that there exists $l_* \in [l_1+1, l_2]$ such that $f_m(l_*) \neq f'_{m'}(l_*)$ and $f_m(l_*-1) = f'_{m'}(l_*-1)$. Without loss of generality, assume $f_m(l_*) = f_m(l_*-1) = w$ and $ f'_{m'}(l_*) = f'_{m'}(l_*-1) + 1$. These two equations imply $b_{l_*, w} \geq 1$ and $c_{l_*, w+1} \geq 1$, respectively. Thus, by Definition~\ref{def:pts}, $(l_*-1, w)$ is a merging point in  $\mathcal{G}(f_m,{[l_1, l_2-1]})$.}
	\end{proof}
	
	In the next two lemmas, we show that if $f,f'$ are two solutions to $M$ with $f$ satisfying the conditions in Theorem~\ref{UniqueThm}, then the multiset equality $\{f_m\mid m\in[2h]\}=\{f'_m\mid m\in[2h]\}$ must hold, thereby proving the sufficiency of the conditions in Theorem~\ref{UniqueThm} for unique reconstruction.

	\begin{lemma}\label{le:half-bar-w}
		Let $f$, $f'$ be two solutions to $M$ with $f$ satisfying the conditions in Theorem~\ref{UniqueThm}. 
		Let $\psi_1(f)=\{f_m \mid m\in [2h]\setminus A_f(\bar{w}/2) \}$ be a multiset and define $\psi_1(f')$ accordingly. Then $\psi_1(f) = \psi_1(f')$.
	\end{lemma}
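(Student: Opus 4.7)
The plan is to prove $\psi_1(f)=\psi_1(f')$ in three stages: first matching the multiplicities of each median weight, then matching individual trajectories for non-central medians via the branching/merging separation from Lemma~\ref{branch-merge}, and finally extending the agreement to the full domain using the rotational symmetry of Definition~\ref{def:1}\ref{ite:def:1:3}.

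First, I would establish $|A_f(w)|=|A_{f'}(w)|$ for every $w\in\{0,\tfrac12,1,\ldots,\bar{w}\}$. Since $f$ and $f'$ are both solutions to $M$, Proposition~\ref{prop:3.7} and Remark~\ref{re:bc} ensure that the quantities $\{a_{l,w}\}$, $\{b_{l,w}\}$, and $\{c_{l,w}\}$ are the same for both. In particular $b_{\lceil n/2\rceil,v}$ and $c_{\lceil n/2\rceil,v+1}$ fix the multiset of pairs $\{(f_m(\lfloor n/2\rfloor),f_m(\lceil n/2\rceil)):m\in[2h]\}$, so the multiset of medians $\{\med(f_m):m\in[2h]\}$ is the same for $f$ and $f'$, yielding the cardinality equality.

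Second, fix $w^*\neq\bar{w}/2$ and take $m\in A_f(w^*)$ together with some $m'\in A_{f'}(w^*)$, which exists by the first step. Setting $l_0:=\lfloor n/2\rfloor$, the shared median and the monotonicity of CWFs force $f_m(l_0)=f'_{m'}(l_0)$ (and also $f_m(l_0+1)=f'_{m'}(l_0+1)$ when $n$ is odd). Since $m\in[2h]\setminus A_f(\bar{w}/2)$, Lemma~\ref{branch-merge} together with Remark~\ref{re:symmetry} implies that on $\mathcal{G}(f_m)$ every merging position is strictly smaller than every branching position, so at least one of the following must hold: (L) $\mathcal{G}(f_m,[1,l_0])$ contains no branching point, or (R) $\mathcal{G}(f_m,[l_0,n-1])$ contains no merging point. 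In case (L), Proposition~\ref{prop:noBranch-0} propagates the midpoint equality leftward to give $f_m=f'_{m'}$ on $[0,l_0]$; in case (R), iterating the contrapositive of Proposition~\ref{prop:noBranch} gives $f_m=f'_{m'}$ on $[l_0,n]$. Applying the same analysis to $m^*\in[2h]\setminus A_f(\bar{w}/2)$ yields agreement of $f_{m^*}$ with $f'_{(m')^*}$ on one half, which via the symmetry relation $f_m(l)+f_{m^*}(n-l)=\bar{w}$ from Definition~\ref{def:1}\ref{ite:def:1:3} converts into agreement of $f_m$ with $f'_{m'}$ on the complementary half, giving $f_m=f'_{m'}$ on $\llb n\rrb$.

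Finally, a multiplicity check shows that if distinct $m_1,m_2\in A_f(w^*)$ happen to be matched to the same $m'$, then necessarily $f_{m_1}=f'_{m'}=f_{m_2}$, so together with the cardinality equality from the first step one concludes $\{f_m:m\in A_f(w^*)\}=\{f'_{m'}:m'\in A_{f'}(w^*)\}$ as multisets, and taking the multiset union over all $w^*\neq\bar{w}/2$ yields $\psi_1(f)=\psi_1(f')$. The main obstacle I anticipate lies in combining the conclusions obtained for $m$ and for $m^*$: if the direction given by Lemma~\ref{branch-merge} for $f_m$ and the direction for $f_{m^*}$ are ``mismatched'', the two combined agreements may only cover one half of $\llb n\rrb$. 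Resolving this case will require a finer analysis of the positions of the leftmost branching point $R_m$ and the rightmost merging point $L_m$ of $\mathcal{G}(f_m)$ relative to $l_0$ and $\lceil n/2\rceil$, likely exploiting both the rotational symmetry of non-central pairs and the common flow structure $\{a_{l,w},b_{l,w},c_{l,w}\}$ shared by $f$ and $f'$.
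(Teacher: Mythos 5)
Your toolkit is the right one (Lemma~\ref{branch-merge}, Propositions~\ref{prop:noBranch-0} and~\ref{prop:noBranch}, and the $M$-determined data $\{a_{l,w},b_{l,w},c_{l,w}\}$), but the proof has a genuine gap that you flag yourself and do not close, and it is not a removable technicality: anchoring the comparison at the midpoint is the wrong choice. The ``mismatched'' configuration is non-vacuous. Take $n=6$, $\bar w=3$, $H=\{110100,\,110010\}$: the prefix-weight functions $f_1=(0,1,2,2,3,3,3)$ and $f_3=(0,1,2,2,2,3,3)$ both have median $2$, agree on $[0,3]$ and differ at $l=4$; the rightmost merging point of $\mathcal{G}(f_1)$ is $(3,2)$ with $3=\lceil n/2\rceil$, so for $m=1$ only your case (L) holds while for $m^*=2$ only case (R) holds, and the two half-agreements both land on $[0,\lceil n/2\rceil]$, proving nothing on $[\lceil n/2\rceil+1,n]$. (One can check this $f$ satisfies the conditions of Theorem~\ref{UniqueThm}, so the lemma must apply to it.) This example also exposes a second defect in your final step: matching by median weight cannot distinguish two \emph{distinct} component functions sharing the median $w^*$ (here $f_1$ and $f_3$), and ``every $f_m$ equals some $f'_{m'}$'' together with $|A_f(w^*)|=|A_{f'}(w^*)|$ does not force multiset equality --- multiplicities $(2,1)$ versus $(1,2)$ of two distinct functions would pass both tests.

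The paper's proof repairs both problems with a different anchor: it takes $(l_*,w_*)$ to be the \emph{leftmost} branching point of $\mathcal{G}(f_m)$ and chooses $m'$ so that $f'_{m'}$ matches $f_m$ at \emph{both} $l_*$ and $l_*-1$, i.e.\ $m'\in A_{f'}(l_*,w_*)\cap A_{f'}(l_*-1,w_*-r)$. Then Proposition~\ref{prop:noBranch-0} propagates agreement over all of $[0,l_*-1]$ because no branching point lies to the left of $l_*$, and any disagreement on $[l_*,n]$ would by Proposition~\ref{prop:noBranch} create a merging point at a position $\geq l_*$, which by Lemma~\ref{branch-merge} would forbid the branching point at $l_*$; so $f_m=f'_{m'}$ on all of $\llb n\rrb$ with no case split and no appeal to $m^*$ at all. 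The same anchor fixes the multiplicity bookkeeping: the number of indices passing through both $(l_*,w_*)$ and $(l_*-1,w_*-r)$ equals $b_{l_*,w_*}$ or $c_{l_*,w_*}$, a quantity determined by $M$ (Remark~\ref{re:bc}) and hence identical for $f$ and $f'$ (one must still rule out indices of $A_{f'}(\bar w/2)$ passing through both points, which the paper does separately). Any ``finer analysis of $R_m$ and $L_m$'' that rescues your midpoint argument in the mismatched case will essentially have to relocate the anchor to $R_m$ in exactly this way.
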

	\begin{proof}
		Let $m\in S_f:=[2h]\setminus A_f(\bar{w}/2)$. Note that $f_m\neq f_{m^*}$ since their median weights are different. So there are branching points in $\mathcal{G}(f_m)$. Let $({l}_*,{w}_*)$ be the branching point in $\mathcal{G}(f_m)$ such that $l_*\leq l$ for any branching point $(l,w)\in\mathcal{G}(f_m)$. Let 
		\begin{align*}
			r=\begin{cases}
				0 & \text{if } w_* = {f_m(l_*-1)},\\
				1 & \text{if } w_* = {f_m(l_*-1)}+1.
			\end{cases}
		\end{align*} In other words, $r$ is an indicator of the behavior of $f_m$ to the left of the branching point $(l_*,w_*)$. By definition of $r$, we have $m\in S_f\cap A_{f}(l_*, w_*) \cap A_{f}(l_*-1, w_*-r)$. 
		
		Let $S_{f'}=[2h]\setminus A_{f'}(\bar{w}/2)$ and $m'\in S_{f'}\cap A_{f'}(l_*, w_*) \cap A_{f'}(l_*-1, w_*-r)$. In the following we will show $f'_{m'}=f_m$. 
		Since there is no branching point in $\mathcal{G}(f_m,[ l_*-1])$ and $f_m(l_*-1) = f'_{m'}(l_*-1)$, by Proposition~\ref{prop:noBranch-0} we have $f_m(l)=f'_{m'}(l)$ for any $l\in [0, l_*-1]$. 
		Suppose $f_m(l)\neq f'_{m'}(l)$ for some $l\in [l_*, n]$. Then by Proposition~\ref{prop:noBranch}, there is a merging point in $\mathcal{G}(f_{m},{[l_*, l-1]})$. But then by Lemma~\ref{branch-merge} there are no branching points in $\mathcal{G}(f_m,[l_*])$, contradicting that $(l_*,w_*)\in\mathcal{G}(f_m,[l_*])$ is a branching point. Thus, $f_m(l)= f'_{m'}(l)$ for all $l\in [l_*, n]$. It follows that $f_m=f'_{m'}$ for any $m'\in S_{f'}\cap A_{f'}(l_*, w_*) \cap A_{f'}(l_*-1, w_*-r)$.
		
		{{Next, let us show that $S_{f'}\cap A_{f'}(l_*, w_*) \cap A_{f'}(l_*-1, w_*-r)=A_{f'}(l_*, w_*) \cap A_{f'}(l_*-1, w_*-r)$. Toward a contradiction, suppose that} there exists $m_0 \in A_{f'}(\bar{w}/2) \cap A_{f'}(l_*, w_*) \cap A_{f'}(l_*-1, w_*-r)$. Then {$f'_{m_0}(l_*) = f_{m}(l_*)$ and $f'_{m_0}(l_*-1) = f_{m}(l_*-1)$}. Note that $\med(f'_{m_0}) \neq \med(f_m)$. {If $l_*-1\geq \lceil n/2 \rceil$,} by Proposition~\ref{prop:noBranch}, there must be a branching point in {$\mathcal{G}(f_m, [l_*-1])$, contradicting the assumption that} $l_*\leq l$ for any branching point $(l,w) \in \mathcal{G}(f_m)$. {If $l_*-1< \lceil n/2 \rceil$,} there must be a merging point in $\mathcal{G}(f_m, [l_*, \lceil n/2 \rceil])$, {but by Lemma~\ref{branch-merge} there should be no branching points in $\mathcal{G}(f_m, [l_*])$, contradicting that $(l_*,w_*)$ is a branching point}. We thus conclude $A_{f'}(\bar{w}/2) \cap A_{f'}(l_*, w_*) \cap A_{f'}(l_*-1, w_*-r) = \emptyset$ and so $S_{f'} \supset A_{f'}(l_*, w_*) \cap A_{f'}(l_*-1, w_*-r)$.}
		
		Note that for any $m'\in  S_{f'}\setminus( A_{f'}(l_*, w_*) \cap A_{f'}(l_*-1, w_*-r))$, we have $f'_{m'}\neq f_{m}$. So the multiplicity of $f_{m}$ in $\psi_1(f')$ is {$|S_{f'}\cap A_{f'}(l_*, w_*) \cap A_{f'}(l_*-1, w_*-r)|=| A_{f'}(l_*, w_*) \cap A_{f'}(l_*-1, w_*-r)|$}. {Taking $f'=f$, one can repeat the above arguments to show that $f_m=f_{\tilde{m}}$ for any $\tilde{m}\in S_{f}\cap A_{f}(l_*, w_*) \cap A_{f}(l_*-1, w_*-r)$ and the multiplicity of $f_m$ in $\psi_1(f)$ is $| A_{f}(l_*, w_*) \cap A_{f}(l_*-1, w_*-r)|$.}
		
		Since $f,f'$ are solutions to $M$, {$| A_{f}(l_*, w_*) \cap A_{f}(l_*-1, w_*-r)| = | A_{f'}(l_*, w_*) \cap A_{f'}(l_*-1, w_*-r)|$}, i.e., the multiplicity of $f_m$ in $\psi_1(f)$ equals the multiplicity of $f_{m}$ in $\psi_1(f')$. Furthermore, this holds for distinct $f_m\in\psi_1(f)$. Since $|S_f|=|S_{f'}|$, i.e., $|\psi_1(f)|=|\psi_1(f')|$, we obtain $\psi_1(f) = \psi_1(f')$.
	\end{proof}
	
	\begin{lemma}\label{le:half-bar-w-0}
		Let $f$, $f'$ be two solutions to $M$ with $f$ satisfying the conditions in Theorem~\ref{UniqueThm}. 
		Let $\psi_0(f)=\{f_m \mid m\in A_f(\bar{w}/2) \}$ be a multiset and define $\psi_0(f')$ accordingly. Then $\psi_0(f) = \psi_0(f')$.
	\end{lemma}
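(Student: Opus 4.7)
The approach is to reduce the claim to a pointwise statement using Lemma~\ref{le:half-bar-w}, and then to split into cases using Lemmas~\ref{le:half-same} and~\ref{le:half-diff-0}. By Lemma~\ref{le:half-bar-w} one has $\psi_1(f)=\psi_1(f')$, and the total graph-point multiset $\{(l,f_m(l)): m\in[2h],\, l\in[n]\}$ is determined by $M$ (Remark~\ref{re:M-f}); subtracting the $\psi_1$-contribution yields $|A_{f'}(\bar w/2)|=|A_f(\bar w/2)|$ and, for every $l\in[n]$, the pointwise multiset equality $\{f_m(l):m\in A_f(\bar w/2)\}=\{f'_{m'}(l):m'\in A_{f'}(\bar w/2)\}$.

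If all $f_m, m\in A_f(\bar w/2)$, coincide with a common $g$ as in Lemma~\ref{le:half-same}, then the pointwise multiset is $g(l)$ repeated $|A_f(\bar w/2)|$ times, so every $f'_{m'}$ with $m'\in A_{f'}(\bar w/2)$ must equal $g$, and hence $\psi_0(f')=\psi_0(f)$.

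Otherwise, Lemma~\ref{le:half-diff-0} furnishes $m_1\in A_f(\bar w/2)$ with $g_1:=f_{m_1}\neq g_2:=f_{m_1^*}$ and a common value $g$ for the remaining $f_m$, and Lemma~\ref{le:half-diff} furnishes a maximal interval of discrepancy $[l_1+1,l_2-1]\subset[\lfloor n/2\rfloor]$ between $g_1$ and $g_2$; its right counterpart $[n-l_2+1,n-l_1-1]$ comes from Proposition~\ref{prop:sym-single}, and Lemma~\ref{le:half-diff-0} combined with the constant-weight symmetry allows us to assume $g\equiv g_1$ on $\llb\lfloor n/2\rfloor\rrb$, hence $g\equiv g_2$ on $[\lceil n/2\rceil,n]$, with $g_1(l)>g_2(l)$ on the left interval. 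The pointwise value multiset at every $l\in[l_1+1,l_2-1]$ therefore consists of $g_1(l)$ with multiplicity $|A_f(\bar w/2)|-1$ and $g_2(l)$ with multiplicity one. The plan is to induct on $l$ starting from $l_1$ (where all $f'_{m'}$ agree with $g(l_1)$) to establish that the unique minority index $m'_*\in A_{f'}(\bar w/2)$ realizing $g_2(l)$ is independent of $l\in[l_1+1,l_2-1]$: the inductive step uses the $0/1$-step property of CWFs and the fact that whenever $g_2$ increments at $l+1$ one has $g_1(l+1)\geq g_2(l)+2$, which forbids any majority index from dropping to $g_2(l+1)$ while simultaneously preventing $f'_{m'_*}$ from jumping up to $g_1(l+1)$.

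Running the same induction on the right maximal interval produces a unique minority index $m'_{**}\in A_{f'}(\bar w/2)$ with $f'_{m'_{**}}$ matching $g_1$ there. The constant-weight identity $f'_{m'_*}(l)+f'_{(m'_*)^*}(n-l)=\bar w$ together with $g_1(n-l)=\bar w-g_2(l)$ shows that $(m'_*)^*=m'_{**}$, and since $m\mapsto m^*$ has no fixed points we conclude $m'_*\neq m'_{**}$. Combining this with the fact that all $f'_{m'}$ already agree with $g$ outside the two intervals, we obtain $f'_{m'_*}\equiv g_2$, $f'_{m'_{**}}\equiv g_1$, and $f'_{m'}\equiv g$ for every other $m'\in A_{f'}(\bar w/2)$, yielding $\psi_0(f')=\psi_0(f)$. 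The main obstacle is preserving the identity of the minority index across the entire left interval, handled by the $0/1$-step case analysis above; ruling out the hybrid possibility $m'_*=m'_{**}$ at the end is easier and follows from the fixed-point-freeness of the $*$-involution.
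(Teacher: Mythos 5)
Your proof is correct, but it takes a genuinely different route from the paper's in the main case. The paper proves this lemma by re-running the branching/merging-point machinery: it invokes Lemmas~\ref{le:half-same} and~\ref{le:half-diff} to locate the unique leftmost branching point of the $A(\bar{w}/2)$-component functions, then uses Propositions~\ref{prop:noBranch-0} and~\ref{prop:noBranch} to force agreement between $f_m$ and a suitably chosen $f'_{m'}$, exactly mirroring the proof of Lemma~\ref{le:half-bar-w}. You instead exploit Lemma~\ref{le:half-bar-w} itself: since $\psi_1(f)=\psi_1(f')$ and $|A_f(l,w)|=a_{l,w}=|A_{f'}(l,w)|$ for all $(l,w)$, subtracting gives the pointwise value-multiset equality $\{f_m(l):m\in A_f(\bar{w}/2)\}=\{f'_{m'}(l):m'\in A_{f'}(\bar{w}/2)\}$ for every $l$, which disposes of the all-equal case instantly (no need for Lemma~\ref{le:half-same}) and, combined with the structural description from Lemma~\ref{le:half-diff-0}, reduces the remaining case to tracking a single minority index across one maximal interval via the $0/1$-step property. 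I checked the delicate step: when $g_2$ increments, $g_1(l+1)>g_2(l+1)=g_2(l)+1$ forces $g_1(l+1)\geq g_2(l)+2$, so the minority index cannot reach the majority value and no majority index can reach the minority value; the identification $(m'_*)^*=m'_{**}$ via the constant-weight identity then correctly rules out the hybrid assignment. What each approach buys: the paper's argument is uniform with Lemma~\ref{le:half-bar-w} and stays entirely within the branching/merging vocabulary that is manifestly determined by $M$; yours is more self-contained (it needs only Lemma~\ref{le:half-diff-0} and Proposition~\ref{prop:f-dual} from the structural lemmas, not Lemmas~\ref{le:half-same} or~\ref{le:half-diff}), at the price of a more hands-on induction and an explicit dependence on Lemma~\ref{le:half-bar-w} as a prerequisite rather than a parallel result.
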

	
	The idea of the proof for Lemma~\ref{le:half-bar-w-0} is similar that for Lemma~\ref{le:half-bar-w}, whereas it relies on Lemma~\ref{le:half-same} and \ref{le:half-diff} instead of Lemma~\ref{branch-merge}. The complete proof is given in Appendix~\ref{app:half-bar-w-0}.
	
	It follows from Lemma~\ref{le:half-bar-w} and \ref{le:half-bar-w-0} that the conditions in Theorem~\ref{UniqueThm} are sufficient for unique reconstruction.
	
	\section{Reconstruction algorithms}\label{sec:algorithm}
	
	As before, we assume in this section that $M$ is the prefix-suffix compositions of the multiset $H$ of $h$ strings of length $n$ and weight $\bar{w}$. We present two algorithms that produce multisets of strings compatible with $M$. Both algorithms first construct CWFs and then find the corresponding multisets as in Definition~\ref{def:2}. The algorithm in Section~\ref{sec:alg1} is a greedy algorithm that outputs a single multiset compatible with $M$ with running time $O(nh)$. The algorithm in Section~\ref{sec:alg2} is able to output all compatible multisets up to reversal. Its running time is in general exponential as it relies on a breadth-first search to find all possible CWFs and solving a number of integer partition problems.
	
	\subsection{An algorithm that outputs a multiset of strings compatible with $M$}\label{sec:alg1} 
	To construct a multiset compatible with $M$, it suffices to find a CWF $f$ that is a solution to $M$. In Algorithm~\ref{alg:one} we construct such a CWF by assigning larger $w$ to $f_{2k-1}(l),k\in [h]$ in a greedy way as $l$ goes from $n$ to $0$.
	
	\begin{algorithm}
		\caption{Algorithm for getting one multiset of strings compatible with $M$}\label{alg:one}
		\begin{algorithmic}[1]
			\renewcommand{\algorithmicrequire}{\textbf{Input:}}
			\renewcommand{\algorithmicensure}{\textbf{Output:}}
			\Require A prefix-suffix compositions multiset $M$, the number of strings $h$, the length of the strings $n$, and the weight of the strings $\bar{w}$. 
			\Ensure  A multiset of $h$ strings of length $n$ and weight $\bar{w}$.
			\LComment{Initialization:}
			\State $a_{0,0}\gets2h$
			\State For $(l,w)\in \llb n\rrb^2\setminus\{0,0\}$, let $a_{l,w}$ be the number of pairs $(l-w,w)$ in {$M$}.
			\LComment{Construct a CWF $f$:}
			\State $f\gets \emptyset$
			\ForAll{$m\in[2h]$}
			\State $f(0,m)\gets 0$ \label{alg:line:f0}
			\State $f(n,m)\gets \bar{w}$ \label{alg:line:fm}
			\EndFor
			\For{$l=n-1$ to $1$}
			\For{$w=l$ to $1$}
			\State $s_{l,w+1} \gets \sum_{v=w+1}^{l} a_{l, v}$ 
			\ForAll{$k\in [1+s_{l,w+1}, \min\{a_{l,w}+s_{l,w+1},h\}]$}\label{alg:line:k}
			\State $f(l,2k-1)=w$ \label{alg:line:f}
			\State $f(n-l,2k)=\bar{w}-w$ \label{alg:line:co}
			\EndFor
			\EndFor
			\EndFor
			\LComment{Construct the multiset of strings corresponding to $f$:}
			\State $H_f \gets \{\bm{t}_j={t}_{j,1}\ldots{t}_{j,n} \mid t_{j,l} = f(l,2j-1)-f(l-1,2j-1)\text{ for all } l \in [n], j \in [h]\}$
			\State \Return ${H}_f$ 
		\end{algorithmic} 
	\end{algorithm}
	By Remark~\ref{re:M-f}, Algorithm~\ref{alg:one} produces a multiset of strings compatible with $M$ if the function $f$ constructed in the algorithm is a CWF and $|A_f(l,w)|=a_{l,w}$ for $(l,w)\in \llb n\rrb$.
	
	\begin{claim}\label{cl:one-cwf}
		The function $f$ constructed in Algorithm~\ref{alg:one} is a CWF.
	\end{claim}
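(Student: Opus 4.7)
The plan is to verify the three items of Definition~\ref{def:1} for the function $f$ produced by Algorithm~\ref{alg:one}. Item~\ref{ite:def:1:1} is immediate from Line~\ref{alg:line:f0}. Item~\ref{ite:def:1:3} (with $w_j=\bar{w}$ for every $j$) follows from the paired assignment in Lines~\ref{alg:line:f}--\ref{alg:line:co}: whenever the algorithm writes $f(l,2k-1)=w$ it simultaneously writes $f(n-l,2k)=\bar{w}-w$, and combined with the boundary initializations at $l=0$ and $l=n$ this yields $f(l,2j-1)+f(n-l,2j)=\bar{w}$ for every admissible pair. Entries $(l,2k-1)$ that the inner loops never touch are naturally read as $0$, with the paired $(n-l,2k)$ entry equal to $\bar{w}$; this reading is consistent across the whole grid.

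The bulk of the work is Item~\ref{ite:def:1:2}. I first treat odd-indexed components $m=2k-1$ and derive the even case from Item~\ref{ite:def:1:3}. Writing $s_{l,u}:=\sum_{v=u}^{l}a_{l,v}$, the inner double loop in Lines~\ref{alg:line:k}--\ref{alg:line:f} assigns $f(l,2k-1)=w$ precisely when $s_{l,w+1}<k\leq s_{l,w}$ (subject to $k\leq h$). Monotonicity $f(l,2k-1)-f(l-1,2k-1)\in\{0,1\}$ then reduces to showing that $s_{l,w+1}<k\leq s_{l,w}$ forces $k\in(s_{l-1,w+1},\,s_{l-1,w-1}]$, which partitions (using that $s_{l-1,\cdot}$ is non-increasing) into $(s_{l-1,w+1},s_{l-1,w}]\cup(s_{l-1,w},s_{l-1,w-1}]$ corresponding to $f(l-1,2k-1)\in\{w,w-1\}$. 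The two required inequalities $s_{l-1,w+1}\leq s_{l,w+1}$ and $s_{l,w}\leq s_{l-1,w-1}$ are supplied directly by Remark~\ref{re:bc}: the first is \eqref{eq:vertical} evaluated at $w+1$ and the second is \eqref{eq:horizontal} evaluated at $w-1$.

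The subtlety I expect to be the main obstacle is the cap $\min\{\,\cdot\,,h\}$ in Line~\ref{alg:line:k} together with entries for which $k>s_{l,1}$ that the inner loops never assign. For the cap, once $f(l,2k-1)=w$ is actually assigned we have $k\leq h$, and the partition argument still locates $f(l-1,2k-1)$ in $\{w,w-1\}$ after intersecting the $k$-ranges with $[h]$. For implicitly-zero entries, \eqref{eq:vertical} at $w=1$ yields $s_{l-1,1}\leq s_{l,1}<k$, so $f(l-1,2k-1)=0$ as well; symmetrically \eqref{eq:horizontal} at $w=1$ gives $s_{l+1,2}\leq s_{l,1}<k$, so $f(l+1,2k-1)\leq 1$, preserving monotonicity on the other side. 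The boundary columns $l=1$ and $l=n$ are handled using the explicit initializations in Lines~\ref{alg:line:f0} and~\ref{alg:line:fm}. Finally, for even-indexed components $m=2k$, Item~\ref{ite:def:1:3} gives $f(l,2k)=\bar{w}-f(n-l,2k-1)$, so $f(l,2k)-f(l-1,2k)=f(n-l+1,2k-1)-f(n-l,2k-1)\in\{0,1\}$ follows from the odd-indexed case, completing the verification.
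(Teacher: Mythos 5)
Your proof is correct and follows essentially the same route as the paper's: both verify the three items of Definition~\ref{def:1} directly, with Item~\ref{ite:def:1:2} reduced to the containment of the $k$-range $(s_{l,w+1},s_{l,w}]$ in $(s_{l-1,w+1},s_{l-1,w-1}]$ supplied by \eqref{eq:horizontal} and \eqref{eq:vertical}. Your explicit treatment of the entries the inner loops never touch (reading them as $0$, resp.\ $\bar{w}$, which is needed because the loop over $w$ stops at $1$ and $s_{l,1}=2h-a_{l,0}$ can fall below $h$) is in fact more careful than the paper's argument, which asserts without justification that every $k\in[h]$ is covered by some interval $[1+s_{l,w+1},s_{l,w}]$ with $w\geq 1$.
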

	\begin{proof}
		Let us first show that $f$ is a mapping from $T$ to $\llb n\rrb$.
		Noticing Line~\ref{alg:line:f0} and \ref{alg:line:fm} in the algorithm, it suffices to show that for each $l\in [n-1],k\in [h]$, there exists $w\in[l]$ such that $1+s_{l,w+1} \leq k \leq a_{l,w}+s_{l,w+1}$. Note that $s_{l,w+1}$ is a non-increasing function of $w$. So as $w$ decreases from $l$ to $1$, $s_{l,w+1}$ increases from $0$ to at most $2h$. Moreover, $a_{l,w}+s_{l,w+1}=s_{l,w}$. Therefore, Line~\ref{alg:line:f} and \ref{alg:line:co} are well defined for each $l\in [n-1],k\in [h]$. It remains to show $f$ as constructed satisfies the conditions required in Definition~\ref{def:1}. Clearly, Item~\ref{ite:def:1:1} in the definition is satisfied according to Line~\ref{alg:line:f0} and Item~\ref{ite:def:1:3} is satisfied by Line~\ref{alg:line:f0}, \ref{alg:line:fm} and \ref{alg:line:co}.  
		
		As for Item~\ref{ite:def:1:2} in Definition~\ref{def:1}, it suffices to show that if $f(l,2k-1)=w$ then $f(l-1,2k-1)\in\{w,w-1\}$. According to Line~\ref{alg:line:k}, it suffices to show if $k\in [1+s_{l,w+1}, \min\{s_{l,w},h\}]$ then $k\in [1+s_{l-1,w+1}, \min\{s_{l-1,w},h\}]\cup [1+s_{l-1,w}, \min\{s_{l-1,w-1},h\}]=[1+s_{l-1,w+1},\min\{s_{l-1,w-1},h\}]$.
		From \eqref{eq:vertical}, we have 
		\begin{align*}
			s_{l,w+1}=\sum_{v=w+1}^{l} a_{l,v} \geq \sum_{v=w+1}^{l-1} a_{l-1,v}=s_{l-1,w+1}.
		\end{align*}
		At the same time, from \eqref{eq:horizontal}, we have 
		\begin{align*}
			s_{l-1,w-1}=\sum_{v=w-1}^{l-1} a_{l-1,v}\geq \sum_{v=w}^{l} a_{l,v}\geq s_{l,w}.
		\end{align*}
		Therefore, $[1+s_{l,w+1}, \min\{s_{l,w},h\}] \subset [1+s_{l-1,w+1},\min\{s_{l-1,w-1},h\}]$.
	\end{proof}

	Next, we would like to show that $|A_f(l,w)|=a_{l,w}$. Before that, let us make some simple observations.
	Since $M$ is the prefix-suffix compositions of $h$ strings of length $n$, for each $l\in [n]$ the number of prefixes and suffixes of length $l$ with weights in $\llb l\rrb$ is equal to $2h$. Moreover, since the $h$ strings are of the same weight $\bar{w}$, for each $l\in [n]$ and $w\in \llb \bar{w}\rrb$, the number of prefixes and suffixes of length $l$ with weight $w$ is the same as the number of suffixes and prefixes of length $n-l$ with weight $\bar{w}-w$. These observations extend to the case where $l=0$ since $a_{0,0}=2h$ by Definition~\ref{def:num-a}. Thus, we have the following proposition.
	
	\begin{prop}\label{prop:p2}
		\begin{enumerate}[label=(\roman*)]
			\item $\sum_{w=0}^l a_{l,w} = 2h$ for all $l \in \llb n \rrb$. \label{prop:p2:ite:1}
			\item If $M$ is the prefix-suffix composition of strings with constant weight, then $a_{l,w}=a_{n-l, \bar w - w}$ for all $l \in \llb n \rrb$ and $w\in \llb \bar{w}\rrb$. \label{prop:p2:ite:co}
		\end{enumerate}
	\end{prop}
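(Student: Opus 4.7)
The plan is to prove both items by elementary counting directly from Definition~\ref{def:num-a} together with the structure of $M=M(H)$, where $H=\{\bm{h}_1,\ldots,\bm{h}_h\}$.

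For item~\ref{prop:p2:ite:1}, I would first dispose of the case $l=0$: by Definition~\ref{def:num-a}, $a_{0,0}=2h$, and since $w\in\llb 0\rrb=\{0\}$, the sum on the left-hand side reduces to $a_{0,0}=2h$. For $l\in [n]$, by Definition~\ref{def:num-a} the quantity $a_{l,w}$ is exactly the number of occurrences of the pair $(l-w,w)$ in the multiset $M$. Summing over $w\in\llb l\rrb$ therefore counts every pair in $M$ whose two coordinates sum to $l$, i.e., every length-$l$ prefix composition and every length-$l$ suffix composition across all strings in $H$. Since each of the $h$ strings contributes exactly one length-$l$ prefix and one length-$l$ suffix, the total is $2h$.

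For item~\ref{prop:p2:ite:co}, the key observation is the identity $\wt(\bm{h}_j[l])+\wt(\bm{h}_j[-(n-l)])=\wt(\bm{h}_j)=\bar{w}$ for every $j\in [h]$, which holds because each $\bm{h}_j$ has constant weight $\bar{w}$. Thus, for any fixed $l\in [n-1]$ and $w\in\llb\bar{w}\rrb$, taking the complementary length induces a bijection between the length-$l$ prefixes (resp.\ suffixes) of weight $w$ and the length-$(n-l)$ suffixes (resp.\ prefixes) of weight $\bar{w}-w$, so $a_{l,w}=a_{n-l,\bar{w}-w}$. For the boundary cases $l\in\{0,n\}$: on one side $a_{0,0}=2h$ by definition, and on the other side $a_{n,\bar{w}}$ counts the pairs $(n-\bar{w},\bar{w})$ in $M$, of which there are exactly $2h$ (one length-$n$ prefix and one length-$n$ suffix per string, each of composition $(n-\bar{w},\bar{w})$ by the constant-weight assumption). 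Hence $a_{0,0}=a_{n,\bar{w}}$, and by symmetry $a_{n,0}=a_{0,\bar{w}}$ is handled the same way (both are zero unless $\bar{w}=0$ or $\bar{w}=n$, and in those degenerate cases the counts match trivially).

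There is really no obstacle here: the proposition is a direct formal consequence of the definitions and the constant-weight hypothesis, so the proof is just careful bookkeeping of the boundary value $(0,0)$ and the pairing prefix$\leftrightarrow$suffix of the same string under $l\mapsto n-l$.
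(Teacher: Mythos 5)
Your proof is correct and follows essentially the same route as the paper, which establishes this proposition by the same direct counting argument (each of the $h$ strings contributes one length-$l$ prefix and one length-$l$ suffix, and the constant-weight identity $\wt(\bm{h}_j[l])+\wt(\cev{\bm{h}}_j[n-l])=\bar{w}$ pairs length-$l$ prefixes of weight $w$ with length-$(n-l)$ suffixes of weight $\bar{w}-w$), together with the convention $a_{0,0}=2h$ for the boundary case. Your treatment of the endpoints $l\in\{0,n\}$ is slightly more explicit than the paper's, but the substance is identical.
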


	\begin{claim}\label{cl:one-A}
		For $(l,w)\in \llb n\rrb^2$, it holds that $|A_f(l,w)|=a_{l,w}$.
	\end{claim}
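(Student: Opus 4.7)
The plan is to decompose $|A_f(l,w)| = \alpha_{l,w} + \beta_{l,w}$, where $\alpha_{l,w} := |\{k \in [h] : f(l,2k-1) = w\}|$ collects the odd-index contributions and $\beta_{l,w} := |\{k \in [h] : f(l,2k) = w\}|$ collects the even-index contributions. The boundary cases $l \in \{0, n\}$ are immediate from the initializations on Lines~\ref{alg:line:f0}--\ref{alg:line:fm}: $|A_f(0,0)| = 2h = a_{0,0}$, while $|A_f(n, \bar{w})| = 2h$ equals $a_{n, \bar w}$ by the constant-weight assumption together with \ref{prop:p2:ite:1} of Proposition~\ref{prop:p2}, and all other pairs $(0, w)$ and $(n, w)$ give zero on both sides.

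For $l \in [1, n-1]$ and $w \in [1, l]$, reading off Lines~\ref{alg:line:k}--\ref{alg:line:f} yields
\[
\alpha_{l,w} = |[1 + s_{l, w+1},\, \min\{s_{l,w}, h\}]| = \max\{0,\, \min\{a_{l, w},\, h - s_{l, w+1}\}\},
\]
using $s_{l, w} - s_{l, w+1} = a_{l, w}$. The paired assignment on Line~\ref{alg:line:co} means $\beta_{l,w}$ is supplied entirely by the step processing position $n - l$ with value $\bar w - w$, so
\[
\beta_{l,w} = \max\{0,\, \min\{a_{n-l, \bar w - w},\, h - s_{n-l, \bar w - w + 1}\}\}.
\]

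The crux is the algebraic identity
\[
s_{l, w+1} + s_{n-l, \bar w - w + 1} = 2h - a_{l, w},
\]
which I will derive by substituting $u = \bar w - v$ in the second sum, applying \ref{prop:p2:ite:co} of Proposition~\ref{prop:p2} to rewrite $a_{n-l, v} = a_{l, \bar w - v}$, and then invoking \ref{prop:p2:ite:1} of the same proposition; here one also uses that the constant-weight constraint forces $a_{l, u} = 0$ outside $[\max\{0,\bar w-(n-l)\}, \min\{l, \bar w\}]$. Setting $x := h - s_{l, w+1}$ and using $a_{n-l, \bar w - w} = a_{l, w}$, the identity recasts $\beta_{l,w}$ as $\max\{0, \min\{a_{l, w}, a_{l, w} - x\}\}$, so that $\alpha_{l,w} + \beta_{l,w} = \max\{0, \min\{a_{l, w}, x\}\} + \max\{0, \min\{a_{l, w}, a_{l, w} - x\}\}$. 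A three-way case split on $x \le 0$, $0 \le x \le a_{l, w}$, and $x \ge a_{l, w}$ shows the sum is always exactly $a_{l, w}$.

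The boundary weights $w \in \{0, \bar w\}$ for $l \in [1, n-1]$ are not visited by the inner loop over $w \in [1, l]$ and require separate bookkeeping: the indices $k \in [s_{l, 1} + 1, h]$ unassigned by the loop correspond to $f(l, 2k - 1) = 0$ and $f(n - l, 2k) = \bar w$, and combining these with the contribution from the step $(n - l, \bar w)$ (when counting $w = 0$) and from the step $(l, \bar w)$ (when counting $w = \bar w$) recovers $a_{l, 0}$ and $a_{l, \bar w}$ by the same identity. The principal obstacle I anticipate is the tidy case analysis for the $\min/\max$ expressions and the boundary accounting required to confirm that the initializations, loop assignments, and unassigned entries fit together coherently under the constant-weight constraint.
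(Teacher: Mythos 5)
Your proof is correct and follows essentially the same route as the paper's: both decompose $A_f(l,w)$ into the odd-label contribution from Line~\ref{alg:line:f} and the even-label contribution from Line~\ref{alg:line:co}, and both reduce the count to the relation $s_{l,w+1}+s_{n-l,\bar w-w+1}=2h-a_{l,w}$ obtained from the two items of Proposition~\ref{prop:p2}. The only difference is organizational: the paper splits on a threshold weight $w_*$ with $s_{l,w_*+1}<h\leq s_{l,w_*}$, whereas you use uniform $\min$/$\max$ formulas and a case split on $x=h-s_{l,w+1}$, which amounts to the same three regimes.
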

	\begin{proof}
		From Line~\ref{alg:line:f0} and \ref{alg:line:fm} in Algorithm~\ref{alg:one}, we have $|A_f(0,0)|=|A_f(n,\bar{w})|=2h=a_{0,0}=a_{n,\bar{w}}$. 
		In what follows, let $l\in[n-1]$. It is clear that $|A_f(l,w)|=0=a_{l,w}$ for all $w> l$.
		Since $s_{l,w+1}$ increases from $0$ to at most $2h$ as $w$ goes from $l$ to $1$, there exists $w_*\in[l]$ such that $s_{l,w_*+1}<h$ and $s_{l,w_*}\geq h$. From Line~\ref{alg:line:k} and \ref{alg:line:f} in the algorithm, for each $w \in [w_*+1, l]$ we have 
		\begin{align*}
			A_f(l,w)=\{2k-1\mid 1+s_{l,w+1}\leq k \leq a_{l,w}+s_{l,w+1} \}.
		\end{align*}
		Therefore, for all $l\in[n-1]$ and $w\in [w_*+1,l]$ we have $|A_f(l,w)|=a_{l,w}$.
		
		Consider the case where $w\in [0,w_*-1]$. Since $s_{l,w+1}\geq s_{l,w_*}\geq h$, we have 
		\begin{align}
			h&\geq 2h-s_{l,w+1}\nonumber\\
			&= 2h-\sum_{v=w+1}^{\bar{w}}a_{l,v}\nonumber\\
			&= 2h-\sum_{v=w+1}^{\bar{w}}a_{n-l,\bar{w}-v}\label{eq:co}\\
			&= 2h-\sum_{v=0}^{\bar{w}-w-1}a_{n-l,v}\nonumber\\
			&= \sum_{v=0}^{n-l}a_{n-l,v}-\sum_{v=0}^{\bar{w}-w-1}a_{n-l,v}\label{eq:sum}\\
			&= \sum_{v=\bar{w}-w}^{n-l}a_{n-l,v}\nonumber\\
			&= s_{n-l,\bar{w}-w},\nonumber
		\end{align} where \eqref{eq:co} follows by \ref{prop:p2:ite:co} in  Proposition~\ref{prop:p2} and \eqref{eq:sum} follows by Proposition~\ref{prop:p2:ite:1} in Proposition~\ref{prop:p2}.
		From Line~\ref{alg:line:k} and \ref{alg:line:co}, for each $w \in [0,w_*-1]$ we have 
		\begin{align*}
			A_f(l,w)=\{2k\mid 1+s_{n-l,\bar{w}-w+1}\leq k \leq a_{n-l,\bar{w}-w}+s_{n-l,\bar{w}-w+1} \}.
		\end{align*}
		Therefore, for all $l\in[n-1]$ and $w\in [0,w_*-1]$ we have $|A_f(l,w)|=a_{n-l,\bar{w}-w}=a_{l,w}$
		
		Lastly, note that $a_{l,w_*}+s_{l,w_*+1}\geq h$, and similarly to the above calculations that lead to $h\geq s_{n-l,\bar{w}-w}$ for $w\in[0,w_*-1]$, one can also obtain $h<2h-s_{l,w_*+1}=a_{n-l,\bar{w}-w_*}+s_{n-l,\bar{w}-w_*+1}$. Then from Line~\ref{alg:line:k}, \ref{alg:line:f}, and \ref{alg:line:co} we have
		\begin{align*}
			A_f(l,w_*)=\{2k-1\mid 1+s_{l,w_*+1}\leq k \leq h \}\cup \{2k\mid 1+s_{n-l,\bar{w}-w_*+1}\leq k \leq h \}.
		\end{align*}
		Therefore, for all $l\in[n-1]$ we have 
		\begin{align}
			|A_f(l,w_*)|&=h-s_{l,w_*+1}+h-s_{n-l,\bar{w}-w_*+1}\nonumber\\
			&= 2h -\sum_{v=w_*+1}^{\bar{w}}a_{l,v}-\sum_{v=\bar{w}-w_*+1}^{\bar{w}}a_{n-l,v}\nonumber\\
			&= 2h -\sum_{v=w_*+1}^{\bar{w}}a_{l,v}-\sum_{v=\bar{w}-w_*+1}^{\bar{w}}a_{l,\bar{w}-v}\label{eq:cos}\\
			&= 2h -\sum_{v=w_*+1}^{\bar{w}}a_{l,v}-\sum_{v=0}^{w_*-1}a_{l,v}\nonumber\\
			&= a_{l,w_*},\label{eq:sums}
		\end{align} where \eqref{eq:cos} follows by \ref{prop:p2:ite:co} in  Proposition~\ref{prop:p2} and \eqref{eq:sums} follows by Proposition~\ref{prop:p2:ite:1} in Proposition~\ref{prop:p2}.
		Hence, for all $l\in[n-1]$ and $w\in[l]$, we have $|A_f(l,w)|=a_{l,w}$.
	\end{proof}
	
	As a consequence of Claim~\ref{cl:one-cwf} and $\ref{cl:one-A}$, we have the following theorem.
	\begin{theorem}
		The output of Algorithm~\ref{alg:one} is a multiset of strings compatible with $M$.
	\end{theorem}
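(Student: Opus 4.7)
The plan is to combine the two preceding claims via the characterization of solutions given in Remark~\ref{re:M-f}. First, I would note that Claim~\ref{cl:one-cwf} already establishes that the function $f$ built in Algorithm~\ref{alg:one} satisfies the three defining properties of a CWF: boundary values $f(0,m)=0$ are set explicitly in Line~\ref{alg:line:f0}, the constant-weight symmetry $f(l,2j-1)+f(n-l,2j)=\bar{w}$ is enforced by the paired assignment in Lines~\ref{alg:line:f} and \ref{alg:line:co}, and monotone increments follow from the nesting $[1+s_{l,w+1},\min\{s_{l,w},h\}]\subset[1+s_{l-1,w+1},\min\{s_{l-1,w-1},h\}]$ deduced from inequalities \eqref{eq:horizontal} and \eqref{eq:vertical}. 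So $f\colon T\to\llb n\rrb$ is a bona fide CWF in the sense of Definition~\ref{def:1}.

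Next I would invoke Claim~\ref{cl:one-A}, which gives $|A_f(l,w)|=a_{l,w}$ for every $(l,w)\in\llb n\rrb^2$. By Remark~\ref{re:M-f}, this multiset equality is exactly the defining condition for $f$ to be a solution to $M$ in the sense of Definition~\ref{def:3}. Hence $f$ is a CWF solution to $M$.

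Finally, the multiset $H_f$ returned by the algorithm is precisely the multiset of strings associated with the CWF $f$ as in Definition~\ref{def:2} (the output line of the algorithm uses the same formula $t_{j,l}=f(l,2j-1)-f(l-1,2j-1)$). Since $f$ is a solution to $M$, the discussion following Definition~\ref{def:2} shows that $M(H_f)=M$, which by definition means $H_f$ is compatible with $M$. Thus the proof is a short two-line chain: apply Claim~\ref{cl:one-cwf}, then Claim~\ref{cl:one-A} together with Remark~\ref{re:M-f}, and finally read off compatibility from Definition~\ref{def:2}. There is no real obstacle here since all the technical work has been absorbed into the two preceding claims; the only thing to be careful about is to state explicitly that the ``solution to $M$'' property, which is phrased in terms of the level sets $|A_f(l,w)|$, is equivalent to $M(H_f)=M$ via Remark~\ref{re:M-f}.
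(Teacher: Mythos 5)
Your proposal is correct and follows exactly the paper's route: the paper states this theorem as an immediate consequence of Claim~\ref{cl:one-cwf} and Claim~\ref{cl:one-A}, having already noted before those claims that, by Remark~\ref{re:M-f}, the algorithm's output is compatible with $M$ once $f$ is a CWF with $|A_f(l,w)|=a_{l,w}$. Your only addition is spelling out the link between the level-set condition and $M(H_f)=M$, which the paper leaves implicit.
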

	
	Algorithm~\ref{alg:one} is an efficient algorithm with time complexity $O(nh)$, although it can only produce one multiset compatible with $M$. So it may not be helpful if one desires all compatible multisets. Nevertheless, let us mention one important application of Algorithm~\ref{alg:one}. In Theorem~\ref{UniqueThm}, the necessary and sufficient conditions for unique reconstruction given the prefix-suffix compositions $M$ are described in terms of a CWF rather than $M$ itself. Therefore, to determine the unique reconstructibility of $M$ using Theorem~\ref{UniqueThm}, it is necessary that one should be able to come up with a CWF solution to $M$. Algorithm~\ref{alg:one} does exactly what is needed for this purpose.
	
	Moreover, when one has a CWF $f$ solution to $M$ at hand, in view of Lemma~\ref{le:same-str} and \ref{notUnique}, it is tempting to use the swap operation as defined in Definition~\ref{def:swap} to enumerate all possible compatible multisets up to reversal. However, it is in general not easy to keep track of the swap operations. In the next subsection, we take a different route to constructing all compatible multisets by utilizing the inherent symmetry of the constant-length constant-weight strings, bypassing the difficulty brought about by the complexity of swap operations.

	\subsection{An algorithm that outputs all multisets of strings compatible with $M$}\label{sec:alg2}
	As mentioned before, to find a multiset of strings compatible with $M$, one may plot the elements of the multiset $M$ on the two dimensional grid $T$ and construct a CWF $f$ such that it passes each point $(l,w)$ exactly $a_{l,w}$ times on the grid. Moreover, one may infer the behavior of the component functions $\{f_m\}$ from the numbers $\{b_{l,w}\},\{c_{l,w}\}$. Therefore, to obtain all possible multisets of strings (up to reversal) that are compatible with $M$, one may examine all possible behaviors of $\{f_{m}\}$ based on $\{a_{l,w}\},\{b_{l,w}\},\{c_{l,w}\}$.
	
	Since all the $h$ strings that give rise to $M$ have the same length $n$ and the same weight $\bar{w}$, the graph of the component function $f_m$ is the same as that of $f_{m^*}$ when $f_m$ is rotated $180$ degrees around $(n/2,\bar{w}/2)$. As a result of this rotational symmetry, given the values of $f_m(l)$ for all $m\in[2h]$ and $l\in[0,\lfloor n/2\rfloor]$, the remaining values of $f_{m}(l),l\in[\lfloor n/2 \rfloor+1, n]$ can be fully determined for all $m\in[2h]$. Thus, it suffices to reconstruct all possible $\{f_m\}$ from the mid point $n/2$ to $0$, and then extend them from $n/2$ to $n$. However, there is one catch. The reason why such extension is possible is that $f_{m},f_{m^*}$ capture the running weight starting from the two ends of a single string. But for functions $g_{i},g_{j}\colon \llb\lfloor n/2\rfloor\rrb\to \llb \bar{w}\rrb,i\neq j$ reconstructed from $M$, it is in general not clear whether $g_{i},g_{j}$ capture the weight information of the same string. Nevertheless, by the rotational symmetry, $g_{i}$ and $g_j$ capture the weight information of the same string only if their median weights sum to $\bar{w}$ when they are extended. Therefore, one may identify $h$ pairs of functions from the $2h$ component functions $\{g_i\}$ reconstructed from $M$ such that the sum of median weights within each pair is $\bar{w}$. With the identification of such pairs, the resulting CWF formed by $\{g_i\}$ corresponds to a multiset of strings compatible with $M$. Thus, to obtain all compatible multisets, one needs to enumerate all possible ways of forming pairs that satisfy the median weight constraint.
	
	Based on the above discussion, our algorithm of constructing all multisets of strings compatible with $M$ are divided into two stages. In the first stage, which we call the scan stage, all possible ``half strings'' are generated based on $M$. In the second stage, which we call the assembly stage, pairs of ``half strings'' are combined to form ``full strings''. The details of the two stages are described below. For ease of discussion, below the subscript of the component functions will referred to as the label.
	
	\subsubsection*{Scan stage}	
	In the scan stage, we keep track of the behaviors of the component functions from the mid point $n/2$ to $0$. Consider the case where $n$ is even. For each $w\in\llb \bar{w}\rrb$, $a_{{n/2},w}$ indicates the number of component functions that evaluate to $w$ at $n/2$. Moreover, we have $\sum_{w=0}^{\bar{w}}a_{{n/2},w}=2h$. As there are $2h$ component functions, we may partition the $2h$ labels into disjoint subsets of sizes $a_{n/2,w}, w\in\llb \bar{w}\rrb$. If $n$ is odd, $a_{n/2,w}$ is undefined, but the behavior of the component functions at $n/2$ can be determined by $b_{\lceil n/2\rceil,w},c_{\lceil n/2 \rceil,w}$. Since $\sum_{w=0}^{\bar{w}}(b_{\lceil n/2\rceil,w}+c_{\lceil n/2 \rceil,w})=2h$, we can partition the $2h$ labels into disjoint subsets of sizes $b_{\lceil n/2\rceil,w},c_{\lceil n/2 \rceil,w}, w\in\llb \bar{w}\rrb$. More precisely, as the first step of the scan stage, we
	construct a collection $\{P(t/2)\mid t=0,\ldots, 2\bar{w}\}$ of disjoint subsets of $[2h]$ such that $\bigcup_{t=0}^{2\bar{w}}P(t/2)=[2h]$ and that if $n$ is even,
	\begin{equation}
		\begin{cases}
			|P(t/2)| = a_{n/2,t/2}, & \text{$t$ is even},\\
			|P(t/2)| = 0, & \text{$t$ is odd};
		\end{cases}\label{eq:scan-init-even}
	\end{equation}	
	if $n$ is odd,
	\begin{equation}
		\begin{cases}
			|P(t/2)| = b_{\lceil n/2\rceil,t/2}, &  \text{$t$ is even},\\
			|P(t/2)| = c_{\lceil n/2\rceil,\lceil t/2\rceil}, &  \text{$t$ is odd}.
		\end{cases}\label{eq:scan-init-odd}
	\end{equation}
	
	Observe that the elements in the set $P(t/2)$ are the labels of the component functions whose median weight equals $t/2$. So we basically reconstruct the values of $2h$ component functions at $n/2$ by constructing the collection $\{P(t/2)\}$. Given the values of the component functions at $n/2$, we reconstruct their values at $l\leq n/2$ according to $\{b_{l,w}\},\{c_{l,w}\}$ as $l$ goes from $\lfloor n/2\rfloor$ to $0$. Specifically, we keep track of the labels of the component functions as we assign values to the component functions at $l=\lfloor n/2\rfloor, \ldots, 0$ according to $\{b_{l,w}\},\{c_{l,w}\}$, and obtain finer partitions of the $2h$ labels as $l$ goes to $0$. The bookkeeping of the partitions is done by a function $F$ that maps each $(l,w)\in \llb \lfloor n/2\rfloor\rrb \times \llb \bar{w}\rrb $ to a collection of disjoint subsets of the $2h$ labels. The labels in the these disjoint subsets correspond to component functions that evaluate to $w$ at $l$. Moreover, the subsets in $F(l,w),w\in\llb \bar{w}\rrb$ are all disjoint, and we have $\bigcup_{w=0}^{\bar{w}}\bigcup_{J\in F(l,w)}J=[2h]$ for  $l\leq n/2$. The construction of $F$ is described below. 
	
	By construction of $\{P(t/2)\}$, the component functions that evaluate to $\bar{w}$ at $\lfloor n/2\rfloor$ are those with labels in $P(\bar{w})$, and for each $w\in\llb \bar{w}-1\rrb$ the component functions that evaluate to $w$ at $\lfloor n/2\rfloor$ are those with labels in $P(w)$ and $P(w+1/2)$. Therefore, 
	\begin{align*}
		F(\lfloor n/2\rfloor,\bar{w})&=\{P(\bar{w})\},\\
		F(\lfloor n/2\rfloor,w)&=\{P(w),P(w+1/2)\},\quad w\in\llb\bar{w}-1\rrb.
	\end{align*}
	As the value of a component function at $l-1$ may remain the same as or decrease by one from the value at $l$, given $F(l,w),w\in\llb\bar{w}\rrb$, we can further partition each subset in $F(l,w),w\in\llb\bar{w}\rrb$ into two subsets of sizes $b_{l,w},c_{l,w}$ for $l=\lfloor n/2\rfloor,\ldots, 0$. Eventually, we obtain the set $F(0,0)$ in which every element is a subset of the $2h$ labels for which the corresponding component functions have exactly the same values at $l=0,\ldots,\lfloor n/2\rfloor$. Moreover, component functions with labels in different element in $F(0,0)$ are not equal. At this point, the behaviors of the $2h$ component functions are determined over $\llb\lfloor n/2\rfloor\rrb$. In particular, one can define $2h$ component functions $\{g_m\}$ over $\llb\lfloor n/2\rfloor\rrb$ to be 
	\[
	g_{m}(l)= w \text{ if } m \in {\bigcup_{J \in F(l, w)} J}.
	\] 	
	
	Note that there are different ways of partitioning subsets in $F(l,w),w\in\llb\bar{w}\rrb$ and each of them leads to a distinct $F$. However, we are only interested in those $F$'s that result in distinct ``half strings'', i.e., distinct multiset $\{g_m\}$. In other words, we only care about the numbers of labels for which the corresponding component functions are the same over $\llb \lfloor n/2\rfloor\rrb$. In fact, this is the reason why we only stipulate the size of the subsets in the initial partition $\{P(t/2)\}$. In order to construct all possible $F$, each of which leads to a distinct multiset $\{g_m\}$, we need to enumerate different ways of partitioning subsets in $F(l,w),w\in\llb\bar{w}\rrb$. 
	This is accomplished as follows. Let $q=|F(l,w)|$ and write $F(l,w)=\{J_1,\ldots,J_q\}$. Further, let $K_i\subset J_i$ be the labels for which the corresponding component functions have value equal to $w-1$ at $l-1$. Denote $|K_i|$ by $x_i$. Since $c_{l,w}$ is the number of component functions that have values equal to $w-1$ at $l-1$ and have values equal to $w$ at $l$, we have 
	\begin{align}
		\sum_{i=1}^{q}x_i=c_{l,w}.\label{eq:partition}
	\end{align}
	Every solution to \eqref{eq:partition} such that $x_i\in \llb |P_i| \rrb$ gives rise to a distinct partition of the subsets in $F(l,w)$. By enumerating all possible solutions to \eqref{eq:partition} for every $l \in [\lfloor n/2\rfloor]$ and $w\in\llb \bar{w}\rrb$, we are able to find the set $\mathcal{F}$ of all possible $F$ that leads to distinct $\{g_m\}$ via breadth-first search. The scan stage is formally stated in Algorithm~\ref{alg:all-1}. 
	
	As a consequence of the scan stage, we obtain a set of all possible ``half strings'' from $M$ in the sense of the following claim.
	
	\begin{claim}\label{cl:scan}
		Let $\{\bm{t}_{j}\mid j\in[h]\}$ be a multiset of strings compatible with $M$ and define the multiset of length-$\lfloor n/2\rfloor$ prefixes and suffixes of $\bm{t}_{j},j\in[h]$ to be $S=\{\bm{s}_{2j-1}=\bm{t}_j[\lfloor n/2\rfloor],\bm{s}_{2j}=\cev{\bm{t}_{j}}[\lfloor n/2\rfloor]\mid j\in[h] \}$. Let $S'$ be the underlying set of $S$, i.e., $S'$ is the set of distinct strings in $S$.
		Then there exists $F\in\mathcal{F}$ output by Algorithm~\ref{alg:all-1} such that there is a bijection between $F(0,0)$ and $S'$ that maps $J\in F(0,0)$ to $\bm{s}$ with $|J|=|\{m\mid \bm{s}_m=\bm{s},\bm{s}_m\in S,m\in[2h]\}|$. 
		
		In other words, there exists $F\in\mathcal{F}$ such that every element $J$ in $F(0,0)$ can be identified with a distinct string $\bm{s}$ in $S$ whose multiplicity in $S$ equals $|J|$.
	\end{claim}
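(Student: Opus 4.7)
The plan is to construct a specific $F$ in $\mathcal{F}$ by running Algorithm~\ref{alg:all-1} with all nondeterministic choices resolved according to the CWF $f$ induced by the given compatible multiset $\{\bm{t}_j\mid j\in[h]\}$. Namely, let $f$ be defined by $f(l,2j-1)=\wt(\bm{t}_j[l])$ and $f(l,2j)=\wt(\cev{\bm{t}_j}[l])$, so that the component function $f_m$ records the running weight of the string $\bm{s}_m$ for every $m\in[2h]$, and in particular $f_m$ and $f_{m'}$ agree on $\llb\lfloor n/2\rfloor\rrb$ if and only if $\bm{s}_m=\bm{s}_{m'}$. Since $f$ is a solution to $M$, Remark~\ref{re:M-f} gives $|A_f(l,w)|=a_{l,w}$ for all $(l,w)$, and Remark~\ref{re:bc} ensures that $b_{l,w}=|A_f(l,w)\cap A_f(l-1,w)|$ and $c_{l,w}=|A_f(l,w)\cap A_f(l-1,w-1)|$ coincide with the $M$-derived quantities used by Algorithm~\ref{alg:all-1}.

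Next, I would initialize the partition by setting $P(t/2)=\{m\in[2h]\mid \med(f_m)=t/2\}$ and verify that these sizes satisfy \eqref{eq:scan-init-even} when $n$ is even and \eqref{eq:scan-init-odd} when $n$ is odd; this is immediate from the identification of $\med(f_m)$ with values of $f_m$ at $n/2$ (or the pair $\lfloor n/2\rfloor,\lceil n/2\rceil$). Then, processing $l$ from $\lfloor n/2\rfloor$ down to $1$, for every $w\in\llb\bar w\rrb$ and every block $J_i\in F(l,w)$, I would choose $K_i:=\{m\in J_i\mid f_m(l-1)=w-1\}$ as the sub-block that ``decreases''. The key consistency check is that this choice is a legal solution to \eqref{eq:partition}: clearly $|K_i|\in\llb |J_i|\rrb$, and summing over $i$ yields
\[
\sum_{i=1}^{q}|K_i|=|\{m\in[2h]\mid f_m(l)=w,\ f_m(l-1)=w-1\}|=|A_f(l,w)\cap A_f(l-1,w-1)|=c_{l,w},
\]
where the second equality uses the invariance of the blocks in $F(l,w)$ under the partition built so far. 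Hence every step of the breadth-first search admits the choice dictated by $f$, and the resulting $F$ belongs to $\mathcal{F}$.

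Finally, I would verify the bijection. By construction, two labels $m,m'$ lie in the same block of $F(l,w)$ precisely when $f_m(k)=f_{m'}(k)$ for all $k\in[l,\lfloor n/2\rfloor]$, so at the end of the scan every $J\in F(0,0)$ is exactly the set of labels $m$ for which the tuple $(f_m(0),f_m(1),\dots,f_m(\lfloor n/2\rfloor))$ is some fixed sequence; this sequence determines, and is determined by, a unique string in $S'$, namely the common value of $\bm{s}_m$ for $m\in J$. The map $J\mapsto\bm{s}$ is therefore well-defined, injective (different blocks give different weight profiles, hence different half-strings), and surjective onto $S'$ (every half-string appearing in $S$ is the profile of its associated labels). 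Moreover $|J|$ equals the number of indices $m$ with $\bm{s}_m=\bm{s}$, i.e.\ the multiplicity of $\bm{s}$ in $S$. The main obstacle lies in the consistency check of the second paragraph — ensuring that the $f$-guided partition is admissible for the algorithm — but this reduces cleanly to the identity $c_{l,w}=|A_f(l,w)\cap A_f(l-1,w-1)|$ provided by Remark~\ref{re:bc}.
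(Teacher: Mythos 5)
Your proposal is correct and follows essentially the same route as the paper: both resolve the nondeterminism of Algorithm~\ref{alg:all-1} according to the CWF $f$ induced by the given compatible multiset, use $|A_f(l,w)\cap A_f(l-1,w-1)|=c_{l,w}$ to certify that the $f$-guided splits are admissible size tuples, and identify the blocks of $F(0,0)$ with the distinct weight profiles over $\llb\lfloor n/2\rfloor\rrb$, i.e., with $S'$. The paper phrases this via an auxiliary $\tilde F$ matched to some $F\in\mathcal{F}$ by a size-preserving bijection (since the algorithm only fixes block sizes, not which labels go where), but this is a presentational difference, not a different argument.
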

	\begin{proof}
		Let $f$ be the CWF induced by $\{\bm{t}_{j}\mid j\in[h]\}$ with $f_{2j-1}$ being induced by the running weight of $\bm{t}_{j}$ and $f_{2j}$ by the running weight of $\cev{\bm{t}_{j}}$. Then $f$ is a solution to $M$.
		Let us construct a function $\tilde{F}$ that maps each $(l,w)\in \llb \lfloor n/2\rfloor\rrb \times \llb \bar{w}\rrb $ to a collection of disjoint subsets of $[2h]$ dependent on $f$.
		Given $M$, we can construct a collection $\{P(t/2)\mid t\in \llb 2\bar{w}\rrb \}$ of disjoint subsets of $[2h]$ such that $\bigcup_{t=0}^{2\bar{w}}P(t/2)=[2h]$ and that satisfies \eqref{eq:scan-init-even} or \eqref{eq:scan-init-odd} based on the parity of $n$. Furthermore, there exists a permutation on $[2h]$ such that $P(t/2)$ is formed by $m\in[2h]$ for which $\med(f_m)=t/2$. Define 
		\begin{align*}
			\tilde{F}(\lfloor n/2\rfloor,\bar{w})&=\{P(\bar{w})\},\\
			\tilde{F}(\lfloor n/2\rfloor,w)&=\{P(w),P(w+1/2)\},\quad w\in\llb\bar{w}-1\rrb.
		\end{align*}
		For $l\in[ \lfloor n/2\rfloor ]$, define
		\begin{align*}
			\tilde{F}(l-1,\bar{w})&=\{A_f(l-1,\bar{w})\cap J\mid J\in\tilde{F}(l,\bar{w})\},\\
			\tilde{F}(l-1,w)&=\{A_f(l-1,w)\cap J\mid J\in\tilde{F}(l,w)\cup \tilde{F}(l,w+1)\},\quad w\in\llb\bar{w}-1\rrb,
		\end{align*} where $A_f(l,w)$ is as given in Definition~\ref{def:5}. It follows that $\bigcup_{\tilde{J}\in\tilde{F}(l,w)}\tilde{J}=A_f(l,w)$ for $l\in\llb \lfloor n/2\rfloor\rrb, w\in\llb \bar{w}\rrb$. In addition, $m,m'\in [2h]$ are in the same set $\tilde{J}\in \tilde{F}(l,w)$ if and only if the component functions $f_m$ and $f_{m'}$ have the same graph over $[l,\lfloor n/2\rfloor ]$. Therefore, $|\tilde{F}(0,0)|$ equals the number of distinct graphs over $\llb \lfloor n/2\rfloor\rrb$ of $f_m,m\in[2h]$, i.e., $|\tilde{F}(0,0)|=|S'|$. Furthermore, there is a bijection between $\tilde{F}(0,0)$ and $S'$ that maps $\tilde{J}\in \tilde{F}(0,0)$ to $\bm{s}\in S'$ with $\tilde{J}=\{m\mid \bm{s}_m=\bm{s},\bm{s}_m\in S,m\in [2h]\}$. 
		
		The set $\mathcal{F}$ output by Algorithm~\ref{alg:all-1} is the set of bookkeeping functions $F$ that keep track of all admissible behaviors of the component functions given $M$. Moreover, every element in $F(0,0)$ is a subset of the $2h$ labels for which the corresponding component functions have the same graph over $\llb \lfloor n/2\rfloor\rrb$. The construction of $P(t/2)$ in Line~\ref{alg:l:P} and $K_i$ in Line~\ref{alg:l:K} in Algorithm~\ref{alg:all-1} is oblivious of which labels in $[2h]$ to choose but dependent on the admissible sizes of the sets. Since the size of $P(t/2)$ must satisfy \eqref{eq:scan-init-even}, \eqref{eq:scan-init-odd} and the set $X$ constructed on Line~\ref{alg:l:X} enumerates all admissible sizes for $K_i$, there exists $F\in\mathcal{F}$ such that $|F(0,0)|=|\tilde{F}(0,0)|$ and a bijection between $F(0,0)$ and $\tilde{F}(0,0)$ that maps $J\in F(0,0)$ to $\tilde{J}\in \tilde{F}(0,0)$ with $|J|=|\tilde{J}|$.
	\end{proof}
	
	\begin{algorithm}
		\caption{Scan stage}\label{alg:all-1}
		\begin{algorithmic}[1]
			\renewcommand{\algorithmicrequire}{\textbf{Input:}}
			\renewcommand{\algorithmicensure}{\textbf{Output:}}
			\Require A prefix-suffix compositions multiset $M$, the number of strings $h$, the length of the strings $n$, and the weight of the strings $\bar{w}$. 
			\Ensure  A set $\mathcal{F}$ that corresponds to all multisets of ``half strings'' and a collection $\{P(t/2)\mid t=0,\ldots, 2\bar{w}\}$ of disjoint subsets of $[2h]$.
			\LComment{Initialization:}
			\State $a_{0,0}\gets2h$
			\State For $(l,w)\in \llb n\rrb^2\setminus\{0,0\}$, let $a_{l,w}$ be the number of pairs $(l-w,w)$ in {$M$}.
			\ForAll{$(l,w)\in \llb n\rrb^2$}
			\State $b_{l,w} \gets \sum_{v=w}^{l-1} a_{l-1,v}-\sum_{v=w+1}^{l} a_{l,v}$ 
			\State  $c_{l, w} \gets \sum_{v=w}^{l} a_{l,v} - \sum_{v=w}^{l-1} a_{l-1,v}$
			\EndFor
			\State Construct a collection $\{P(t/2)\mid t=0,\ldots, 2\bar{w}\}$ of disjoint subsets of $[2h]$ such that $\bigcup_{t=0}^{2\bar{w}}P(t/2)=[2h]$ and that satisfies \eqref{eq:scan-init-even} or \eqref{eq:scan-init-odd} based on the parity of $n$. \label{alg:l:P}	
			\State $F\gets\emptyset$.
			\State $F(\lfloor n/2\rfloor,\bar{w})\gets \{P(\bar{w})\}$
			\ForAll{$w\in\llb\bar{w}-1\rrb$}
			\State $F(\lfloor n/2\rfloor,w)\gets \{P(w),P(w+1/2)\}$
			\EndFor
			\State $\mathcal{F}\gets \{F\}$
			\LComment{Find finer partitions:}
			\For{$l=\lfloor n/2\rfloor$ to $1$}\label{alg:l:bfs1}
			\For{$w=\bar{w}$ to $0$}
			\State $\mathcal{G}\gets\emptyset$
			\ForAll{$F\in\mathcal{F}$}
			\State $q\gets |F(l,w)|$
			\State ${X}\gets \{(x_1,\ldots,x_q)\mid \sum_{i=1}^{q}x_i=c_{l,w}\text{ where }x_i\in \llb |J_i| \rrb, J_i\in F(l,w), i\in [q]\}$\label{alg:l:X}
			\ForAll{$(x_1,\ldots,x_q)\in{X}$}
			\State Let $K_i$ be a subset of $J_i$ such that $|K_i|=x_i$ where $J_i\in F(l,w),i\in[q]$ and construct a function $G$ such that \label{alg:l:K}
			\begin{enumerate}
				\item $G(l',w')=F(l',w')$ for $(l',w')\in[l,\lfloor n/2\rfloor] \times \llb \bar{w}\rrb $;
				\item $G(l-1,w')=F(l-1,w')$ for $w'\in [w+1,\bar{w}]$;
				\item $G(l-1,w)=F(l-1,w)\cup\{J_i\setminus K_i\mid i\in [q]\}$;
				\item $G(l-1,w-1)=\{K_i\mid i\in [q]\}$ if $w\geq 1$.
			\end{enumerate}
			\State ${\mathcal{G}}\gets{\mathcal{G}}\cup\{G\}$
			\EndFor
			\EndFor
			\State $\mathcal{F}\gets\mathcal{G}$
			\EndFor
			\EndFor\label{alg:l:bfs2}
			\State Return $\mathcal{F}$ and $\{P(t/2)\mid t=0,\ldots, 2\bar{w}\}$
		\end{algorithmic} 
	\end{algorithm}
	
	\subsubsection*{Assembly stage} In the assembly stage, we construct CWFs for each $F\in \mathcal{F}$ by identifying pairs in $\{g_m\}$ whose median weights sum to $\bar{w}$. As mentioned in the scan stage, $F(0,0)$ is a partition of $[2h]$, and for each $J\in F(0,0)$, the component functions with labels in $J$ have the same graph over $\llb \lfloor n/2\rfloor\rrb$. As we would like to form pairs of component functions based on their median weights, it is helpful to group the elements of $F(0,0)$ based on the median weight. More precisely, for each possible median weight $w=0,1/2,1,\ldots,\bar{w}$, we construct a collection $R_w$ of sets for which the corresponding component functions have median weight $w$, given by 
	\begin{align*}
		R_{w}=\{J\mid J\in F(0,0),J\subset P(w) \}.
	\end{align*}
	Let $r_w=|R_w|$. Since different elements in $F(0,0)$ correspond to component functions with different graphs, $r_w$ is the number of distinct component functions that have median weight $w$. Moreover, each element $R_{w,i}\in R_w, i\in[r_w]$ is a set of labels for which the corresponding component functions have median weight $w$ and the same graph over $\llb \lfloor n/2 \rfloor\rrb$.
	
	By the rotational symmetry, two component functions capture the weight information of the same string only if their median weights sum to $\bar{w}$. Therefore, a label in $R_{w,i}\in R_{w}, i\in[r_w]$ must be paired with a label in $R_{\bar{w}-w,j}\in R_{\bar{w}-w}, j\in[r_{\bar{w}-w}]$ in order to combine two ``half strings'' into a single ``full string''.  Formally, the pairing of labels can described by a permutation $\sigma$ on $[2h]$ such that if $u\in R_{w,i}$ is paired with $v\in R_{\bar{w}-w,j}$ then $\sigma(u)=m,\sigma(v)=m^*$ for some $m\in[2h]$, i.e., $\sigma(u)^*=\sigma(v)$.

	To enumerate all possible ways of forming pairs that satisfy the median weight constraint, we need to consider different ways of pairing a component function of median weight $w\in \{0,1/2,1,\ldots,\bar{w}\}$ with a component function of median weight $\bar{w}-w$. 
	Let us first consider the case where $w\in\{0,1/2,1,\ldots,(\bar{w}-1)/2\}$.
	For $ i\in[r_w], j\in[r_{\bar{w}-w}]$, let $y_{w,i,j}$ be the number of labels chosen in $R_{w,i}$ to be paired with labels in $R_{\bar{w}-w,j}$. 
	Then $(y_{w,i,j})_{i\in[r_w],j\in[r_{\bar{w}-w}]}$ must satisfy
	\begin{align}
		&\sum_{i=1}^{r_{w}}y_{w,i,j}=|R_{\bar{w}-w,j}|,\quad j\in[r_{\bar{w}-w}],\label{eq:assembly-y1}\\
		&\sum_{j=1}^{r_{\bar{w}-w}}y_{w,i,j}=|R_{w,i}|,\quad i\in[r_w].\label{eq:assembly-y2}
	\end{align}
	For each solution $(y_{w,i,j})_{i\in[r_w],j\in[r_{\bar{w}-w}]}$ to \eqref{eq:assembly-y1} and \eqref{eq:assembly-y2}, we partition $R_{\bar{w}-w,j}$ into disjoint subsets $\{V_{w,i,j}\mid i\in[r_{w}]\}$ and $R_{w,i}$ into disjoint subsets $\{U_{w,i,j}\mid j\in[r_{\bar{w}-w}]\}$ such that $|V_{w,i,j}|=|U_{w,i,j}|=y_{w,i,j}$. The labels in $V_{w,i,j}$ are then paired with the labels in $U_{w,i,j}$.
	
	Consider the case where $w=\bar{w}/2$. In this case, the labels in $R_{\bar{w}/2}$ need to be paired with each other so we have a slightly different integer partition problem. For $ i\in[r_{\bar{w}/2}], j\in[r_{\bar{w}/2}]$, let $y_{\bar{w}/2,i,j}$ be the number of labels chosen in $R_{\bar{w}/2,i}\in R_{\bar{w}/2}$ to be paired with labels in $R_{\bar{w}/2,j}\in R_{\bar{w}/2}$. 
	Then $y_{\bar{w}/2,i,i}$ must be even for all $i$ and $y_{\bar{w}/2,i,j}=y_{\bar{w}/2,j,i}$ for all $i\neq j$. Moreover, $(y_{\bar{w}/2,i,j})_{i\in[r_{\bar{w}/2}],j\in[r_{\bar{w}/2}]}$ must satisfy that
	\begin{align}
		\sum_{j=1}^{r_{\bar{w}/2}}y_{\bar{w}/2,i,j}=|R_{\bar{w}/2,i}|,\quad i\in[r_{\bar{w}/2}].\label{eq:assembly-y3}
	\end{align}
	For each solution $(y_{\bar{w}/2,i,j})_{i\in[r_{\bar{w}/2}],j\in[r_{\bar{w}/2}]}$ to \eqref{eq:assembly-y3}, we partition $R_{\bar{w}/2,j}$ into disjoint subsets $\{U_{\bar{w}/2,i,j}\mid j\in[r_{\bar{w}/2}]\}$ such that $|U_{i,j}|=y_{\bar{w}/2,i,j}$. The labels in $U_{\bar{w}/2,j,i}$ are then paired with the labels in $U_{\bar{w}/2,i,j}$ for $i\neq j$, and the labels in $U_{\bar{w}/2,i,i}$ are organized into $y_{\bar{w}/2,i,i}/2$ pairs arbitrarily.
	
	Let $Y_w=\{(y_{w,i,j})_{i\in[r_w],j\in[r_{\bar{w}-w}]}\}$ be the set of all solutions to the integer partition problem associated with $w\in\{0,1/2,1,\ldots,\bar{w}/2\}$ and let $Y=Y_0\times Y_{1/2}\times Y_1\times \cdots Y_{\bar{w}/2}$. Then each $(y_{w,i,j})\in Y$ corresponds to a distinct way of forming pairs of the component functions such that the median weight constraint is satisfied. Specifically, since $R_{t/2,i},t\in\llb 2\bar{w}\rrb,i\in[r_{t/2}]$ are disjoint and 
	\[
	\bigcup_{t=0}^{2\bar{w}} \bigcup_{i=1}^{r_{t/2}} R_{t/2,i}=\bigcup_{J\in F(0,0)}J=[2h],
	\]
	one can easily define a permutation $\sigma$ on $[2h]$ such that if $u\in R_{w,i}$ is paired with $v\in R_{\bar{w}-w,j}$ then $\sigma(u)=m,\sigma(v)=m^*$ for some $m\in[2h]$. Furthermore, given $\sigma$, a CWF $f$ can be determined by combining the paired component functions, i.e., those with labels $u,v\in[2h]$ satisfying $\sigma(u)^*=\sigma(v)$. The corresponding multiset $H_f$ can then be found using Definition~\ref{def:2}. The details are presented in Algorithm~\ref{alg:all-2}.
	
	\begin{theorem}
		The output $\mathcal{H}$ of running Algorithm~\ref{alg:all-1} followed by Algorithm~\ref{alg:all-2} is the set of all multisets compatible with $M$ up to reversal.
	\end{theorem}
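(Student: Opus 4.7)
The plan is to prove the theorem by establishing two containments: soundness (every multiset produced by the two algorithms is compatible with $M$) and completeness (every multiset compatible with $M$ is equivalent, under the reversal relation, to one produced by the algorithms). Since the algorithms construct CWFs and then extract the corresponding multisets via Definition~\ref{def:2}, by Remark~\ref{re:M-f} it suffices to argue at the level of CWFs that $|A_f(l,w)| = a_{l,w}$ for every $(l,w) \in \llb n \rrb^2$.

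For soundness, I would first verify by downward induction on $l$ that for each $F \in \mathcal{F}$ returned by Algorithm~\ref{alg:all-1} and each $(l, w) \in \llb \lfloor n/2 \rfloor \rrb \times \llb \bar{w} \rrb$, the elements of $F(l,w)$ are pairwise disjoint subsets of $[2h]$ and their union has cardinality $a_{l,w}$. The base case at $l = \lfloor n/2\rfloor$ holds by the size constraints \eqref{eq:scan-init-even}/\eqref{eq:scan-init-odd} together with Proposition~\ref{prop:p2}\ref{prop:p2:ite:1}, and the inductive step follows from the recursion formulas for $b_{l,w},c_{l,w}$ in Remark~\ref{re:bc} together with \eqref{eq:sum1} and the way Line~\ref{alg:l:K} of Algorithm~\ref{alg:all-1} redistributes labels. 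Then, for each $(y_{w,i,j}) \in Y$ and each resulting permutation $\sigma$ in Algorithm~\ref{alg:all-2}, I would show that extending the half-string functions $g_m$ on $\llb \lfloor n/2 \rfloor \rrb$ to all of $\llb n \rrb$ using the rotational symmetry across pairs $(m, m^*)$ yields a well-defined constant-weight CWF $f$; Proposition~\ref{prop:p2}\ref{prop:p2:ite:co} then transfers the identity $|A_f(l,w)|=a_{l,w}$ from $l \leq \lfloor n/2 \rfloor$ to $l > \lfloor n/2\rfloor$.

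For completeness, I would take any multiset $H'$ compatible with $M$ and let $f'$ be an induced CWF. By Claim~\ref{cl:scan}, there exists $F \in \mathcal{F}$ such that $F(0,0)$ is in size-preserving bijection with the set of distinct length-$\lfloor n/2 \rfloor$ prefixes and suffixes appearing in $H'$. It then remains to produce an element of $Y$ that recovers the pairing induced by $f'$. I would read off $y_{w,i,j}$ directly from $f'$: for each $R_{w,i}$ (labels producing a specific half-string of median weight $w$) and $R_{\bar w - w, j}$, set $y_{w,i,j}$ to be the number of indices $m \in R_{w,i}$ such that under the (permuted) labeling induced by $F$, the partner index $m^*$ lies in $R_{\bar w - w, j}$. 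These counts satisfy \eqref{eq:assembly-y1}, \eqref{eq:assembly-y2}, and, for $w=\bar w/2$, the symmetry and even-diagonal requirements together with \eqref{eq:assembly-y3} (because $m \mapsto m^*$ is an involution on $R_{\bar w /2}$). Since Algorithm~\ref{alg:all-2} enumerates all such $(y_{w,i,j}) \in Y$, the tuple so constructed is considered, and the resulting CWF produces $[H']$.

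The main obstacle I expect is the last step above: showing that the multiset $H_f$ obtained from a tuple $(y_{w,i,j})$ actually lies in $[H']$ when the $y_{w,i,j}$ were read off from $f'$. The subtlety is that inside each $R_{w,i}$ the assignment of individual labels to partners, as well as the arbitrary internal pairing for the diagonal terms $y_{\bar w/2, i, i}$, is not uniquely determined by the tuple. However, every label inside a single $R_{w,i}$ corresponds to the same half-string, so any two admissible assignments can only differ by a permutation of identical half-strings or by interchanging a string with its reversal; by Definition~\ref{def:reversal}, all such variations lie in the same equivalence class. Combined with soundness, this yields $\mathcal{H}=\{[U]\mid M(U)=M\}$, as claimed.
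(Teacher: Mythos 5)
Your proposal follows essentially the same route as the paper: completeness rests on Claim~\ref{cl:scan} to locate the right $F\in\mathcal{F}$ and on the observation that the integer-partition enumeration in the assembly stage reaches the pairing induced by any given solution $f'$; your treatment of the ``main obstacle'' (that the label-level assignment inside each $R_{w,i}$, and the arbitrary internal pairing for the diagonal terms $y_{\bar w/2,i,i}$, is immaterial because all labels in a single $R_{w,i}$ carry the same half-string) is exactly the point the paper relies on. Your soundness argument is actually more explicit than the paper's, which only asserts that ``it is easy to verify $f$ is a CWF''; your downward induction showing $\bigl|\bigcup_{J\in F(l,w)}J\bigr|=a_{l,w}$ via \eqref{eq:sum1}, \eqref{eq:sum2} and the extension across $n/2$ via Proposition~\ref{prop:p2} is the right way to fill that in.

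One component of the claim is missing from your proposal: you never argue that $\mathcal{H}$ contains \emph{at most} one representative of each reversal class (nor that distinct choices of $F$ and of $(y_{w,i,j})$ cannot produce the same multiset twice). Since the theorem asserts that $\mathcal{H}$ is the set of all compatible multisets \emph{up to reversal}, you must rule out that both a multiset and one of its nontrivial reversals appear in the output. The paper handles this by noting that distinct $F\in\mathcal{F}$ yield distinct multisets of half-strings, that distinct admissible tuples for the same $F$ yield distinct multisets of component functions (each $R_{w,i}$ being a distinct half-string), and that a multiset and all of its reversals induce the same multiset of component functions, so each reversal class is reached through exactly one $(F,(y_{w,i,j}))$. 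This is a short argument, but it is part of the statement and should be added to your write-up; everything else in your plan goes through.
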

	\begin{proof}
		Let $\{\bm{t}_{j}\mid j\in[h]\}$ be a multiset of strings compatible with $M$ and define the multiset of length-$\lfloor n/2\rfloor$ prefixes and suffixes of $\bm{t}_{j},j\in[h]$ to be $S=\{\bm{s}_{2j-1}=\bm{t}_j[\lfloor n/2\rfloor],\bm{s}_{2j}=\cev{\bm{t}_{j}}[\lfloor n/2\rfloor]\mid j\in[h] \}$. Let $S'$ be the underlying set of $S$.
		By Claim~\ref{cl:scan}, there exists $F\in\mathcal{F}$ output by Algorithm~\ref{alg:all-1} such that there is a bijection $\pi$ between $F(0,0)$ and $S'$ that maps $J\in F(0,0)$ to $\bm{s}\in S'$ with $|J|=|\{m\mid \bm{s}_m=\bm{s},\bm{s}_m \in S,m\in [2h]\}|$. Denote the set $J$ mapped to $\bm{s}$ under $\pi$ by $J_{\bm{s}}$ and denote $\{m\mid \bm{s}_m=\bm{s},\bm{s}_m\in S,m\in [2h]\}$ by $I_{\bm{s}}$. Since $[2h] = \bigcup_{\bm{s}\in S'}J_{\bm{s}} = \bigcup_{\bm{s}\in S'}I_{\bm{s}}$, a permutation $\tilde{\sigma}$ on $[2h]$ can be further constructed such that it is a bijection between $J_{\bm{s}}$ and $I_{\bm{s}}$ for every $\bm{s}\in S'$.
		
		In Algorithm~\ref{alg:all-2}, given $F$, all possible permutations for pairing labels in $R_w$ and $R_{\bar{w}-w}$ for all $w\in\{0,1/2,1,\ldots,\bar{w}/2\}$ are found. In particular, there exists a permutation $\sigma$ such that for any $u,v\in [2h]$ satisfying $\tilde{\sigma}(u)^*=\tilde{\sigma}(v)$, it holds that $\sigma(u)^*=\sigma(v)$. On Line~\ref{alg:l:f1} to \ref{alg:l:f2} in Algorithm~\ref{alg:all-2}, a function $f$ is constructed according to $F,\sigma$, and it is easy to verify $f$ is a CWF. Thus, the multiset corresponding to $f$ constructed on Line~\ref{alg:l:Hf} in Algorithm~\ref{alg:all-2} satisfies $H_f=\{\bm{t}_{j}\mid j\in[h]\}$. It follows that any multiset compatible with $M$ is in the output $\mathcal{H}$.
		
		It remains to check the elements in $\mathcal{H}$ are all distinct. In fact, let us show the CWFs constructed in Algorithm~\ref{alg:all-2} as multisets $\{f_m\}$ are distinct. Let $F_1,F_2\in\mathcal{F}$ with $F_1\neq F_2$. Then $F_1,F_2$ correspond to distinct sets of ``half strings'', and any pairing permutations $\sigma_1,\sigma_2$ admissible for $F_1,F_2$, respectively, lead to distinct multisets of component functions. Furthermore, if $\sigma_1,\sigma_2$ are two different pairing permutations admissible for $F_1$, then the two multiset of component functions resulted from $\sigma_1,\sigma_2$ are also different since each element $R_{w,i}\in R_{w}$ corresponds to a distinct ``half string''. Therefore, all CWFs constructed in Algorithm~\ref{alg:all-2} are distinct as multisets. Moreover, since a multiset and its reversals induce the same multiset of component functions, if a multiset is in $\mathcal{H}$ then any of its reversals is not in $\mathcal{H}$.
		Hence, $\mathcal{H}$ is a set of all multisets compatible with $M$ up to reversal.
	\end{proof}
	\begin{algorithm}
		\caption{Assembly stage}\label{alg:all-2}
		\begin{algorithmic}[1]
			\renewcommand{\algorithmicrequire}{\textbf{Input:}}
			\renewcommand{\algorithmicensure}{\textbf{Output:}}
			\Require A set $\mathcal{F}$ that corresponds to all multisets of ``half strings'' and a collection $\{P(t/2)\mid t=0,\ldots, 2\bar{w}\}$ of disjoint subsets of $[2h]$ from Algorithm~\ref{alg:all-1}.
			\Ensure  A collection $\mathcal{H}$ of multisets of $h$ strings of length $n$ and weight $\bar{w}$.
			\State $\mathcal{H}\gets \emptyset$
			\ForAll{$F\in\mathcal{F}$}
			\LComment{Group labels by the median weight $w$:}
			\ForAll{$w\in\{0,1/2,1,\ldots,\bar{w}\}$}
			\State $R_{w}\gets \{J\mid J\in F(0,0),J\subset P(w) \}$
			\State $r_w\gets |R_w|$
			\EndFor
			\LComment{Enumerate all possible ways of forming pairs by constructing sets $\{Y_w\}$:}
			\ForAll{$w\in\{0,1/2,1,\ldots,\bar{w}/2\}$}
			\If{$w\neq \bar{w}/2$}
			\State Find the set $Y_w$ formed of $(y_{i,j})_{i\in[r_w],j\in[r_{\bar{w}-w}]}$ such that $y_{i,j}\in\mathbb{N}$ and that \eqref{eq:assembly-y1} and \eqref{eq:assembly-y2} are satisfied.
			\Else
			\State Find the set $Y_{\bar{w}/2}$ formed of $(y_{i,j})_{i,j\in[r_{\bar{w}/2}]}$ such that $y_{i,i}\in 2\mathbb{N}$, $y_{i,j}=y_{j,i}\in\mathbb{N}$ for $i\neq j$ and that \eqref{eq:assembly-y3} is satisfied.
			\EndIf
			\EndFor
			\ForAll{$(y_{w,i,j})\in{Y_0\times Y_{1/2}\times Y_{1}\times\cdots\times Y_{\bar{w}/2}}$}
			\LComment{Construct a permutation $\sigma$ on $[2h]$ for each way of forming pairs:}
			\State $\sigma \gets \emptyset$
			\State $m\gets 1$
			\ForAll{$w\in\{0,1/2,1,\ldots,\bar{w}/2\}$}
			\If{$w\neq \bar{w}/2$}
			\State For each $i\in[r_w]$, partition $R_{w,i}\in R_{w}$ into subsets $\{U_{w,i,j}\mid j\in[r_{\bar{w}-w}]\}$ where $|U_{w,i,j}|=y_{w,i,j}$.  
			\State For each $j\in [r_{\bar{w}-w}]$, partition $R_{\bar{w}-w,j}\in R_{\bar{w}-w}$ into subsets $\{V_{w,i,j}\mid i\in[r_{w}] \}$ where $|V_{w,i,j}|=y_{w,i,j}$.
			
			\ForAll{$i\in[r_w],j\in[r_{\bar{w}-w}]$}
			\LComment{Construct a bijection between $U_{w,i,j}\cup V_{w,i,j}$ and $\{m,m+1\ldots, m + 2y_{w,i,j}-1\}$.}
			\State $\sigma(U_{w,i,j})\gets \{m, m+2,\ldots, m + 2y_{w,i,j}-2\}$
			\State $\sigma(V_{w,i,j})\gets \{m+1, m+3,\ldots, m + 2y_{w,i,j}-1\}$
			\State $m\gets m+2y_{w,i,j}$
			\EndFor
			\Else
			\State For each $i\in[r_{\bar{w}/2}]$, partition $R_{\bar{w}/2,i}\in R_{\bar{w}/2}$ into subsets $\{U_{w,i,j}\mid j\in [r_{\bar{w}/2}]\}$ where $|U_{w,i,j}|=y_{w,i,j}$.
			\ForAll{$i\in[r_{\bar{w}/2}],j\in[r_{\bar{w}/2}]$}
			\If{$i=j$}
			\State $\sigma(U_{\bar{w}/2,i,i})\gets \{m,m+1,\ldots,m+y_{\bar{w}/2,i,i}-1\}$
			\State $m\gets m+y_{\bar{w}/2,i,i}$
			\Else
			\State $\sigma(U_{\bar{w}/2,i,j})\gets \{m,m+2,\ldots,m+2y_{\bar{w}/2,i,j}-2\}$
			\State $\sigma(U_{\bar{w}/2,j,i})\gets \{m+1,m+3,\ldots,m+2y_{\bar{w}/2,i,j}-1\}$
			\State $m\gets m+2y_{\bar{w}/2,i,j}$
			\EndIf
			\EndFor
			\EndIf
			\EndFor
			\LComment{Construct a CWF $f$ using $F$ and $\sigma$:}
			\State $f\gets \emptyset$\label{alg:l:f1}
			\For{$l=0$ to $\lfloor n/2\rfloor$}
			\For{$w=0$ to $\bar{w}$}
			\For{$u\in \bigcup_{J\in F(l,w)}J$}
			\State $f(l,\sigma(u))=w$
			\EndFor
			\EndFor
			\EndFor
			\For{$l=\lfloor n/2\rfloor+1$ to $n$}
			\For{$m=1$ to $2h$}
			\State $f(l,m)=\bar{w}-f(n-l,m^*)$
			\EndFor
			\EndFor\label{alg:l:f2}
			\LComment{Construct the multiset of strings corresponding to $f$:}
			\State $H_f \gets \{\bm{t}_j={t}_{j,1}\ldots{t}_{j,n} \mid t_{j,l} = f(l,2j-1)-f(l-1,2j-1)\text{ for all } l \in [n], j \in [h]\}$\label{alg:l:Hf}
			\State $\mathcal{H}\gets\mathcal{H}\cup H_f$.
			\EndFor
			\EndFor
			\State \Return $\mathcal{H}$ 
		\end{algorithmic} 
	\end{algorithm}

	\section{Concluding remarks}\label{sec:conclusion}
	We propose to use cumulative weight functions to describe the prefix-suffix compositions of a multiset of binary strings, and facilitate this description to derive necessary and sufficient conditions for unique reconstruction of multisets of strings {of the same weight} up to reversal. Moreover, two reconstruction algorithms are presented. One is an efficient algorithm that outputs one multiset of strings compatible with the given prefix-suffix compositions and can be used to assist in determining the unique reconstructibility of the given compositions. The other one is able to output all admissible multisets up to reversal that are compatible with the given compositions. 
	
	Many problems on reconstruction of multiple strings remain open. For example, can one lift the constant-weight assumption and characterize the conditions for unique reconstruction of multiple strings from prefix-suffix compositions? In addition, if the prefix-suffix compositions are erroneous, can one design low-redundancy encoding schemes for the strings such that they can be recovered efficiently?
	
	\bibliographystyle{IEEEtran}
	\bibliography{MultiStringReconstructionFromAffixes}
	
	\appendices
	\section{Proof of Lemma~\ref{le:half-diff-0}}\label{app:half-diff-0}
		Recall that $m\in A(\bar{w}/2)$ if and only if $m^*\in A(\bar{w}/2)$ so the size of $A(\bar{w}/2)$ must be even. The case where $|A(\bar w/2)|=0$ is vacuously true and the case where $|A(\bar w/2)|=2$ follows from Proposition~\ref{prop:f-dual}. Below we assume $|A(\bar w/2)|\geq 4$.
		
		Let $\tilde{m}\in A(\bar{w}/2)$. If $f_{\tilde{m}}\neq f_{\tilde{m}^*}$, then by Proposition~\ref{prop:f-dual} there are exactly two maximal intervals between $f_{\tilde{m}}$ and $f_{\tilde{m}^*}$. Let $m\in A(\bar{w}/2)\setminus\{\tilde{m},\tilde{m}^*\}$. By the conditions in Theorem~\ref{UniqueThm}, there is at most one maximal interval between $f_{m}$ and $f_{\tilde{m}}$. We claim that there is exactly one maximal interval between them. Indeed, if $f_{m}=f_{\tilde{m}}$, then there are two maximal intervals between $f_{m}$ and $f_{\tilde{m}^*}$, contradicting the conditions in Theorem~\ref{UniqueThm}. 
		Similarly, one can show there is exactly one maximal interval between $f_{m}$ and $f_{\tilde{m}^*}$. 
		
		Suppose $\mathcal{G}(f_m,[\lfloor n/2 \rfloor])\neq \mathcal{G}(f_{\tilde{m}},[\lfloor n/2 \rfloor])$ and $\mathcal{G}(f_m,[\lfloor n/2 \rfloor])\neq \mathcal{G}(f_{\tilde{m}^*},[\lfloor n/2 \rfloor])$. Since there is exactly one maximal interval between $f_m,f_{\tilde{m}}$ and exactly one maximal between $f_m,f_{\tilde{m}^*}$, it is necessary that $\mathcal{G}(f_m,[\lfloor n/2 \rfloor+1, n])= \mathcal{G}(f_{\tilde{m}},[\lfloor n/2 \rfloor+1, n]) =\mathcal{G}(f_{\tilde{m}^*},[\lfloor n/2 \rfloor+1, n])$. However, by Proposition~\ref{prop:f-dual}, there is a maximal interval between $f_{\tilde{m}},f_{\tilde{m}^*}$ that is contained in $[\lfloor n/2 \rfloor+1, n]$, leading to a contradiction. Therefore, $\mathcal{G}(f_m,[\lfloor n/2 \rfloor])= \mathcal{G}(f_{\tilde{m}},[\lfloor n/2 \rfloor])$ or $\mathcal{G}(f_m,[\lfloor n/2 \rfloor])= \mathcal{G}(f_{\tilde{m}^*},[\lfloor n/2 \rfloor])$ for all $m\in A(\bar{w}/2)\setminus\{\tilde{m},\tilde{m}^*\}$.
		
		Let $a,b\in A(\bar{w}/2)\setminus\{\tilde{m},\tilde{m}^*\}$. Suppose $\mathcal{G}(f_a,[\lfloor n/2 \rfloor])\neq \mathcal{G}(f_{b},[\lfloor n/2 \rfloor])$. Without loss of generality, we may assume further that $\mathcal{G}(f_a,[\lfloor n/2 \rfloor])= \mathcal{G}(f_{\tilde{m}},[\lfloor n/2 \rfloor])$ and $\mathcal{G}(f_b,[\lfloor n/2 \rfloor])= \mathcal{G}(f_{\tilde{m}^*},[\lfloor n/2 \rfloor])$. So there is a maximal interval contained $[\lfloor n/2\rfloor]$ between $f_a,f_{\tilde{m}^*}$. Then by the conditions in Theorem~\ref{UniqueThm} we have $\mathcal{G}(f_a,[\lfloor n/2 \rfloor+1, n])= \mathcal{G}(f_{\tilde{m}^*},[\lfloor n/2 \rfloor+1, n])$. Similarly, we also have $\mathcal{G}(f_b,[\lfloor n/2 \rfloor+1, n])= \mathcal{G}(f_{\tilde{m}},[\lfloor n/2 \rfloor+1, n])$. It follows that $a^*\neq b$ and there are two maximal intervals between $f_a,f_b$, contradicting the conditions in Theorem~\ref{UniqueThm}. Therefore, $\mathcal{G}(f_a,[\lfloor n/2 \rfloor])= \mathcal{G}(f_{b},[\lfloor n/2 \rfloor])$ for all $a,b\in A(\bar{w}/2)\setminus\{\tilde{m},\tilde{m}^*\}$.
		
		Lastly, consider the case where $f_{\tilde{m}} =f_{\tilde{m}^*}$. Let $m\in A(\bar{w}/2)\setminus\{\tilde{m},\tilde{m}^*\}$ be such that $f_m\neq f_{\tilde{m}}$.    
		Since $f_m\neq f_{\tilde{m}}$, by definition of $A(\bar{w}/2)$ and the conditions in Theorem~\ref{UniqueThm}, there exists exactly one maximal interval between $f_m,f_{\tilde{m}}$ that is either contained in $[\lfloor n/2 \rfloor]$ or $[\lfloor n/2\rfloor+1,n]$. Without loss of generality, assume $\mathcal{G}(f_m,[\lfloor n/2 \rfloor])\neq  \mathcal{G}(f_{\tilde{m}},[\lfloor n/2 \rfloor])$ and $\mathcal{G}(f_m,[\lfloor n/2 \rfloor+1, n])= \mathcal{G}(f_{\tilde{m}},[\lfloor n/2 \rfloor+1, n])$. By the 180-degree rotational symmetry of $f_m,f_{\tilde{m}}$ and $f_{m^*},f_{\tilde{m}^*}$, we have $\mathcal{G}(f_{m^*},[\lfloor n/2 \rfloor])=  \mathcal{G}(f_{\tilde{m}^*},[\lfloor n/2 \rfloor])$ and $\mathcal{G}(f_{m^*},[\lfloor n/2 \rfloor+1, n])\neq \mathcal{G}(f_{\tilde{m}^*},[\lfloor n/2 \rfloor+1, n])$. Since $f_{\tilde{m}}=f_{\tilde{m}^*}$, we have $\mathcal{G}(f_m,[\lfloor n/2 \rfloor])\neq  \mathcal{G}(f_{{m}^*},[\lfloor n/2 \rfloor])$ and $\mathcal{G}(f_m,[\lfloor n/2 \rfloor+1, n])\neq \mathcal{G}(f_{{m}^*},[\lfloor n/2 \rfloor+1, n])$. So there exist two maximal intervals between $f_m,f_{m^*}$. The remainder of the proof for this case follows similarly to the case where $f_{\tilde{m}}\neq f_{\tilde{m}^*}$.
	
	\section{Proof of Lemma~\ref{le:half-bar-w-0}}\label{app:half-bar-w-0}
		Consider the case where $f_{m},m\in A(\bar w /2)$ are all the same. 
		By Lemma~\ref{le:half-same} there are no branching points in $\mathcal{G}(f_m,[\lfloor n/2\rfloor ])$ for all $m \in A_f(\bar w /2)$. 
		
		Let $m \in A_f(\bar w /2)$. Note that for any $m' \in  A_{f'}(\bar w /2)$, we have $f'_{m'}(\lfloor n/2 \rfloor)=f_{m}(\lfloor n/2 \rfloor)$. So by Proposition~\ref{prop:noBranch-0} we have $\mathcal{G}(f_m,[0,\lfloor n/2\rfloor])=\mathcal{G}(f'_{m'},[0,\lfloor n/2\rfloor])$ for all $m\in A_f(\bar{w}/2)$ and all $m'\in A_{f'}(\bar{w}/2)$. Recall that $m\in A_f(\bar{w}/2)$ if and only if $m^*\in A_f(\bar{w}/2)$, and $\mathcal{G}(f_{m^*},[\lceil n/2\rceil, n])$ is the same as $\mathcal{G}(f_m,[0,\lfloor n/2\rfloor])$ when rotated $180$ degrees about $(n/2,\bar{w}/2)$. Moreover, the same holds for $m'\in A_{f'}(\bar{w}/2)$. It follows that $\mathcal{G}(f_m,[\lceil n/2\rceil, n])=\mathcal{G}(f'_{m'},[\lceil n/2\rceil, n])$ for all $m\in A_f(\bar{w}/2)$ and all $m'\in A_{f'}(\bar{w}/2)$. Thus, $f_m=f'_{m'}$ for all $m\in A_f(\bar{w}/2)$ and all $m'\in A_{f'}(\bar{w}/2)$. Since $f,f'$ are solutions to $M$, we have $|A_f(\bar{w}/2)|=|A_{f'}(\bar{w}/2)|$. Therefore, if $f_{m},m\in A(\bar w /2)$ are all the same, then ${\psi}_0(f) = {\psi}_0({f'})$.

		Consider the case where $f_{m},m\in A(\bar w /2)$ are not all the same. 
		By Lemma~\ref{le:half-diff-0}, there exists $m_1\in A(\bar w /2)$ such that there is a maximal interval $[l_1+1,l_2-1]\subset [n]$ between $f_{m_1},f_{m_1^*}$, where $l_2\leq \lfloor n/2 \rfloor$. Moreover by Lemma~\ref{le:half-diff}, $(l_2,f_{m_1}(l_2))$ is the only branching point in $\mathcal{G}(f_{m}, [\lfloor n/2\rfloor ])$ and there is no merging point in $\mathcal{G}(f_{m}, [l_2, \lfloor n/2\rfloor ])$ for all $m\in A(\bar{w}/2)$.
		
		Let $m \in A_f(\bar w /2)$. Note that $(l_2,f_{m_1}(l_2))$ is the branching point in $\mathcal{G}(f_m)$ such that $l_2\leq l$ for any branching point $(l,w)\in\mathcal{G}(f_m)$. Similarly to the proof of Lemma~\ref{le:half-bar-w}, let $(l_*,w_*)=(l_2,f_{m_1}(l_2))$ and
		\begin{align*}
			r=\begin{cases}
				0 & \text{if } w_* = {f_{m}(l_*-1)},\\
				1 & \text{if } w_* = {f_{m}(l_*-1)}+1.
			\end{cases}
		\end{align*}
		By definition of $r$, we have $m\in A_f(\bar{w}/2) \cap A_{f}(l_*, w_*) \cap A_{f}(l_*-1, w_*-r)$. 
		
		Let $m'\in A_{f'}(\bar{w}/2) \cap  A_{f'}(l_*, w_*) \cap A_{f'}(l_*-1, w_*-r)$. In the following we will show $\mathcal{G}(f'_{m'},[\lfloor n/2 \rfloor])=\mathcal{G}(f_{m},[\lfloor n/2 \rfloor])$. For notational convenience, let us define $\hat{f}_m=\mathcal{G}(f_{m},[\lfloor n/2 \rfloor])$ and $\hat{f}'_{m'}= \mathcal{G}(f'_{m'},[\lfloor n/2 \rfloor])$. Further, define $\hat{\psi}_0(f)=\{\hat{f}_m \mid m\in A_f(\bar{w}/2) \}$ and $\hat{\psi}_0(f')=\{\hat{f}'_{m'} \mid m'\in A_{f'}(\bar{w}/2) \}$.
		
		By definition of $l_*$, there is no branching point in $\mathcal{G}(f_{m},[ l_*-1])$. Since $f_{m}(l_*-1) = f'_{m'}(l_*-1)$, by Proposition~\ref{prop:noBranch-0} we have $f_{m}(l)=f'_{m'}(l)$ for any $l\in [0, l_*-1]$. 
		Suppose $f_{m}(l)\neq f'_{m'}(l)$ for some $l\in [l_*, \lfloor n/2 \rfloor ]$. Then by Proposition~\ref{prop:noBranch}, there is a merging point in $\mathcal{G}(f_{m},{[l_*, l-1]})$. But there is no merging point in $\mathcal{G}(f_{m}, [l_*=l_2, \lfloor n/2\rfloor ])$, which is a contradiction.			
		Thus, $f_m(l)= f'_{m'}(l)$ for all $l\in [l_*, \lfloor n/2\rfloor]$. It follows that $\hat{f}'_{m'}=\hat{f}_m$ for any $m'\in A_{f'}(\bar{w}/2)\cap A_{f'}(l_*, w_*) \cap A_{f'}(l_*-1, w_*-r)$.
		
		{In the following, we will show $A_{f'}(\bar{w}/2) \supset A_{f'}(l_*, w_*) \cap A_{f'}(l_*-1, w_*-r)$. Toward a contradiction, suppose that there exists $m_0 \in ([2h] \setminus A_{f'}(\bar{w}/2)) \cap A_{f'}(l_*, w_*) \cap A_{f'}(l_*-1, w_*-r)$. Then { $f'_{m_0}(l_*) = f_m{(l_*})$ and $f'_{m_0}(l_*-1) = f_{m}(l_*-1)$}. Note that $\med(f'_{m_0}) \neq \med(f_m)$. {Since $l_*\leq \lfloor n/2\rfloor$}, by Proposition~\ref{prop:noBranch}, there must be a merging point in $\mathcal{G}(f_m, [l_*, \lfloor n/2 \rfloor])$, which contradicts Lemma~\ref{le:half-diff}. {Therefore,} $A_{f'}(\bar{w}/2) \supset A_{f'}(l_*, w_*) \cap A_{f'}(l_*-1, w_*-r)$.}
		
		Note that for any $m'\in  A_{f'}(\bar{w}/2)\setminus( A_{f'}(l_*, w_*) \cap A_{f'}(l_*-1, w_*-r))$, we have $\hat{f}'_{m'}\neq \hat{f}_m$. So the multiplicity of $\hat{f}_{m}$ in $\hat{\psi}_0(f')$ is {$| A_{f'}(l_*, w_*) \cap A_{f'}(l_*-1, w_*-r)|$}. Taking $f'=f$, one can repeat the above arguments to show that $\hat{f}_m=\hat{f}_{\tilde{m}}$ for any $\tilde{m}\in A_{f}(\bar{w/2})\cap A_{f}(l_*, w_*) \cap A_{f}(l_*-1, w_*-r)$ and the multiplicity of $\hat{f}_m$ in $\hat{\psi}_0(f)$ is {$|  A_{f}(l_*, w_*) \cap A_{f}(l_*-1, w_*-r)|$}.
		
		Since $f,f'$ are solutions to $M$, { $| A_{f}(l_*, w_*) \cap A_{f}(l_*-1, w_*-r)| = | A_{f'}(l_*, w_*) \cap A_{f'}(l_*-1, w_*-r)|$}, i.e., the multiplicity of $\hat{f}_m$ in $\hat{\psi}_0(f)$ equals the multiplicity of $\hat{f}_{m}$ in $\hat{\psi}_0(f')$. Furthermore, this holds for distinct $\hat{f}_m\in\hat{\psi}_0(f)$. Since $|A_f(\bar{w}/2)|=|A_{f'}(\bar{w}/2)|$, i.e., $|\hat{\psi}_0(f)|=|\hat{\psi}_0(f')|$, we obtain $\hat{\psi}_0(f) = \hat{\psi}_0(f')$.
		
		Lastly, by Lemma~\ref{le:half-diff}, it holds that $\hat{f}_a=\hat{f}_{m_1}$ for all $a\in A_f(\bar{w}/2)\setminus\{m_1,m_1^*\}$ or $\hat{f}_a=\hat{f}_{m_1^*}$ for all $a\in A_f(\bar{w}/2)\setminus\{m_1,m_1^*\}$. Thus, the multiplicity of $\hat{f}_m$ in $\hat{\psi}_0(f)$ is either $1$ or $|A_f(\bar{w}/2)|-1$. Without loss of generality, assume the multiplicity of $\hat{f}_m$ in $\hat{\psi}_0(f)$ is $1$. Then the multiplicity of $\hat{f}_{m^*}$ in $\hat{\psi}_0(f)$ is $|A_f(\bar{w}/2)|-1$. Moreover, the multiplicity of $\hat{f}'_m$ in $\hat{\psi}_0(f')$ is $1$ and the multiplicity of $\hat{f}'_{(m')^*}$ in $\hat{\psi}_0(f')$ is $|A_{f'}(\bar{w}/2)|-1=|A_{f}(\bar{w}/2)|-1$. Since $f_m$ (resp., $f'_{m'}$) is the same as $f_{m^*}$ (resp., $f'_{(m')^*}$)  when rotated $180$ degrees about $(n/2,\bar{w}/2)$, we have $f_m=f'_{m'}, f_{m^*}=f'_{(m')^*}$ and the multiplicity of $f_m$ (resp., $f'_{m'}$) equals one in $\psi_0(f)$ (resp., $\psi_0(f')$). Now for all $b\in A_f(\bar{w}/2)\setminus \{m,m^*\}$ and all $b'\in A_{f'}(\bar{w}/2)\setminus \{m',(m')^*\}$, we have $f_b =f_{b'}$ since $\hat{f}_b=\hat{f}'_{b'}$. Hence, we conclude $\psi_0(f) = \psi_0(f')$.
\end{document}